\newtheorem{Theorem}{Theorem}
\newtheorem{Example}{Example}[section]
\newtheorem{Remark}{Remark}
\newtheorem{Notation}{Notation}
\newtheorem{Definition}{Definition}[section]
\newtheorem{Lemma}{Lemma}
\newtheorem{Proposition}{Proposition}
\newtheorem{Corollary}{Corollary}
\newenvironment{Proof}[1][Proof]{\noindent {\bf Proof.}}{\qed}
\begin{document}

\title{Contraction of Cyclic Codes Over Finite Chain Rings}

\subjclass[2010]{16P10; 65F30; 94B15.}

 \keywords{ Linear Codes; Constacyclic Codes; Finite Chain Rings; Trace Map.}

\author{Alexandre Fotue Tabue}
\address{Department of mathematics, Faculty of Sciences,  University of Yaoundé 1, Cameroon} \email{alexfotue@gmail.com}
\author{Christophe Mouaha}
\address{Department of mathematics,  Higher Teachers Training College of Yaoundé, University of Yaoundé 1, Cameroon} \email{cmouaha@yahoo.fr}

\begin{abstract}
Let $\texttt{R}$ be a commutative finite chain ring of invariants
$(q,s),$ and $\Gamma(\texttt{R})$ the Teichmüller's set of
$\texttt{R}.$ In this paper, the trace representation cyclic
$\texttt{R}$-linear codes of length $\ell,$ is presented, when
$\texttt{gcd}(\ell,q)=1.$ We will show that the contractions of
some cyclic $\texttt{R}$-linear codes of length $u\ell$ are
$\gamma$-constacyclic $\texttt{R}$-linear codes of length $\ell,$
where $\gamma\in\Gamma(\texttt{R})\backslash\{0_\texttt{R}\}$ and
the multiplicative order of $\gamma$ is $u.$
\end{abstract}

\maketitle

\section{Introduction}

 Let $\texttt{R}$ be finite chain ring with invariant $(q,s),$
 $\pi:\texttt{R}\rightarrow\mathbb{F}_q$ be the natural
 ring epimorphism, and $\ell$ a positive integer such that $\texttt{gcd}(q,\ell)=1.$ Let
$\texttt{R}^\times$ be the group of units of $\texttt{R},$ and
$\gamma\in\texttt{R}^\times.$ An $\texttt{R}$-linear code
$\mathcal{C}$ of length $\ell$ is \index{constacyclic
code}\emph{$\gamma$-constacyclic} if
$\tau_\gamma(\mathcal{C})=\mathcal{C},$ where
 $\tau_\gamma :  \texttt{R}^\ell \rightarrow
\texttt{R}^{\ell},$ is the $\gamma$-constashift operator, defined
by
$\tau_\gamma(\textbf{c}_0,\textbf{c}_1,\cdots,\textbf{c}_{\ell-1})=(\gamma
\textbf{c}_{\ell-1},\textbf{c}_0,\cdots,\textbf{c}_{\ell-2}).$
Especially, cyclic and negacyclic linear codes correspond to
$\gamma = 1_\texttt{R}$ and $\gamma=-1_\texttt{R},$ respectively
(see \cite{DP04}).  The residue code of $\texttt{R}$-linear code
$\mathcal{C}$ is the $\mathbb{F}_q$-linear code
$\pi(\mathcal{C}):=\left\{(\pi(\textbf{c}_0),\pi(\textbf{c}_1),\cdots,\pi(\textbf{c}_{\ell-1}))\;:\;(\textbf{c}_0,\textbf{c}_1,\cdots,\textbf{c}_{\ell-1})\in\mathcal{C}\right\}.$
The equality
$\pi(\tau_\gamma(\mathcal{C}))=\tau_{\pi(\gamma)}\left(\pi(\mathcal{C})\right),$
enables to see that the residue code of any $\gamma$-constacyclic
$\texttt{R}$-linear code, is an $\pi(\gamma)$-constacyclic
$\mathbb{F}_q$-linear code. In the literature
\cite{Cao13,KZT12,Tap03,Wol99,ZK10}, the class of
$\gamma$-constacyclic $\texttt{R}$-linear codes, which are
studied, have the following property $\gamma\in
1_\texttt{R}+\texttt{R}\theta.$

 In this paper, on the one hand, we will describe each $\gamma$-constacyclic $\texttt{R}$-linear code of length $\ell,$ as
contraction of a cyclic $\texttt{R}$-linear code of length
$u\ell,$ and on the other hand, we will investigate on the
structure of $\gamma$-constacyclic $\texttt{R}$-linear codes,
where $\gamma\in\Gamma(\texttt{R})\setminus\{0_\texttt{R}\}.$

The present paper is organized as follows. In Sect. \ref{Sect:2},
we present results which will be used in the following sections.
Sect. \ref{Sect:3} studies the subring subcode and trace code of a
linear codes over finite chain rings. In Sect.\ref{Sect:4}, the
trace-description of cyclic linear codes over finite chain rings
is presented. For any $\gamma\in\Gamma(\texttt{R}),$ we proceed to
investigate on the structural properties of $\gamma$-constacyclic
codes of arbitrary length $\ell,$ in Sect. \ref{Sect:5}.

\section{Background on finite chain rings}\label{Sect:2}

  Throughout
of this section, $\texttt{R}$ is a commutative ring with identity
and $\texttt{J}(\texttt{R})$ denoted the Jacobson radical of
$\texttt{R},$ and $\texttt{R}^\times$ denotes the multiplicative
group of units of $\texttt{R}.$ The definitions and results on the
finite chain rings are extracted in monographs
\cite{McD74,Nechaev}.

\begin{Definition} We say that $\texttt{R}$ is a \emph{finite chain ring of invariants $(q,s),$}
if:
\begin{enumerate}
    \item $\texttt{R}$ is local principal ideal ring;
    \item $\texttt{R}/\texttt{J}(\texttt{R})\simeq\mathbb{F}_q$ and
$\texttt{R}\supsetneq \texttt{R}\theta \supsetneq \cdots\supsetneq
\texttt{R}\theta^{s-1} \supsetneq \texttt{R}\theta^s =\{0\},$
where $\theta$ is a generator of $\texttt{J}(\texttt{R}).$
\end{enumerate}
\end{Definition}

The map $\pi : \texttt{R}\rightarrow \mathbb{F}_q$ denotes the
canonical projection.

\begin{Lemma}\label{pr-cr} Let $\texttt{R}$ be a finite chain ring of invariants
$(q,s),$ and $\theta$ be a generator of $\texttt{J}(\texttt{R}).$
Then
\begin{enumerate}
    \item $\texttt{R}^\times=\texttt{R}\setminus\texttt{J}(\texttt{R}),$ and the ideals of $\texttt{R}$ are precisely $\texttt{J}(\texttt{R})^t=\texttt{R}\theta^t,$ where $t\in\{0,1,\cdots,s\};$
    \item  $|\texttt{R}^\times|=q^{(s-1)}(q-1)$ and $|\texttt{J}(\texttt{R})^t|=q^{s-t},$ for every $t\in\{0,1,\cdots,s\}.$
\end{enumerate}
\end{Lemma}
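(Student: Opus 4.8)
The plan is to deduce both statements directly from the two defining properties of a finite chain ring: locality and the principal/chain structure of its ideals. For part (1), I would first recall the standard fact that in a commutative local ring the group of units is exactly the complement of the unique maximal ideal; since $\texttt{R}$ is local, that maximal ideal is $\texttt{J}(\texttt{R})$, so $\texttt{R}^\times=\texttt{R}\setminus\texttt{J}(\texttt{R})$. For the description of the ideals, let $I$ be a nonzero ideal; as $\texttt{R}$ is a principal ideal ring, $I=\texttt{R}a$ for some $a\neq 0$, and since $\texttt{R}\theta^s=\{0\}$ there is a largest $t\in\{0,1,\cdots,s-1\}$ with $a\in\texttt{R}\theta^t$. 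Writing $a=b\theta^t$, maximality of $t$ forces $b\notin\texttt{R}\theta=\texttt{J}(\texttt{R})$, hence $b\in\texttt{R}^\times$ by the first claim, so $I=\texttt{R}a=\texttt{R}\theta^t=\texttt{J}(\texttt{R})^t$. Together with $\{0\}=\texttt{R}\theta^s$, this exhibits the ideals as exactly the $\texttt{R}\theta^t$ for $t\in\{0,1,\cdots,s\}$.

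For part (2), the crux is to show $|\texttt{R}\theta^t/\texttt{R}\theta^{t+1}|=q$ for each $t<s$. I would consider the surjective $\texttt{R}$-module homomorphism $\texttt{R}\rightarrow\texttt{R}\theta^t/\texttt{R}\theta^{t+1}$ sending $r\mapsto r\theta^t+\texttt{R}\theta^{t+1}$. Its kernel is an ideal of $\texttt{R}$ containing $\texttt{R}\theta=\texttt{J}(\texttt{R})$, and it is a proper ideal because otherwise $\texttt{R}\theta^t\subseteq\texttt{R}\theta^{t+1}$, contradicting the strict inclusion in the definition; since $\texttt{J}(\texttt{R})$ is maximal, the kernel equals $\texttt{J}(\texttt{R})$, so $\texttt{R}\theta^t/\texttt{R}\theta^{t+1}\simeq\texttt{R}/\texttt{J}(\texttt{R})\simeq\mathbb{F}_q$ has $q$ elements. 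Telescoping the chain $\texttt{R}\supsetneq\texttt{R}\theta\supsetneq\cdots\supsetneq\texttt{R}\theta^s=\{0\}$ upward from $|\texttt{R}\theta^s|=1$, multiplying by $q$ at each of the $s-t$ remaining steps, gives $|\texttt{J}(\texttt{R})^t|=|\texttt{R}\theta^t|=q^{s-t}$; in particular $|\texttt{R}|=q^s$, and then part (1) yields $|\texttt{R}^\times|=|\texttt{R}|-|\texttt{J}(\texttt{R})|=q^s-q^{s-1}=q^{s-1}(q-1)$.

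The only point that genuinely uses the hypotheses rather than routine bookkeeping is the kernel computation in part (2): one must invoke the strictness of each inclusion $\texttt{R}\theta^{t+1}\subsetneq\texttt{R}\theta^t$ from the definition to rule out the degenerate case in which some successive quotient collapses to zero. Once that is established, everything else is counting along the finite chain together with the unit/radical dichotomy proved in part (1).
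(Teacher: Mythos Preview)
Your argument is correct and is the standard proof of these facts. The paper, however, does not prove this lemma at all: it states the result as background extracted from the monographs \cite{McD74,Nechaev}, so there is nothing in the paper to compare against. Your write-up would serve as a self-contained justification; the only minor comment is that the kernel in part~(2) containing $\texttt{R}\theta$ follows immediately because $r\in\texttt{R}\theta$ implies $r\theta^t\in\texttt{R}\theta^{t+1}$, which you implicitly use but could state explicitly.
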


\begin{Theorem}\label{thm-cr} Let $\texttt{R}$ be a finite chain ring of invariants
$(q,s),$ and $\theta$ be a generator of $\texttt{J}(\texttt{R}).$
Then
\begin{enumerate}
    \item $\texttt{R}^\times=\Gamma(\texttt{R})^*\cdot(1+\texttt{R}\theta),$ and $\Gamma(\texttt{R})^*\simeq\mathbb{F}_q\setminus\{0\}$ (as multiplicative group) where $\Gamma(\texttt{R})^*:=\{ \textbf{b}\in \texttt{R}\,:\, \textbf{b}\neq 0,\, \textbf{b}^{q}=\textbf{b} \};$
    \item $\Gamma(\texttt{R})^*$ is a cyclic subgroup of $\texttt{R}^\times,$ of order $q-1$ and $|1_\texttt{R}+\texttt{R}\theta|=q^{s-1};$
    \item for every element $\textbf{a}\in\texttt{R},$ there exists a unique $(\textbf{a}_0,\textbf{a}_1,\cdots,\textbf{a}_{s-1})\in\Gamma(\texttt{R})^s,$ such that $\textbf{a}=\textbf{a}_0+\textbf{a}_1\theta+\cdots+\textbf{a}_{s-1}\theta^{s-1}.$
\end{enumerate}
\end{Theorem}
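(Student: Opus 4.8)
The plan is to exploit that $(\texttt{R}^\times,\cdot)$ is a finite abelian group whose order factors into coprime parts. First I would note that, since $\texttt{R}/\texttt{J}(\texttt{R})\simeq\mathbb{F}_q$ is a field, $q=p^r$ for some prime $p$; hence by Lemma~\ref{pr-cr} one has $|\texttt{R}^\times|=q^{s-1}(q-1)$ with $\texttt{gcd}(q^{s-1},q-1)=1$. Then I would check that $H:=1_\texttt{R}+\texttt{R}\theta$ is a subgroup of $\texttt{R}^\times$: every element $1_\texttt{R}+\textbf{x}$ with $\textbf{x}\in\texttt{J}(\texttt{R})$ lies outside $\texttt{J}(\texttt{R})$, hence in $\texttt{R}^\times$ by Lemma~\ref{pr-cr}(1); $H$ is visibly closed under multiplication; and $H$ is finite, so it is a subgroup. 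The translation $\textbf{x}\mapsto 1_\texttt{R}+\textbf{x}$ identifies $H$ with $\texttt{J}(\texttt{R})=\texttt{R}\theta$, so $|H|=q^{s-1}$ by Lemma~\ref{pr-cr}(2); this already gives $|1_\texttt{R}+\texttt{R}\theta|=q^{s-1}$, part of (2).

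Next I would invoke the decomposition of the finite abelian group $\texttt{R}^\times$ into its $p$-part and $p'$-part: writing $U:=\{\textbf{x}\in\texttt{R}^\times:\textbf{x}^{q^{s-1}}=1_\texttt{R}\}$ and $V:=\{\textbf{x}\in\texttt{R}^\times:\textbf{x}^{q-1}=1_\texttt{R}\}$, one has $\texttt{R}^\times=U\cdot V$ as an internal direct product, with $|U|=q^{s-1}$ and $|V|=q-1$. Since $H\subseteq U$ and $|H|=|U|$, we get $U=H$. To identify $V$, observe that for $\textbf{x}\in\texttt{R}^\times$ we have $\textbf{x}^{q-1}=1_\texttt{R}\Leftrightarrow\textbf{x}^q=\textbf{x}$, while conversely if $\textbf{b}\neq 0_\texttt{R}$ satisfies $\textbf{b}^q=\textbf{b}$ then $\textbf{b}^{q^k}=\textbf{b}$ for all $k\geq 1$, so $\textbf{b}\notin\texttt{J}(\texttt{R})$ (otherwise $\textbf{b}^{q^k}=0_\texttt{R}$ as soon as $q^k\geq s$), i.e. $\textbf{b}\in\texttt{R}^\times$ and $\textbf{b}\in V$; thus $V=\Gamma(\texttt{R})^*$. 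This yields (1): $\texttt{R}^\times=\Gamma(\texttt{R})^*\cdot(1_\texttt{R}+\texttt{R}\theta)$, in fact as an internal direct product. The restriction $\pi|_{\Gamma(\texttt{R})^*}:\Gamma(\texttt{R})^*\to\mathbb{F}_q\setminus\{0\}$ is then a group homomorphism whose kernel $\Gamma(\texttt{R})^*\cap(1_\texttt{R}+\texttt{R}\theta)$ is trivial, these subgroups having coprime orders; being injective between sets of equal cardinality it is an isomorphism, so $\Gamma(\texttt{R})^*\simeq\mathbb{F}_q\setminus\{0\}$ is cyclic of order $q-1$, completing (1) and (2).

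Finally, for (3), I would extend $\pi$ to a bijection $\Gamma(\texttt{R})\to\mathbb{F}_q$ (with $\Gamma(\texttt{R})=\Gamma(\texttt{R})^*\cup\{0_\texttt{R}\}$ and $0_\texttt{R}\mapsto 0$). Given $\textbf{a}\in\texttt{R}$, pick the unique $\textbf{a}_0\in\Gamma(\texttt{R})$ with $\pi(\textbf{a}_0)=\pi(\textbf{a})$; then $\textbf{a}-\textbf{a}_0\in\texttt{J}(\texttt{R})=\texttt{R}\theta$, write $\textbf{a}-\textbf{a}_0=\theta\textbf{a}'$, and repeat on $\textbf{a}'$; after peeling off one power of $\theta$ at each of $s$ steps the remainder lies in $\texttt{R}\theta^s=\{0_\texttt{R}\}$, giving $\textbf{a}=\textbf{a}_0+\textbf{a}_1\theta+\cdots+\textbf{a}_{s-1}\theta^{s-1}$ with all $\textbf{a}_i\in\Gamma(\texttt{R})$. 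Uniqueness then comes for free by counting: the map $\Gamma(\texttt{R})^s\to\texttt{R}$, $(\textbf{a}_0,\dots,\textbf{a}_{s-1})\mapsto\sum_{i=0}^{s-1}\textbf{a}_i\theta^i$, is a surjection between sets of the same cardinality $q^s=|\texttt{R}|$, hence a bijection. The step I expect to be the main obstacle is pinning down that the $p'$-part of $\texttt{R}^\times$ is exactly $\Gamma(\texttt{R})^*$ — that is, the nilpotence argument that a nonzero element fixed by $\textbf{x}\mapsto\textbf{x}^q$ cannot lie in $\texttt{J}(\texttt{R})$; once the group-theoretic splitting is in place, the rest is bookkeeping and a cardinality count.
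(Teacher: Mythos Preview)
Your argument is sound. The paper does not supply its own proof of this theorem: Section~\ref{Sect:2} explicitly states that the definitions and results on finite chain rings are extracted from the monographs \cite{McD74,Nechaev}, and Theorem~\ref{thm-cr} is simply quoted as background. So there is nothing to compare against directly.

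That said, your proof stands on its own. The key idea --- using the coprime factorisation $|\texttt{R}^\times|=q^{s-1}(q-1)$ to split the abelian group $\texttt{R}^\times$ into its $p$-primary and $p'$-primary components, then identifying these with $1_\texttt{R}+\texttt{R}\theta$ and $\Gamma(\texttt{R})^*$ respectively --- is exactly the standard route taken in the references (Hensel-type lifting of the Teichm\"uller representatives is the other common approach, but yours is cleaner here since it avoids any iterative limit argument). The nilpotence step you flag as the potential obstacle is fine: a nonzero $\textbf{b}\in\texttt{J}(\texttt{R})$ satisfies $\textbf{b}^s=0_\texttt{R}$, which is incompatible with $\textbf{b}^{q^k}=\textbf{b}$ for all $k$. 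One small remark on part~(3): when you write $\textbf{a}-\textbf{a}_0=\theta\textbf{a}'$, the element $\textbf{a}'$ is not uniquely determined by $\textbf{a}$ (since $\theta$ is a zero-divisor), so the recursive construction could a priori produce different expansions; but your cardinality argument at the end closes this gap cleanly, so no harm is done.
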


\begin{Definition} Let $\texttt{R}$ be a finite chain ring of invariants
$(q,s),$ and $\theta$ be a generator of $\texttt{J}(\texttt{R}).$
The set $\Gamma(\texttt{R})=\Gamma(\texttt{R})^*\cup\{0\}$ is
called the \emph{Teichmüller set} of $\texttt{R}.$
\end{Definition}

We say that the ring $\texttt{S}$ is an \emph{extension} of
$\texttt{R}$ and we denote it by $\texttt{S}|\texttt{R}$ if
$\texttt{R}$ is a subring of $\texttt{S}$ and $1_\texttt{R} =
1_\texttt{S}.$ We denote by
$\texttt{rank}_\texttt{R}(\texttt{S}),$ the rank of
$\texttt{R}$-module $\texttt{S}.$ We denote by
$\texttt{Aut}_\texttt{R}(\texttt{S}),$ the group of ring
automorphisms of $\texttt{S}$ which fix the elements of
$\texttt{R}.$

\begin{Definition} Let $\texttt{R}$ be a finite chain ring of invariants
$(q,s).$ We say that the finite chain ring $\texttt{S}$ is the
\emph{Galois extension} of $\texttt{R}$ of degree $m,$ if
\begin{enumerate}
    \item $\texttt{S}|\texttt{R}$ is unramified, i.e. $\texttt{J}(\texttt{S})=\texttt{J}(\texttt{R})\texttt{S};$
    \item $\texttt{S}|\texttt{R}$ is normal, i.e. $\texttt{R}:=\{\textbf{a}\in\texttt{S}\;:\;\varrho(\textbf{a})=\textbf{a}\text{ for all }\varrho\in\texttt{Aut}_\texttt{R}(\texttt{S})\}.$
\end{enumerate}
\end{Definition}

\begin{Proposition} Let $\texttt{R}$ be a finite chain ring of invariants $(q,s).$
Let $\texttt{S}$ is the Galois extension of $\texttt{R}$ of degree
$m.$ Then
\begin{enumerate}
    \item $\texttt{S}$ is a free $\texttt{R}$-module of rank $m;$
    \item $\texttt{Aut}_\texttt{R}(\texttt{S})$ is cyclic of order $m;$
    \item $\texttt{S}=\texttt{R}[\xi]$  where $\xi$ is a generator of $\Gamma(\texttt{S}).$
\end{enumerate}
\end{Proposition}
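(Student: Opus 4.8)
The plan is to first record that $\texttt{S}$ is itself a finite chain ring, of invariants $(q^m,s)$, where $\mathbb{F}_{q^m}$ denotes its residue field $\texttt{S}/\texttt{J}(\texttt{S})$ and $m=[\texttt{S}/\texttt{J}(\texttt{S}):\texttt{R}/\texttt{J}(\texttt{R})]$ is the degree of the extension. Indeed, unramifiedness gives $\texttt{J}(\texttt{S})=\texttt{J}(\texttt{R})\texttt{S}=\texttt{S}\theta$ for a generator $\theta$ of $\texttt{J}(\texttt{R})$, so $\texttt{J}(\texttt{S})^s=\texttt{S}\theta^s=\{0\}$ while $\theta^{s-1}=\theta^{s-1}1_\texttt{S}\neq 0_\texttt{S}$; thus the nilpotency index of $\texttt{S}$ is exactly $s$ and $\texttt{S}/\texttt{J}(\texttt{S})$ is a finite field containing $\mathbb{F}_q$. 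From here Lemma \ref{pr-cr} and Theorem \ref{thm-cr} are available for $\texttt{S}$. For \emph{(1)}, I would lift an $\mathbb{F}_q$-basis of $\texttt{S}/\texttt{J}(\texttt{S})\simeq\mathbb{F}_{q^m}$ to elements $e_1,\dots,e_m\in\texttt{S}$; since $\theta$ is nilpotent, Nakayama's lemma (for the nilpotent ideal $\texttt{S}\theta$) gives $\texttt{S}=\texttt{R}e_1+\cdots+\texttt{R}e_m$, i.e.\ the $\texttt{R}$-linear map $\texttt{R}^m\to\texttt{S}$, $(a_j)_j\mapsto\sum_j a_je_j$, is onto. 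By Lemma \ref{pr-cr}(2), $|\texttt{R}|=q^s$ and $|\texttt{S}|=q^{ms}$, so this surjection between finite sets of equal cardinality is a bijection, and $e_1,\dots,e_m$ is an $\texttt{R}$-basis of $\texttt{S}$.

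For \emph{(3)}: Theorem \ref{thm-cr}(2) applied to $\texttt{S}$ makes $\Gamma(\texttt{S})^*$ cyclic of order $q^m-1$; fix a generator $\xi$. Then $\Gamma(\texttt{S})=\{0_\texttt{S}\}\cup\{\xi,\xi^2,\dots,\xi^{q^m-1}\}\subseteq\texttt{R}[\xi]$ and $\theta\in\texttt{R}\subseteq\texttt{R}[\xi]$, so writing an arbitrary $\textbf{a}\in\texttt{S}$ as $\textbf{a}=\textbf{a}_0+\textbf{a}_1\theta+\cdots+\textbf{a}_{s-1}\theta^{s-1}$ with $\textbf{a}_j\in\Gamma(\texttt{S})$ (Theorem \ref{thm-cr}(3) for $\texttt{S}$) shows $\textbf{a}\in\texttt{R}[\xi]$; hence $\texttt{S}=\texttt{R}[\xi]$. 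Since $\bar\xi$ generates $\mathbb{F}_{q^m}$ over $\mathbb{F}_q$, the classes of $1,\xi,\dots,\xi^{m-1}$ are an $\mathbb{F}_q$-basis of $\texttt{S}/\texttt{J}(\texttt{S})$, and the argument of \emph{(1)} then makes $\{1,\xi,\dots,\xi^{m-1}\}$ an $\texttt{R}$-basis of $\texttt{S}$; consequently $\texttt{S}\simeq\texttt{R}[x]/(h)$, where $h\in\texttt{R}[x]$ is the monic degree-$m$ characteristic polynomial of multiplication by $\xi$ on $\texttt{S}$ and $\bar h$ is the minimal polynomial of $\bar\xi$ over $\mathbb{F}_q$ (hence irreducible and separable).

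For \emph{(2)}: since $\texttt{S}=\texttt{R}[\xi]$, every $\varrho\in\texttt{Aut}_\texttt{R}(\texttt{S})$ is determined by $\varrho(\xi)$, and from $\varrho(\xi)^{q^m}=\varrho(\xi^{q^m})=\varrho(\xi)$ and the injectivity of $\varrho$ one sees $\varrho(\xi)$ is again a generator of $\Gamma(\texttt{S})^*$ and a root of $h$. I would then pin the roots of $h$ in $\texttt{S}$ down to exactly $\xi,\xi^{q},\dots,\xi^{q^{m-1}}$: over the complete local ring $\texttt{S}$ the separable polynomial $x^{q^m-1}-1$ (its derivative is a unit, as $q^m-1\in\texttt{S}^\times$) splits by Hensel's lemma as $\prod_{\gamma\in\Gamma(\texttt{S})^*}(x-\gamma)$ with pairwise comaximal factors, and $h$ is the basic-irreducible monic factor $\prod_{i=0}^{m-1}(x-\xi^{q^i})$ of it. This already bounds $|\texttt{Aut}_\texttt{R}(\texttt{S})|\le m$. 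For the reverse bound, each $\xi^{q^i}$ is again a generator of $\Gamma(\texttt{S})^*$ (as $q$ and $q^m-1$ are coprime), so $\texttt{R}[\xi^{q^i}]=\texttt{S}$ by the argument of \emph{(3)}; hence $x\mapsto\xi^{q}$ extends to an $\texttt{R}$-algebra endomorphism $\varphi$ of $\texttt{S}\simeq\texttt{R}[x]/(h)$ that is onto, hence an automorphism. Since $\varphi^i(\xi)=\xi^{q^i}$ and $\varphi^m(\xi)=\xi^{q^m}=\xi$, the maps $\varphi^0,\dots,\varphi^{m-1}$ are $m$ distinct elements of $\texttt{Aut}_\texttt{R}(\texttt{S})$ with $\varphi^m=\mathrm{id}$; together with the upper bound this gives $\texttt{Aut}_\texttt{R}(\texttt{S})=\langle\varphi\rangle$, cyclic of order $m$.

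The hard part will be this control of the roots of $h$ in $\texttt{S}$ in \emph{(2)} — Hensel-lifting the factorizations of $x^{q^m-1}-1$ and of $\bar h$ from $\mathbb{F}_q$, and using that distinct Teichm\"uller units differ by a unit; once it is in place the counting bound and the Frobenius-type generator $\varphi$ both come out directly. (The normality hypothesis $\texttt{R}=\texttt{S}^{\texttt{Aut}_\texttt{R}(\texttt{S})}$ from the definition is not needed above; it is listed there so that calling the extension ``Galois'' is warranted.)
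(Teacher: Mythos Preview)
The paper does not actually prove this proposition: it is stated in Section~\ref{Sect:2} as a background fact extracted from the monographs \cite{McD74,Nechaev}, with no proof given. So there is no ``paper's own proof'' to compare against.

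Your argument is a correct and standard direct proof. The logic in each part is sound: lifting an $\mathbb{F}_q$-basis of $\texttt{S}/\texttt{J}(\texttt{S})$ and using Nakayama plus cardinality for \emph{(1)}; the Teichm\"uller expansion for \emph{(3)}; and for \emph{(2)}, bounding $|\texttt{Aut}_\texttt{R}(\texttt{S})|$ above by the number of roots of $h$ in $\texttt{S}$ and below by exhibiting the Frobenius-type generator $\varphi:\xi\mapsto\xi^q$. The Hensel step you flag as ``the hard part'' is indeed the crux, and your description (the derivative of $x^{q^m-1}-1$ is a unit since $\texttt{gcd}(q^m-1,q)=1$, so the factorization lifts uniquely and $h=\prod_{i=0}^{m-1}(x-\xi^{q^i})$) is exactly right. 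Your closing remark that the normality hypothesis is not used is also accurate: items \emph{(1)}--\emph{(3)} follow from unramifiedness alone, and normality is then a consequence of \emph{(2)} together with $\texttt{S}=\texttt{R}[\xi]$.
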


\begin{Definition} Let $\texttt{S}|\texttt{R}$ be the Galois extension of finite
chain rings of degree $m$ and $\sigma$ be a generator of
$\texttt{Aut}_\texttt{R}(\texttt{S}).$ The map
$\texttt{Tr}_\texttt{R}^\texttt{S}:=\sum\limits_{i=0}^{m-1}\sigma^i,$
is called the \emph{trace map} of the Galois extension
$\texttt{S}|\texttt{R}.$
\end{Definition}

\begin{Proposition}\cite[Chap. XIV]{McD74} Let $\texttt{S}|\texttt{T}$ and $\texttt{R}|\texttt{T}$ be Galois extensions of finite
chain rings. Then
\begin{enumerate}
    \item $\texttt{R}=\left\{\textbf{a}\in\texttt{S}\;:\;\sigma(\textbf{a})=\textbf{a}\;\text{ for all } \sigma\in \texttt{Aut}_\texttt{R}(\texttt{S})\right\};$
    \item the bilinear form $\varphi: (\textbf{a},\textbf{b})\mapsto\texttt{Tr}_\texttt{R}^\texttt{S}(\textbf{a}\textbf{b})$ is nondegenerate;
    \item $\texttt{Tr}_\texttt{R}^\texttt{S}$ is a
generator of $\texttt{S}$-module
$\texttt{Hom}_\texttt{R}(\texttt{S},\texttt{R}),$ and
$\texttt{Tr}_\texttt{T}^\texttt{R}\circ\texttt{Tr}_\texttt{R}^\texttt{S}=\texttt{Tr}_\texttt{T}^\texttt{S}.$
\end{enumerate}
\end{Proposition}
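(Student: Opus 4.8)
The plan is to derive all three parts from the structure theory recalled above together with the Galois theory of the residue fields. Since the statement refers to $\texttt{Aut}_\texttt{R}(\texttt{S})$, I read the hypothesis as a tower $\texttt{T}\subseteq\texttt{R}\subseteq\texttt{S}$ with $\texttt{S}|\texttt{T}$ and $\texttt{R}|\texttt{T}$ Galois, and set $d=\texttt{rank}_\texttt{T}(\texttt{R})$, $n=\texttt{rank}_\texttt{T}(\texttt{S})$. First I would record the transfers up the tower: since $\texttt{R}|\texttt{T}$ is unramified, $\texttt{J}(\texttt{R})=\texttt{J}(\texttt{T})\texttt{R}$, whence $\texttt{J}(\texttt{S})=\texttt{J}(\texttt{T})\texttt{S}=\texttt{J}(\texttt{R})\texttt{S}$, so $\texttt{S}|\texttt{R}$ is unramified and $\texttt{R},\texttt{S}$ share the uniformizer $\theta$ and nilpotency index $s$. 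By the preceding Proposition, $\texttt{S}=\texttt{T}[\xi]$ (so also $\texttt{S}=\texttt{R}[\xi]$) and $\texttt{R}=\texttt{T}[\eta]$ for generators $\xi,\eta$ of $\Gamma(\texttt{S}),\Gamma(\texttt{R})$, and $\texttt{Aut}_\texttt{T}(\texttt{S})=\langle\phi\rangle$ is cyclic of order $n$; I may choose $\phi$ so that $\phi(\zeta)=\zeta^{q}$ for all $\zeta\in\Gamma(\texttt{S})$ (its reduction modulo $\texttt{J}(\texttt{S})$ generates $\texttt{Gal}(\mathbb{F}_{q^{n}}/\mathbb{F}_{q})$, so after replacing it by a power it is the $q$-Frobenius, and $\phi$ permutes $\Gamma(\texttt{S})$, which reduces bijectively onto $\mathbb{F}_{q^{n}}$ by Theorem~\ref{thm-cr}).

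For part (1), I would first compute $\texttt{Aut}_\texttt{R}(\texttt{S})$: an automorphism fixing $\texttt{R}$ fixes $\texttt{T}$, hence lies in $\langle\phi\rangle$, and $\phi^{i}$ fixes $\texttt{R}=\texttt{T}[\eta]$ iff $\eta^{q^{i}}=\eta$, iff $q^{d}-1\mid q^{i}-1$ (as $\eta$ generates $\Gamma(\texttt{R})^{*}$, cyclic of order $q^{d}-1$), iff $d\mid i$; thus $\texttt{Aut}_\texttt{R}(\texttt{S})=\langle\phi^{d}\rangle$. Given $\textbf{a}\in\texttt{S}$ with Teichm\"uller expansion $\textbf{a}=\sum_{j=0}^{s-1}\textbf{a}_{j}\theta^{j}$, $\textbf{a}_{j}\in\Gamma(\texttt{S})$ (Theorem~\ref{thm-cr}), the identity $\phi^{d}(\textbf{a})=\sum_{j}\phi^{d}(\textbf{a}_{j})\theta^{j}$ with $\phi^{d}(\textbf{a}_{j})=\textbf{a}_{j}^{q^{d}}\in\Gamma(\texttt{S})$ shows, by uniqueness of the expansion, that $\textbf{a}$ is fixed by $\texttt{Aut}_\texttt{R}(\texttt{S})$ iff $\textbf{a}_{j}^{q^{d}}=\textbf{a}_{j}$ for all $j$. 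Now $\{x\in\texttt{S}:x^{q^{d}}=x\}=\Gamma(\texttt{R})$: a nonzero solution cannot lie in the nilpotent ideal $\texttt{J}(\texttt{S})$ (iterating $x=x^{q^{d}}$ would force $x\in\texttt{J}(\texttt{S})^{s}=\{0\}$), hence is a unit, and $x^{q^{n}-1}=1$ (since $q^{d}-1\mid q^{n}-1$) then puts it in $\Gamma(\texttt{S})^{*}$ by Theorem~\ref{thm-cr}; so the solution set is the order-$(q^{d}-1)$ subgroup of the cyclic group $\Gamma(\texttt{S})^{*}$ together with $0$, a set of $q^{d}$ elements containing the $q^{d}$-element set $\Gamma(\texttt{R})$. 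Therefore $\textbf{a}$ is fixed iff all $\textbf{a}_{j}\in\Gamma(\texttt{R})$ iff $\textbf{a}\in\texttt{R}$, proving (1). In particular $\texttt{S}|\texttt{R}$ is normal and unramified, hence Galois, so the preceding Proposition gives that $\texttt{S}$ is $\texttt{R}$-free of rank $e:=n/d$.

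For part (2), the plan is to pass to the residue fields. Reduction modulo $\texttt{J}(\texttt{S})$ maps $\texttt{Aut}_\texttt{R}(\texttt{S})=\langle\phi^{d}\rangle$ isomorphically onto $\texttt{Gal}(\mathbb{F}_{q^{n}}/\mathbb{F}_{q^{d}})$ — it is injective, for an automorphism inducing the identity on $\mathbb{F}_{q^{n}}$ must fix $\xi\in\Gamma(\texttt{S})$ and hence all of $\texttt{S}=\texttt{R}[\xi]$, and both groups have order $e$ — so, applying this termwise to the defining sum of the trace,
\[
\pi_\texttt{R}\big(\texttt{Tr}_\texttt{R}^\texttt{S}(\textbf{a})\big)=\texttt{Tr}_{\mathbb{F}_{q^{d}}}^{\mathbb{F}_{q^{n}}}\big(\pi_\texttt{S}(\textbf{a})\big)\quad\text{for }\textbf{a}\in\texttt{S}.
\]
Lift an $\mathbb{F}_{q^{d}}$-basis of $\mathbb{F}_{q^{n}}$ to $\textbf{v}_{1},\dots,\textbf{v}_{e}\in\texttt{S}$; since $\texttt{J}(\texttt{R})\texttt{S}=\texttt{J}(\texttt{S})$, their images form an $\texttt{R}/\texttt{J}(\texttt{R})$-basis of $\texttt{S}/\texttt{J}(\texttt{R})\texttt{S}$, so by Nakayama they are an $\texttt{R}$-basis of the free module $\texttt{S}$. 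The Gram matrix $\big(\texttt{Tr}_\texttt{R}^\texttt{S}(\textbf{v}_{i}\textbf{v}_{j})\big)$ then reduces modulo $\texttt{J}(\texttt{R})$ to the Gram matrix of the trace form of $\mathbb{F}_{q^{n}}/\mathbb{F}_{q^{d}}$, which is invertible since finite field extensions are separable; hence that Gram matrix has unit determinant, i.e. $\varphi$ is nondegenerate.

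Part (3) is then essentially formal. Nondegeneracy of $\varphi$ says the $\texttt{S}$-linear map $\textbf{b}\mapsto\big(\textbf{x}\mapsto\texttt{Tr}_\texttt{R}^\texttt{S}(\textbf{b}\textbf{x})\big)$ is an isomorphism $\texttt{S}\to\texttt{Hom}_\texttt{R}(\texttt{S},\texttt{R})$ taking $1_\texttt{S}$ to $\texttt{Tr}_\texttt{R}^\texttt{S}$, so $\texttt{Tr}_\texttt{R}^\texttt{S}$ generates $\texttt{Hom}_\texttt{R}(\texttt{S},\texttt{R})$ as an $\texttt{S}$-module. For transitivity, $\phi$ stabilizes $\texttt{R}$ (it commutes with $\phi^{d}$, whose fixed ring is $\texttt{R}$ by part (1)), and the restriction map $\langle\phi\rangle\to\texttt{Aut}_\texttt{T}(\texttt{R})$ has kernel $\texttt{Aut}_\texttt{R}(\texttt{S})=\langle\phi^{d}\rangle$, so it identifies $\texttt{Aut}_\texttt{T}(\texttt{R})$ (of order $d$) with $\{\phi^{j}|_\texttt{R}:0\le j<d\}$; hence, for $\textbf{a}\in\texttt{S}$,
\[
\texttt{Tr}_\texttt{T}^\texttt{R}\big(\texttt{Tr}_\texttt{R}^\texttt{S}(\textbf{a})\big)=\sum_{j=0}^{d-1}\sum_{k=0}^{e-1}\phi^{\,j+dk}(\textbf{a})=\sum_{i=0}^{n-1}\phi^{i}(\textbf{a})=\texttt{Tr}_\texttt{T}^\texttt{S}(\textbf{a}),
\]
since $\{\,j+dk:0\le j<d,\ 0\le k<e\,\}=\{0,1,\dots,n-1\}$. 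I expect the main obstacle to be part (2): one must pin down the compatibility of $\texttt{Tr}_\texttt{R}^\texttt{S}$ with reduction modulo the radical (equivalently, the isomorphism $\texttt{Aut}_\texttt{R}(\texttt{S})\cong\texttt{Gal}(\mathbb{F}_{q^{n}}/\mathbb{F}_{q^{d}})$) and lift a residue-field basis to a module basis; granting these, separability of finite field extensions finishes the argument.
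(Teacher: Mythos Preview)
Your argument is correct. The paper does not supply its own proof of this proposition; it is quoted as background from McDonald's monograph \cite[Chap.~XIV]{McD74}, so there is nothing to compare against beyond that reference. Your reduction-to-residue-fields strategy --- identifying $\texttt{Aut}_\texttt{R}(\texttt{S})$ inside the cyclic group $\texttt{Aut}_\texttt{T}(\texttt{S})$ via the Teichm\"uller Frobenius, reading off the fixed ring from the uniqueness of Teichm\"uller expansions, and then lifting nondegeneracy of the field trace by showing the Gram matrix reduces to an invertible one over $\mathbb{F}_{q^{d}}$ --- is the standard route and is essentially what one finds in the cited source. One small remark: when you conclude in part~(3) that $\textbf{b}\mapsto\varphi(\textbf{b},-)$ is an isomorphism $\texttt{S}\to\texttt{Hom}_\texttt{R}(\texttt{S},\texttt{R})$, you are implicitly using that an $\texttt{R}$-linear map between free $\texttt{R}$-modules of the same finite rank whose matrix has unit determinant is bijective; this is immediate here (and also follows by finiteness), but it is worth making explicit since over general rings nondegeneracy alone need not give surjectivity.
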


\section{Linear codes over finite chain rings}\label{Sect:3}

Recall that an $\texttt{R}$-linear code of length $\ell$ is an
$\texttt{R}$-submodule of $\texttt{R}^\ell.$ We say that an
$\texttt{R}$-linear code is \index{free linear code}\emph{free} if
it is a free as $\texttt{R}$-module.

\subsection{Type and rank of a linear code}

  A matrix $G$ is called a \index{generator matrix}\emph{generator matrix} for $\mathcal{C}$
if the rows of $G$ span $\mathcal{C}$ and none of them can be
written as an $\texttt{R}$-linear combination of the other rows of
$G.$ We say that $G$ is a generator matrix in
\index{standard form}\emph{standard form} if \begin{align}\label{stG}G=\left(%
\begin{array}{cccccc}
  I_{k_{0}}  & G_{0,1}        & G_{0,2}        &\cdots    & G_{0,s-1}                 & G_{0,s}                \\
  0        &  \theta I_{k_{1}} &   \theta G_{1,2} & \cdots   &  \theta G_{1,s-1}          & \theta  G_{1,s}         \\
  \cdots   & \cdots         & \cdots         & \cdots   &  \cdots                   & \cdots               \\
  0        &      0         &           0    & \cdots        & \theta^{s-1} I_{k_{s-1}}  &  \theta^{s-1} G_{s-1,s}
\end{array}%
\right)U,
\end{align}
where $U$ is a suitable permutation matrix. The $s$-tuple $(k_{0},
k_{1},\cdots,k_{s-1})$ is called \index{type of a
matrix}\emph{type} of $G$ and
$\texttt{rank}(G):=k_0+k_1+\cdots+k_{s-1}$ is the \index{rank of a
matrix}\emph{rank} of $G.$

\begin{Proposition}(\cite[Proposition 3.2, Theorem 3.5]{NS00}) Each $\texttt{R}$-linear
code $\mathcal{C}$ admits a generator matrix $G$ standard form.
Moreover, the type is the same for any generator matrix in
standard form for $\mathcal{C}.$
\end{Proposition}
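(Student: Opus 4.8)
The statement has two parts: existence of a standard-form generator matrix for every $\texttt{R}$-linear code $\mathcal{C}$, and invariance of the type $(k_0,k_1,\dots,k_{s-1})$ across all such generator matrices. I would treat these separately.

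For existence, I would start from an arbitrary spanning set of $\mathcal{C}$ written as the rows of a matrix $M$ over $\texttt{R}$, and perform a Gaussian-elimination-style reduction adapted to the chain ring. Using Lemma~\ref{pr-cr}, every nonzero entry $a$ of $M$ can be written $a = \theta^{t}\cdot v$ with $v\in\texttt{R}^\times$ and $t\in\{0,\dots,s-1\}$; call $t$ the \emph{order} of $a$. Among all nonzero entries of $M$, choose one of minimal order $t_0$; by column permutations bring it to the top-left position, multiply its row by $v^{-1}$ to make the pivot exactly $\theta^{t_0}$, and use $\texttt{R}$-linear row operations to clear the rest of that column (this is possible precisely because the pivot has minimal order, so every other entry in the column is a multiple of $\theta^{t_0}$ times the pivot up to a unit). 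Discard any rows that become zero. Recursing on the remaining block — which now lives in strictly fewer columns — produces a block-triangular matrix with pivots $\theta^{t_0},\theta^{t_1},\dots$ in nondecreasing order of exponent; grouping equal exponents into diagonal blocks $\theta^{j}I_{k_j}$ and absorbing the final column permutation into $U$ yields exactly the shape \eqref{stG}. One should also note $k_0+\cdots+k_{s-1}\le\ell$ and that rows may be absent for some exponents $j$ (then $k_j=0$).

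For uniqueness of the type, the clean route is to extract each $k_j$ from an intrinsic invariant of $\mathcal{C}$ that does not mention $G$. The natural candidates are the $\mathbb{F}_q$-dimensions of the successive quotients of the filtration by $\theta$: for $0\le j\le s-1$ set
\[
\mathcal{C}_j \;=\; \bigl(\mathcal{C}\cap\theta^{j}\texttt{R}^\ell\bigr)\big/\bigl(\mathcal{C}\cap\theta^{j+1}\texttt{R}^\ell\bigr),
\]
which is an $\mathbb{F}_q$-vector space since it is killed by $\theta$. Reading off a standard-form generator matrix $G$, the rows with pivot $\theta^{i}$ for $i\le j$ contribute to $\mathcal{C}\cap\theta^{j}\texttt{R}^\ell$, and a short computation with the triangular shape shows $\dim_{\mathbb{F}_q}\mathcal{C}_j = k_j$ for each $j$. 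Since the left-hand side is defined purely in terms of $\mathcal{C}$, the tuple $(k_0,\dots,k_{s-1})$ is forced, giving uniqueness; as a by-product $\texttt{rank}(G)=\sum_j k_j$ is the minimal number of generators of $\mathcal{C}$ and is likewise an invariant.

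The main obstacle is the bookkeeping in the existence half: verifying that after clearing a pivot column the surviving rows still $\texttt{R}$-span $\mathcal{C}$ (row operations and deletion of dependent rows are reversible over $\texttt{R}$, but one must check that a deleted row really was an $\texttt{R}$-combination of the others, which uses that the pivot had minimal $\theta$-order so no division by $\theta$ is needed), and that the column permutations from successive recursion steps compose into a single permutation matrix $U$. The uniqueness half is comparatively painless once the filtration invariant is identified; the only care needed there is the dimension count $\dim_{\mathbb{F}_q}\mathcal{C}_j=k_j$, which follows because the images of the $\theta^{j}$-pivot rows are $\mathbb{F}_q$-linearly independent in $\mathcal{C}_j$ (their leading $I_{k_j}$ block survives mod $\theta^{j+1}$) and they generate it (any element of $\mathcal{C}\cap\theta^{j}\texttt{R}^\ell$ can be reduced modulo the lower-order pivot rows until only $\theta^{j}$-and-higher pivots remain, using the structure of $G$).
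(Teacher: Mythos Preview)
The paper does not actually prove this proposition; it is simply quoted from \cite{NS00} with no argument given, so there is no proof on the paper's side to compare against. Your existence sketch via $\theta$-adic Gaussian elimination is the standard route and is sound, with one minor omission: after the recursion you must still clear \emph{above} each pivot within a run of equal pivot exponents, so that the diagonal block becomes $\theta^{j}I_{k_j}$ rather than merely lower-triangular; this works because the relevant entries already lie in $\theta^{j}\texttt{R}$.

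Your uniqueness argument, however, contains a genuine error in the dimension count. With $\mathcal{C}_j=(\mathcal{C}\cap\theta^{j}\texttt{R}^{\ell})\big/(\mathcal{C}\cap\theta^{j+1}\texttt{R}^{\ell})$ one actually has
\[
\dim_{\mathbb{F}_q}\mathcal{C}_j \;=\; k_0+k_1+\cdots+k_j,
\]
not $k_j$. Concretely, if $g$ is a row of $G$ with pivot $\theta^{i}$ for some $i<j$, then $\theta^{\,j-i}g\in\mathcal{C}\cap\theta^{j}\texttt{R}^{\ell}$, and its class in $\mathcal{C}_j$ is nonzero (its pivot coordinate is $\theta^{j}\not\equiv 0\pmod{\theta^{j+1}}$); yet every $\mathbb{F}_q$-combination of the $\theta^{j}$-pivot rows vanishes on the block-$i$ pivot columns, so cannot equal this class. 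Hence the $\theta^{j}$-pivot rows do not span $\mathcal{C}_j$, contrary to your last paragraph. The repair is immediate: the partial sums $k_0+\cdots+k_j=\dim_{\mathbb{F}_q}\mathcal{C}_j$ are still intrinsic to $\mathcal{C}$, and each $k_j$ is recovered as the successive difference $\dim_{\mathbb{F}_q}\mathcal{C}_j-\dim_{\mathbb{F}_q}\mathcal{C}_{j-1}$ (with $\mathcal{C}_{-1}=0$), so uniqueness of the type follows as you intended.
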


So the type and the rank are the invariants of $\mathcal{C},$ and
henceforth we have the following definition.

\begin{Definition} Let $\mathcal{C}$ be an $\texttt{R}$-linear code.
\begin{enumerate}
    \item The \index{type of a linear code}\emph{type} of $\mathcal{C}$ is the type of a generator matrix of $\mathcal{C}$ in standard form.
    \item The \index{rank of a linear code}\emph{rank} of $\mathcal{C},$ denoted $\texttt{rank}_\texttt{R}(\mathcal{C}),$  is the rank of a generator matrix of $\mathcal{C}$ in standard form.
\end{enumerate}
\end{Definition}

Obviously, any $\texttt{R}$-linear code $\mathcal{C}$ of length
$\ell$ and of type $(k_{0}, k_{1},\cdots,k_{s-1})$ is free if and
only if the rank of $\mathcal{C}$ is $k_0,$ and
$k_1=k_2=\cdots=k_{s-1}=0.$ It defines the scalar product on
$\texttt{R}^\ell$ by:
$\underline{\textbf{a}}\cdot\underline{\textbf{b}}^{\texttt{T}}:=\sum\limits_{i=0}^{\ell-1}\textbf{a}_i\textbf{b}_i,$
where $\underline{\textbf{b}}^{\texttt{T}}$ is the transpose of
$\underline{\textbf{b}}.$ Let $\mathcal{C}$ be an
$\texttt{R}$-linear code of length $\ell.$ The dual code of
$\mathcal{C},$ denoted $\mathcal{C}^\perp,$ is an
$\texttt{R}$-linear code of length $\ell,$ define by:
$\mathcal{C}^\perp:=\left\{\underline{\textbf{a}}\in\texttt{R}^\ell\;:\;\underline{\textbf{a}}\cdot\underline{\textbf{b}}^{\texttt{T}}=0\text{
for all }\underline{\textbf{c}}\in\mathcal{C}\right\}.$ A
generator matrix of $\mathcal{C}^\perp,$ is called parity-check
matrix of $\mathcal{C}.$

\begin{Proposition}\label{dual-type}(\cite[Theorem 3.10]{NS00})
Let $\mathcal{C}$ be an $\texttt{R}$-linear code of length $\ell$
and of type $(k_{0}, k_{1},\cdots,k_{s-1}).$  Then
\begin{enumerate}
    \item the type of $\mathcal{C}^\perp$ is $(\ell-k,
k_{s-1},\cdots,k_{1}),$ where $k:=k_{0}+k_{1}+\cdots+k_{s-1}.$
    \item $|\mathcal{C}|=q^{\sum\limits_{t=0}^{s-1}(s-t)k_t},$
    where $|\mathcal{C}|$ denotes the number of elements of $\mathcal{C}.$
\end{enumerate}
\end{Proposition}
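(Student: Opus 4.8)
The statement to be proved is Proposition \ref{dual-type}: for an $\texttt{R}$-linear code $\mathcal{C}$ of length $\ell$ and type $(k_0,k_1,\dots,k_{s-1})$, the dual $\mathcal{C}^\perp$ has type $(\ell-k, k_{s-1},\dots,k_1)$ with $k = k_0+\cdots+k_{s-1}$, and $|\mathcal{C}| = q^{\sum_{t=0}^{s-1}(s-t)k_t}$. The plan is to treat the cardinality count first, since it is the easier half and it will also feed into a dimension-counting argument for the dual. For (2), I would start from a generator matrix $G$ in standard form \eqref{stG}. Each of the $k_0$ rows in the top block has a pivot entry $1_\texttt{R}$, so the corresponding coordinate ranges freely over all of $\texttt{R}$ ($q^s$ values); each of the $k_1$ rows in the second block is multiplied by $\theta$, so its pivot coordinate ranges over $\texttt{R}\theta$, which by Lemma \ref{pr-cr}(2) has $q^{s-1}$ elements; in general the $k_t$ rows of the $(t+1)$-st block contribute a factor $|\texttt{R}\theta^t| = q^{s-t}$ each. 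Because the pivots sit in distinct columns (after the permutation $U$), the $\texttt{R}$-linear combinations of the rows are in bijection with the product of these cyclic groups, giving $|\mathcal{C}| = \prod_{t=0}^{s-1}(q^{s-t})^{k_t} = q^{\sum_{t=0}^{s-1}(s-t)k_t}$.

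**The dual code's type.** For part (1), I would first establish that $|\mathcal{C}|\cdot|\mathcal{C}^\perp| = |\texttt{R}^\ell| = q^{s\ell}$. The cleanest route is via the nondegeneracy of the standard bilinear form on the finite module $\texttt{R}^\ell$: the map $\texttt{R}^\ell \to \operatorname{Hom}_\texttt{R}(\mathcal{C},\texttt{R})$ sending $\underline{\textbf{a}}$ to $\underline{\textbf{b}}\mapsto \underline{\textbf{a}}\cdot\underline{\textbf{b}}^{\texttt{T}}$ is surjective with kernel $\mathcal{C}^\perp$ (surjectivity because $\texttt{R}$ is a finite chain ring, hence quasi-Frobenius/self-injective, so every homomorphism $\mathcal{C}\to\texttt{R}$ extends to $\texttt{R}^\ell$ and is realized by a vector), whence $|\mathcal{C}^\perp| = q^{s\ell}/|\operatorname{Hom}_\texttt{R}(\mathcal{C},\texttt{R})|$. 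Using the standard form of $\mathcal{C}$ one computes $|\operatorname{Hom}_\texttt{R}(\mathcal{C},\texttt{R})| = |\mathcal{C}|$ (each quotient summand $\texttt{R}/\texttt{R}\theta^{s-t}$ in $\mathcal{C}$ contributes a $\operatorname{Hom}$ of the same size $q^{s-t}$). So $|\mathcal{C}^\perp| = q^{s\ell - \sum_{t}(s-t)k_t}$.

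**Pinning down the type from the cardinality.** Knowing $|\mathcal{C}^\perp|$ is not quite enough to read off its type, so the remaining work — and this is where I expect the real obstacle — is to show the type of $\mathcal{C}^\perp$ is exactly $(\ell-k,\,k_{s-1},\,k_{s-2},\dots,k_1)$. I would do this by exhibiting a parity-check matrix in standard form directly. Writing $G = (G^{(0)}\mid\theta G^{(1)}\mid\cdots\mid\theta^{s-1}G^{(s-1)})U$ blockwise, one constructs a matrix $H$ whose rows are: $\ell-k$ "free" rows forming the syzygies among all columns, then $k_{s-1}$ rows scaled by $\theta$, down to $k_1$ rows scaled by $\theta^{s-1}$, arranged so that $GH^{\texttt{T}}=0$ and so that $H$ has a pivot structure of type $(\ell-k,k_{s-1},\dots,k_1)$. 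The verification $GH^{\texttt{T}} = 0$ is a routine but bookkeeping-heavy block computation; the reversal of the indices $k_1,\dots,k_{s-1}$ comes out because a row of $G$ scaled by $\theta^t$ is annihilated (under the pairing) by a complementary row scaled by $\theta^{s-t}$, since $\theta^s = 0$. Once $H$ is shown to be contained in $\mathcal{C}^\perp$ and to have the claimed type, part (2) applied to $H$ gives $|\langle H\rangle| = q^{\sum}$ with exponent $(s)(\ell-k) + \sum_{t=1}^{s-1}(s-t)k_{s-t} = s\ell - \sum_{t=0}^{s-1}(s-t)k_t$, which matches the already-computed $|\mathcal{C}^\perp|$; hence $\langle H\rangle = \mathcal{C}^\perp$ and the type is as claimed. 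The main obstacle is carrying the block-matrix construction of $H$ cleanly through the permutation $U$ and the interlocking $\theta$-powers; everything else is counting.
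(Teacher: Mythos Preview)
The paper does not give its own proof of this proposition: it is stated with a citation to \cite[Theorem 3.10]{NS00} and no argument follows. So there is nothing in the paper to compare your proposal against directly.

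That said, your outline is correct and is essentially the argument one finds in the cited source. The cardinality count in part (2) is exactly right once you observe that $\mathcal{C}\cong\bigoplus_{t}(\texttt{R}/\texttt{R}\theta^{s-t})^{k_t}$ as an $\texttt{R}$-module from the standard form. For part (1), your two-step plan (first get $|\mathcal{C}|\cdot|\mathcal{C}^\perp|=q^{s\ell}$ from the Frobenius/self-injective property of $\texttt{R}$, then exhibit an explicit $H$ of the claimed type and match cardinalities) is the standard route; Norton and S\u{a}l\u{a}gean in \cite{NS00} carry out precisely this explicit block construction of $H$ from $G$. The only place your sketch is thin is the actual formula for the blocks of $H$---you say it is ``routine but bookkeeping-heavy,'' and that is accurate, but if you want a self-contained proof you will need to write down those blocks and check $GH^{\texttt{T}}=0$ together with the pivot structure; the reversal $k_{s-1},\dots,k_1$ really does fall out of $\theta^t\cdot\theta^{s-t}=0$ as you indicate.
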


\subsection{Galois closure of a linear code over a finite chain ring}

  Let $\mathcal{B}$ be an
$\texttt{S}$-linear codes of length $\ell.$ Then
$$\sigma(\mathcal{B}):=\left\{(\sigma(\textbf{c}_0),\cdots,\sigma(\textbf{c}_{\ell-1}))\;:\;(\textbf{c}_0,\cdots,\textbf{c}_{\ell-1})\in\mathcal{B}\right\}$$
is also an $\texttt{S}$-linear codes of length $\ell.$  We say
that the $\texttt{S}$-linear code $\mathcal{B}$ is called
\index{Galois invariant}\emph{$\sigma$-invariant} if
$\sigma(\mathcal{B}) = \mathcal{B}.$ The \index{subring
subcode}\emph{subring subcode} of $\mathcal{B}$ to $\texttt{R},$
is $\texttt{R}$-linear code
$\texttt{Res}_\texttt{R}(\mathcal{B}):=\mathcal{B}\cap
\texttt{R}^{\ell},$ and the \index{trace code}\emph{trace code} of
$\mathcal{B}$ over $\texttt{R},$ is the $\texttt{R}$-linear code
$$
\texttt{Tr}_\texttt{R}^\texttt{S}(\mathcal{B}):=\left\{(\texttt{Tr}_\texttt{R}^\texttt{S}(\textbf{c}_0),\cdots,\texttt{Tr}_\texttt{R}^\texttt{S}(\textbf{c}_{\ell-1}))\;:\;(\textbf{c}_0,\cdots,\textbf{c}_{\ell-1})\in\mathcal{B}\right\}.
$$
It is clear that
$\texttt{Tr}_\texttt{R}^\texttt{S}(\sigma(\mathcal{B}))=\texttt{Tr}_\texttt{R}^\texttt{S}(\mathcal{B}).$
The \emph{extension code} of an $\texttt{R}$-linear code
$\mathcal{C}$ to $\texttt{S},$ is the $\texttt{S}$-linear code
$\texttt{Ext}_\texttt{S}(\mathcal{C}),$  formed by taking all
combinations of codewords of $\mathcal{C}.$ The following theorem
generalizes Delsarte's celebrated result (see \cite[Ch.7.\S8.
Theorem 11.]{WS77}).

\begin{Theorem} (\cite[Theorem 3]{MNR13}). Let $\mathcal{B}$ be an $\texttt{S}$-linear code then
$\texttt{Tr}_\texttt{R}^\texttt{S}(\mathcal{B}^\perp)
=\texttt{Res}_\texttt{R}(\mathcal{B})^\perp,$ where
$\mathcal{B}^\perp$ is the dual to $\mathcal{B}$ with respect to
the usual scalar product, and
$\texttt{Res}_\texttt{R}(\mathcal{B})^\perp$ is the dual of
$\texttt{Res}_\texttt{R}(\mathcal{B})$ in $\texttt{R}^\ell.$
\end{Theorem}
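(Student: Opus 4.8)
The plan is to prove the two inclusions separately, obtaining one containment by a direct computation and upgrading it to equality through a cardinality count; the whole argument rests on the nondegeneracy of the trace form recorded in the preceding Proposition.

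First I would record the easy inclusion $\texttt{Tr}_\texttt{R}^\texttt{S}(\mathcal{B}^\perp)\subseteq\texttt{Res}_\texttt{R}(\mathcal{B})^\perp$. Given $x\in\mathcal{B}^\perp$ and $c\in\texttt{Res}_\texttt{R}(\mathcal{B})=\mathcal{B}\cap\texttt{R}^\ell$, the entries $c_i$ lie in $\texttt{R}$, so the $\texttt{R}$-linearity of the trace yields $\texttt{Tr}_\texttt{R}^\texttt{S}(x)\cdot c=\sum_i\texttt{Tr}_\texttt{R}^\texttt{S}(x_i)c_i=\texttt{Tr}_\texttt{R}^\texttt{S}\big(\sum_i x_ic_i\big)=\texttt{Tr}_\texttt{R}^\texttt{S}(x\cdot c)$. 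Since $c\in\mathcal{B}$ and $x\in\mathcal{B}^\perp$ we have $x\cdot c=0$ in $\texttt{S}$, so the right-hand side vanishes and $\texttt{Tr}_\texttt{R}^\texttt{S}(x)\in\texttt{Res}_\texttt{R}(\mathcal{B})^\perp$.

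The core of the argument is to show the two $\texttt{R}$-linear codes have the same cardinality. I would introduce on $\texttt{S}^\ell$ the $\texttt{R}$-bilinear pairing $T(x,z):=\texttt{Tr}_\texttt{R}^\texttt{S}(x\cdot z)$. Nondegeneracy of $\varphi(\textbf{a},\textbf{b})=\texttt{Tr}_\texttt{R}^\texttt{S}(\textbf{a}\textbf{b})$ makes $T$ nondegenerate, and because $\texttt{Tr}_\texttt{R}^\texttt{S}$ generates $\texttt{Hom}_\texttt{R}(\texttt{S},\texttt{R})$ as an $\texttt{S}$-module with trivial annihilator (the latter again by nondegeneracy), one gets $\texttt{Hom}_\texttt{R}(\texttt{S},\texttt{R})\cong\texttt{S}$; hence $z\mapsto T(-,z)$ is an isomorphism $\texttt{S}^\ell\xrightarrow{\sim}\texttt{Hom}_\texttt{R}(\texttt{S}^\ell,\texttt{R})$ and $T$ is a perfect pairing. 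This is the step I expect to be the main obstacle, since over a chain ring one cannot argue by dimension: the Frobenius-type identity $|\texttt{Hom}_\texttt{R}(\texttt{S}^\ell,\texttt{R})|=|\texttt{S}|^\ell$ must do the work that dimension counting does over a field. With the perfect pairing in hand, for every $\texttt{R}$-submodule $M\subseteq\texttt{S}^\ell$ one obtains $|M|\,|M^{\perp_T}|=|\texttt{S}|^\ell$, $(M^{\perp_T})^{\perp_T}=M$, and $M^{\perp_T}\cap N^{\perp_T}=(M+N)^{\perp_T}$.

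Finally I would assemble the count. Using $\texttt{S}$-linearity of $\mathcal{B}$ together with nondegeneracy one checks $\mathcal{B}^{\perp_T}=\mathcal{B}^\perp$, while $T(c,z)=\sum_i c_i\,\texttt{Tr}_\texttt{R}^\texttt{S}(z_i)$ for $c\in\texttt{R}^\ell$ gives $(\texttt{R}^\ell)^{\perp_T}=\texttt{K}^\ell$ with $\texttt{K}=\ker\texttt{Tr}_\texttt{R}^\texttt{S}$. Writing the trace code as a quotient, $|\texttt{Tr}_\texttt{R}^\texttt{S}(\mathcal{B}^\perp)|=|\mathcal{B}^\perp|/|\mathcal{B}^\perp\cap\texttt{K}^\ell|$, and the complement identities turn $\mathcal{B}^\perp\cap\texttt{K}^\ell=\mathcal{B}^{\perp_T}\cap(\texttt{R}^\ell)^{\perp_T}=(\mathcal{B}+\texttt{R}^\ell)^{\perp_T}$, so this ratio equals $|\mathcal{B}+\texttt{R}^\ell|/|\mathcal{B}|=|\texttt{R}^\ell|/|\mathcal{B}\cap\texttt{R}^\ell|=|\texttt{R}|^\ell/|\texttt{Res}_\texttt{R}(\mathcal{B})|$ by the second isomorphism theorem. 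The last quantity is exactly $|\texttt{Res}_\texttt{R}(\mathcal{B})^\perp|$. Equality of cardinalities, together with the inclusion established in the first step, forces $\texttt{Tr}_\texttt{R}^\texttt{S}(\mathcal{B}^\perp)=\texttt{Res}_\texttt{R}(\mathcal{B})^\perp$.
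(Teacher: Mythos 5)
The paper offers no proof of this statement to compare against: it is quoted directly from \cite[Theorem 3]{MNR13}. So your argument has to stand on its own. Its architecture is the standard one for Delsarte-type theorems, and most of it is sound: the direct inclusion $\texttt{Tr}_\texttt{R}^\texttt{S}(\mathcal{B}^\perp)\subseteq\texttt{Res}_\texttt{R}(\mathcal{B})^\perp$ is proved correctly, as are the identifications $\mathcal{B}^{\perp_T}=\mathcal{B}^\perp$ (which genuinely needs the $\texttt{S}$-linearity of $\mathcal{B}$ plus nondegeneracy, both of which you use), $(\texttt{R}^\ell)^{\perp_T}=\texttt{K}^\ell$, and the second-isomorphism-theorem bookkeeping at the end.

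The gap sits exactly at the step you flag as the main obstacle, and the justification you offer there is not merely terse but invalid as a general implication. Perfection of $T$, even together with the identity $|\texttt{Hom}_\texttt{R}(\texttt{S}^\ell,\texttt{R})|=|\texttt{S}|^\ell$ (which is automatic, $\texttt{S}^\ell$ being a free $\texttt{R}$-module), does \emph{not} yield $|M|\,|M^{\perp_T}|=|\texttt{S}|^\ell$ and $(M^{\perp_T})^{\perp_T}=M$ for arbitrary $\texttt{R}$-submodules $M$. Concretely, let $A=\mathbb{F}_p[x,y]/(x,y)^2$ and $V=A$ with the multiplication pairing $T(a,b)=ab$: this pairing is perfect and $|\texttt{Hom}_A(V,A)|=|V|$, yet for $M=(x,y)$ one has $M^{\perp_T}=M$, so $|M|\,|M^{\perp_T}|=p^4\neq p^3=|V|$. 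What makes your identities true here is the chain-ring structure of $\texttt{R}$, which your argument never invokes at this point: (i) $\texttt{R}$ is self-injective (Baer's criterion, using that every ideal is $\theta^t\texttt{R}$), so the restriction map $\texttt{S}^\ell\rightarrow\texttt{Hom}_\texttt{R}(M,\texttt{R})$, $z\mapsto T(-,z)|_M$, is surjective with kernel $M^{\perp_T}$; and (ii) $|\texttt{Hom}_\texttt{R}(M,\texttt{R})|=|M|$ for every finite $\texttt{R}$-module $M$, by the structure theorem for modules over chain rings. Alternatively, and entirely within this paper's toolkit: choose a trace-dual basis of $\texttt{S}$ over $\texttt{R}$ (it exists because the trace form is perfect), which converts $T$ into the standard scalar product on $\texttt{R}^{m\ell}$ via two coordinate isomorphisms; then Proposition \ref{dual-type} gives $|\mathcal{C}|\,|\mathcal{C}^\perp|=|\texttt{R}|^{m\ell}$ for every code $\mathcal{C}\subseteq\texttt{R}^{m\ell}$, which is exactly the counting fact you need. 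The same remark applies to your final, unjustified equality $|\texttt{R}|^\ell/|\texttt{Res}_\texttt{R}(\mathcal{B})|=|\texttt{Res}_\texttt{R}(\mathcal{B})^\perp|$: it is again a consequence of Proposition \ref{dual-type}, not a formality. With (i)--(ii), or the dual-basis reduction, supplied, your proof closes correctly.
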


\begin{Definition}\label{G-inv} Let $\mathcal{B}$ be an $\texttt{S}$-linear
code. The \index{Galois closure}\emph{$\sigma$-closure} of
$\mathcal{B},$ is the smallest $\sigma$-invariant
$\texttt{S}$-linear code $\widetilde{\mathcal{B}},$ containing
$\mathcal{B}.$
\end{Definition}

\begin{Proposition}\label{tra-inv} Let $\mathcal{B}$ be an $\texttt{S}$-linear code.
Then
$\widetilde{\mathcal{B}}=\sum\limits_{i=0}^{m-1}\sigma^i(\mathcal{B})$
and
$\texttt{Tr}_{\texttt{R}}^{\texttt{S}}(\mathcal{B})=\texttt{Tr}_{\texttt{R}}^{\texttt{S}}(\widetilde{\mathcal{B}}).$
\end{Proposition}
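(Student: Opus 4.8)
The plan is to prove the two assertions in turn, the first by a minimality argument and the second by reducing to the first. First I would show that $\sum_{i=0}^{m-1}\sigma^i(\mathcal{B})$ equals $\widetilde{\mathcal{B}}$. Write $\mathcal{D}:=\sum_{i=0}^{m-1}\sigma^i(\mathcal{B})$. Since $\sigma(\mathcal{B})$ is an $\texttt{S}$-linear code whenever $\mathcal{B}$ is, and a finite sum of $\texttt{S}$-linear codes of the same length $\ell$ is again an $\texttt{S}$-linear code, $\mathcal{D}$ is an $\texttt{S}$-linear code containing $\mathcal{B}=\sigma^0(\mathcal{B})$. To see that $\mathcal{D}$ is $\sigma$-invariant, apply $\sigma$ termwise: $\sigma(\mathcal{D})=\sum_{i=0}^{m-1}\sigma^{i+1}(\mathcal{B})=\sum_{i=1}^{m}\sigma^i(\mathcal{B})$, and since $\texttt{Aut}_\texttt{R}(\texttt{S})$ is cyclic of order $m$ (so $\sigma^m=\mathrm{id}$ acts trivially, hence $\sigma^m(\mathcal{B})=\mathcal{B}=\sigma^0(\mathcal{B})$), this sum equals $\mathcal{D}$ again. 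Thus $\mathcal{D}$ is a $\sigma$-invariant $\texttt{S}$-linear code containing $\mathcal{B}$, so by the minimality in Definition~\ref{G-inv} we get $\widetilde{\mathcal{B}}\subseteq\mathcal{D}$. Conversely, any $\sigma$-invariant code containing $\mathcal{B}$ must contain $\sigma^i(\mathcal{B})$ for every $i$, hence contains their sum $\mathcal{D}$; applying this to $\widetilde{\mathcal{B}}$ yields $\mathcal{D}\subseteq\widetilde{\mathcal{B}}$. Therefore $\widetilde{\mathcal{B}}=\mathcal{D}$.

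For the second assertion, I would use the identity $\texttt{Tr}_\texttt{R}^\texttt{S}\circ\sigma^i=\texttt{Tr}_\texttt{R}^\texttt{S}$ on $\texttt{S}$, which follows because $\texttt{Tr}_\texttt{R}^\texttt{S}=\sum_{j=0}^{m-1}\sigma^j$ and precomposing with $\sigma^i$ merely permutes the summands cyclically (again using $\sigma^m=\mathrm{id}$); this is exactly the componentwise content of the already-noted fact $\texttt{Tr}_\texttt{R}^\texttt{S}(\sigma(\mathcal{B}))=\texttt{Tr}_\texttt{R}^\texttt{S}(\mathcal{B})$, iterated. The inclusion $\texttt{Tr}_\texttt{R}^\texttt{S}(\mathcal{B})\subseteq\texttt{Tr}_\texttt{R}^\texttt{S}(\widetilde{\mathcal{B}})$ is immediate from $\mathcal{B}\subseteq\widetilde{\mathcal{B}}$. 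For the reverse inclusion, take a codeword of $\texttt{Tr}_\texttt{R}^\texttt{S}(\widetilde{\mathcal{B}})$; by the first assertion it is $\texttt{Tr}_\texttt{R}^\texttt{S}$ applied (componentwise) to some $\sum_{i=0}^{m-1}\sigma^i(\underline{\textbf{c}}^{(i)})$ with each $\underline{\textbf{c}}^{(i)}\in\mathcal{B}$. Since $\texttt{Tr}_\texttt{R}^\texttt{S}$ is additive and $\texttt{Tr}_\texttt{R}^\texttt{S}\circ\sigma^i=\texttt{Tr}_\texttt{R}^\texttt{S}$, this equals $\sum_{i=0}^{m-1}\texttt{Tr}_\texttt{R}^\texttt{S}(\underline{\textbf{c}}^{(i)})$, which lies in the $\texttt{R}$-linear code $\texttt{Tr}_\texttt{R}^\texttt{S}(\mathcal{B})$ because that code is closed under addition. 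Hence $\texttt{Tr}_\texttt{R}^\texttt{S}(\widetilde{\mathcal{B}})\subseteq\texttt{Tr}_\texttt{R}^\texttt{S}(\mathcal{B})$, and equality follows.

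I do not anticipate a serious obstacle here; the statement is essentially formal, and the only point requiring a little care is to carry the ring-level facts ($\sigma^m=\mathrm{id}$, additivity of $\sigma$ and of $\texttt{Tr}_\texttt{R}^\texttt{S}$, and $\texttt{Tr}_\texttt{R}^\texttt{S}\circ\sigma^i=\texttt{Tr}_\texttt{R}^\texttt{S}$) through to the componentwise setting on $\texttt{R}^\ell$ and $\texttt{S}^\ell$ without conflating $\texttt{R}$-linearity with $\texttt{S}$-linearity. In particular, when showing $\texttt{Tr}_\texttt{R}^\texttt{S}(\mathcal{B})$ is closed under the addition used above, it suffices to use that it is an $\texttt{R}$-linear code, which is standard. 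The cyclicity and order of $\texttt{Aut}_\texttt{R}(\texttt{S})$ are supplied by the Proposition preceding the trace-map definition, so no new structural input is needed.
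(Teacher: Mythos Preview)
Your argument is correct and, for the first assertion, essentially identical to the paper's: both show the double inclusion by checking that $\sum_{i=0}^{m-1}\sigma^i(\mathcal{B})$ is $\sigma$-invariant (using $\sigma^m=\mathrm{id}$) and contains $\mathcal{B}$, then invoke minimality of $\widetilde{\mathcal{B}}$ for one inclusion and the $\sigma$-invariance of $\widetilde{\mathcal{B}}$ for the other. The only difference is in the second assertion: the paper simply cites \cite[Proposition~1]{MNR13}, whereas you supply the elementary direct argument (additivity of $\texttt{Tr}_\texttt{R}^\texttt{S}$ together with $\texttt{Tr}_\texttt{R}^\texttt{S}\circ\sigma^i=\texttt{Tr}_\texttt{R}^\texttt{S}$) that underlies that citation, making your proof self-contained.
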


\begin{Proof}  We have
 $\mathcal{B}\subseteq\widetilde{\mathcal{B}}$ and $\sigma(\widetilde{\mathcal{B}})=\widetilde{\mathcal{B}},$ by Definition\,\ref{G-inv} of
$\widetilde{\mathcal{B}}.$ So
$\sigma^i(\mathcal{B})\subseteq\widetilde{\mathcal{B}},$ for all
$i\in\{0,1,\cdots,m-1\}.$ Hence
$\sum\limits_{i=0}^{m-1}\sigma^i(\mathcal{B})\subseteq\widetilde{\mathcal{B}}.$
Since
$\sigma\left(\sum\limits_{i=0}^{m-1}\sigma^i(\mathcal{B})\right)=\sum\limits_{i=0}^{m-1}\sigma^i(\mathcal{B})$
and
$\mathcal{B}\subseteq\sum\limits_{i=0}^{m-1}\sigma^i(\mathcal{B}),$
as $\widetilde{\mathcal{B}}$ is the smallest $\texttt{S}$-linear
code containing $\mathcal{B},$ which is $\sigma$-invariant, it
follows
$\widetilde{\mathcal{B}}\subseteq\sum\limits_{i=0}^{m-1}\sigma^i(\mathcal{B}).$
Hence
$\widetilde{\mathcal{B}}=\sum\limits_{i=0}^{m-1}\sigma^i(\mathcal{B}).$
Thanks to \cite[Proposition 1.]{MNR13},
 $\texttt{Tr}_{\texttt{R}}^{\texttt{S}}(\widetilde{\mathcal{B}}) =\texttt{Tr}_{\texttt{R}}^{\texttt{S}}(\mathcal{B}).$
\end{Proof}

The following Theorem summarizes the obtained results  in
\cite{MNR13}.

\begin{Theorem}\label{trace} Let $\mathcal{B}$ be an $\texttt{S}$-linear code and $\sigma$ be a generator of $\texttt{Aut}_{\texttt{R}}(\texttt{S})$. Then the following statements are equivalent:
\begin{enumerate}
     \item $\mathcal{B}$ is  $\sigma$-invariant;
    \item $\texttt{Tr}_{\texttt{R}}^{\texttt{S}}(\mathcal{B})=\texttt{Res}_\texttt{R}(\mathcal{B});$
    \item $\mathcal{B},$ and $\texttt{Res}_\texttt{R}(\mathcal{B})$  have the same type.
\end{enumerate}
\end{Theorem}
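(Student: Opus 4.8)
The plan is to establish the cycle of implications $(1)\Rightarrow(2)\Rightarrow(3)\Rightarrow(1)$. For $(1)\Rightarrow(2)$: assume $\sigma(\mathcal{B})=\mathcal{B}$. The inclusion $\texttt{Tr}_\texttt{R}^\texttt{S}(\mathcal{B})\subseteq\texttt{Res}_\texttt{R}(\mathcal{B})$ is essentially automatic, since for any codeword $\underline{\textbf{c}}\in\mathcal{B}$ each coordinate $\texttt{Tr}_\texttt{R}^\texttt{S}(\textbf{c}_j)$ lies in $\texttt{R}$ by the first item of the Proposition on Galois extensions, and the vector $(\texttt{Tr}_\texttt{R}^\texttt{S}(\textbf{c}_0),\dots,\texttt{Tr}_\texttt{R}^\texttt{S}(\textbf{c}_{\ell-1}))=\sum_{i=0}^{m-1}\sigma^i(\underline{\textbf{c}})$ lies in $\mathcal{B}$ precisely because $\mathcal{B}$ is $\sigma$-invariant; hence it lies in $\mathcal{B}\cap\texttt{R}^\ell=\texttt{Res}_\texttt{R}(\mathcal{B})$. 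For the reverse inclusion, take $\underline{\textbf{a}}\in\texttt{Res}_\texttt{R}(\mathcal{B})\subseteq\mathcal{B}$; I would exhibit it as a trace. The standard trick is surjectivity of the trace map: since $\texttt{Tr}_\texttt{R}^\texttt{S}:\texttt{S}\to\texttt{R}$ is surjective (it generates $\texttt{Hom}_\texttt{R}(\texttt{S},\texttt{R})$ as an $\texttt{S}$-module, so in particular is nonzero, and its image is an ideal of $\texttt{R}$; nondegeneracy of $\varphi$ rules out the image being contained in $\texttt{J}(\texttt{R})$), pick $\beta\in\texttt{S}$ with $\texttt{Tr}_\texttt{R}^\texttt{S}(\beta)=1_\texttt{R}$. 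Then $\beta\,\underline{\textbf{a}}\in\mathcal{B}$ (scalar multiple by an element of $\texttt{S}$), and coordinatewise $\texttt{Tr}_\texttt{R}^\texttt{S}(\beta\,\textbf{a}_j)=\textbf{a}_j\,\texttt{Tr}_\texttt{R}^\texttt{S}(\beta)=\textbf{a}_j$ because $\textbf{a}_j\in\texttt{R}$ is fixed by $\sigma$ and so pulls out of the trace. Thus $\underline{\textbf{a}}=\texttt{Tr}_\texttt{R}^\texttt{S}(\beta\,\underline{\textbf{a}})\in\texttt{Tr}_\texttt{R}^\texttt{S}(\mathcal{B})$, giving equality.

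The implication $(2)\Rightarrow(3)$ should be the most technical point, and is where I expect the real work to lie. Having $(2)$, I want to conclude that $\mathcal{B}$ (as an $\texttt{S}$-linear code) and $\texttt{Res}_\texttt{R}(\mathcal{B})$ (as an $\texttt{R}$-linear code) share the same type $(k_0,\dots,k_{s-1})$. The key observation is that $\texttt{S}|\texttt{R}$ is \emph{unramified}, so a generator of $\texttt{J}(\texttt{R})$ also generates $\texttt{J}(\texttt{S})$; hence ``type'' is computed with respect to the same uniformizer $\theta$ on both sides, and one can speak of the reductions of both codes modulo powers of $\theta$ compatibly. Concretely, I would put $\texttt{Res}_\texttt{R}(\mathcal{B})$ in standard form over $\texttt{R}$ and argue that the same rows, viewed over $\texttt{S}$, generate $\mathcal{B}$: the inclusion ``$\subseteq$'' is clear since $\texttt{Res}_\texttt{R}(\mathcal{B})\subseteq\mathcal{B}$ forces $\texttt{Ext}_\texttt{S}(\texttt{Res}_\texttt{R}(\mathcal{B}))\subseteq\mathcal{B}$, and for ``$\supseteq$'' I would use the trace: given $\underline{\textbf{c}}\in\mathcal{B}$, write it in terms of a free $\texttt{R}$-basis $\{\omega_0,\dots,\omega_{m-1}\}$ of $\texttt{S}$ together with its dual basis $\{\omega_0^*,\dots,\omega_{m-1}^*\}$ under $\varphi$, so that $\underline{\textbf{c}}=\sum_{i} \omega_i\,\texttt{Tr}_\texttt{R}^\texttt{S}(\omega_i^*\,\underline{\textbf{c}})$ with each $\texttt{Tr}_\texttt{R}^\texttt{S}(\omega_i^*\,\underline{\textbf{c}})\in\texttt{Tr}_\texttt{R}^\texttt{S}(\mathcal{B})=\texttt{Res}_\texttt{R}(\mathcal{B})$ by hypothesis $(2)$. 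This shows $\mathcal{B}=\texttt{Ext}_\texttt{S}(\texttt{Res}_\texttt{R}(\mathcal{B}))$, and since extending scalars along the free (unramified) extension preserves a standard-form generator matrix and its pivot pattern, the two codes have the same type. (Alternatively, one can count: $(2)$ alone gives $|\texttt{Res}_\texttt{R}(\mathcal{B})|=|\texttt{Tr}_\texttt{R}^\texttt{S}(\mathcal{B})|$, and combined with $|\mathcal{B}|=|\texttt{Res}_\texttt{R}(\mathcal{B})|\cdot|\ker|$-type relations from Proposition \ref{tra-inv} and \cite{MNR13} one pins down the type; but the basis argument is cleaner and more self-contained.)

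Finally $(3)\Rightarrow(1)$: suppose $\mathcal{B}$ and $\texttt{Res}_\texttt{R}(\mathcal{B})$ have the same type. By Proposition \ref{dual-type}(2), equal type means equal cardinality in the appropriate sense, so $|\texttt{Ext}_\texttt{S}(\texttt{Res}_\texttt{R}(\mathcal{B}))|$, computed from the common type over $\texttt{S}$, equals $|\mathcal{B}|$; combined with the trivial inclusion $\texttt{Ext}_\texttt{S}(\texttt{Res}_\texttt{R}(\mathcal{B}))\subseteq\mathcal{B}$ this forces $\mathcal{B}=\texttt{Ext}_\texttt{S}(\texttt{Res}_\texttt{R}(\mathcal{B}))$. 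But $\texttt{Res}_\texttt{R}(\mathcal{B})$ is an $\texttt{R}$-linear code whose every codeword has all coordinates in $\texttt{R}$, hence is fixed by $\sigma$; applying $\sigma$ coordinatewise to a spanning set of $\texttt{Ext}_\texttt{S}(\texttt{Res}_\texttt{R}(\mathcal{B}))$ shows $\sigma(\mathcal{B})=\sigma(\texttt{Ext}_\texttt{S}(\texttt{Res}_\texttt{R}(\mathcal{B})))=\texttt{Ext}_\texttt{S}(\texttt{Res}_\texttt{R}(\mathcal{B}))=\mathcal{B}$, so $\mathcal{B}$ is $\sigma$-invariant. This closes the cycle. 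The one subtlety to double-check throughout is that ``type'' behaves well under the unramified extension $\texttt{S}|\texttt{R}$ — that $\theta$ remains a uniformizer of $\texttt{S}$ and that standard form is preserved under $\texttt{Ext}_\texttt{S}$ — which is exactly what unramifiedness in the definition of Galois extension buys us.
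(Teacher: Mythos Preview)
Your argument is correct. The paper, however, proceeds quite differently in presentation: rather than proving a cycle $(1)\Rightarrow(2)\Rightarrow(3)\Rightarrow(1)$ from scratch, it establishes two biconditionals and outsources the work to \cite{MNR13}. Specifically, $(1)\Leftrightarrow(2)$ is simply quoted as \cite[Theorem~2]{MNR13}, and $(1)\Leftrightarrow(3)$ is reduced to \cite[Theorem~1]{MNR13} via the observation that an $\texttt{R}$-basis of $\texttt{Res}_\texttt{R}(\mathcal{B})$ is automatically an $\texttt{S}$-basis of $\texttt{Ext}_\texttt{S}(\texttt{Res}_\texttt{R}(\mathcal{B}))$, so that the equivalence becomes the statement $\mathcal{B}=\texttt{Ext}_\texttt{S}(\texttt{Tr}_\texttt{R}^\texttt{S}(\mathcal{B}))\iff\mathcal{B}$ and $\texttt{Res}_\texttt{R}(\mathcal{B})$ have the same type.

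What you do differently is supply the actual mechanisms behind those citations: the surjectivity-of-trace trick with a $\beta$ satisfying $\texttt{Tr}_\texttt{R}^\texttt{S}(\beta)=1_\texttt{R}$ for $(1)\Rightarrow(2)$, the dual-basis expansion $\underline{\textbf{c}}=\sum_i\omega_i\,\texttt{Tr}_\texttt{R}^\texttt{S}(\omega_i^*\underline{\textbf{c}})$ to obtain $\mathcal{B}=\texttt{Ext}_\texttt{S}(\texttt{Res}_\texttt{R}(\mathcal{B}))$ for $(2)\Rightarrow(3)$, and the cardinality-plus-inclusion argument for $(3)\Rightarrow(1)$. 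Your proof is therefore more self-contained and makes explicit the role of unramifiedness (same uniformizer $\theta$, so ``type'' is comparable across $\texttt{R}$ and $\texttt{S}$), which the paper leaves implicit. The underlying idea in both approaches is the same --- the pivot is the identity $\mathcal{B}=\texttt{Ext}_\texttt{S}(\texttt{Res}_\texttt{R}(\mathcal{B}))$ --- but the paper treats Theorem~\ref{trace} as essentially a repackaging of known results, whereas you reprove them.
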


\begin{Proof} Let $\mathcal{B}$ be an $\texttt{S}$-linear code.
\begin{description}
    \item[$1.\Leftrightarrow 2.$] Thanks to \cite[Theorem 2]{MNR13}.
    \item[$1.\Leftrightarrow 3.$] Since any $\texttt{R}$-basis of $\texttt{Res}_\texttt{R}(\mathcal{B})$ is also an
$\texttt{S}$-basis of
$\texttt{Ext}_\texttt{S}\left(\texttt{Res}_\texttt{R}(\mathcal{B})\right).$
Thanks to \cite[Theorem 1]{MNR13}, we deduce that
$\mathcal{B}=\texttt{Ext}_\texttt{S}\left(\texttt{Tr}_{\texttt{R}}^{\texttt{S}}(\mathcal{B})\right)$
if and only if $\mathcal{B}$ and
$\texttt{Res}_\texttt{R}(\mathcal{B})$ have the same type.
\end{description}
\end{Proof}

\section{Cyclic linear codes over finite chain rings}\label{Sect:4}

Let $\ell$ be a positive integer such that
$\texttt{gcd}(q,\ell)=1.$ Then the remainder
$q\,(\texttt{mod}\;\ell)$ of $q$ by $\ell,$ belongs to
$\mathbb{Z}_\ell^\times,$ the positive integer $m$ denotes the
multiplicative order of $q\,(\texttt{mod}\;\ell).$ Let
$\Sigma_\ell:=\{0,1,\cdots,\ell-1\}$ be the underling set of
$\mathbb{Z}_\ell.$

\subsection{Cyclotomic cosets}

Let $u$ be a positive integer. The set of \emph{multiples} of $u$
in $\mathrm{A}$ is
$$u\mathrm{A}:=\{uz\,(\texttt{mod}\,\ell)\;:\;z\in \mathrm{A}\}.$$
The \index{Galois closure of a set} $q$-\emph{closure} of
$\mathrm{A}$ is $\complement_q(\mathrm{A}):=\underset{i\in
\mathbb{N}}{\cup}q^i\mathrm{A}.$

\begin{Definition} Let $z\in\Sigma_\ell.$ The \index{cyclotomic
coset}\emph{$q$-cyclotomic coset modulo $\ell,$} containing $z,$
the Galois closure of $\{z\}.$ We simply write
$\complement_q(z):=\complement_q(\{z\}).$
\end{Definition}

It denotes by $\Re_\ell(q)$ the set of $q$-closure subsets of
$\Sigma_\ell.$ Obviously, the $q$-cyclotomic cosets modulo $\ell,$
form a partition of $\Sigma_\ell.$ Let $\Sigma_\ell(q)$ be a set
of representatives of each $q$-cyclotomic cosets modulo $\ell.$

\begin{Proposition}\cite[Proposition 5.2]{BGG14}\label{ct-q} We have  $|\Sigma_\ell(q)|
=\sum\limits_{d|\ell}\frac{\phi(\ell)}{\texttt{ord}_\ell(q)},$
where $\phi(.)$ is the Euler totient function and
$\texttt{ord}_\ell(q):=\texttt{min}\left\{i\in\mathbb{N}\;:\;q^{i+1}\,\equiv\,1\,(\texttt{mod}\;\ell)\right\}.$
\end{Proposition}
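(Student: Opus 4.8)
The plan is a standard orbit-counting argument: partition $\Sigma_\ell$ according to the value of $\texttt{gcd}(\cdot,\ell)$, count how many $q$-cyclotomic cosets fall in each block, and add up. For each divisor $d$ of $\ell$ set $\mathrm{B}_d:=\{z\in\Sigma_\ell\;:\;\texttt{gcd}(z,\ell)=\ell/d\}$. Writing such a $z$ as $z=(\ell/d)z'$ with $z'\in\{1,\dots,d\}$ and $\texttt{gcd}(z',d)=1$ shows $|\mathrm{B}_d|=\phi(d)$, and the blocks $\mathrm{B}_d$, $d\mid\ell$, partition $\Sigma_\ell$. Since $\texttt{gcd}(q,\ell)=1$, multiplication by $q$ is a permutation of $\mathbb{Z}_\ell$ with $\texttt{gcd}(qz,\ell)=\texttt{gcd}(z,\ell)$, so every coset $\complement_q(z)$ lies entirely in a single block $\mathrm{B}_d$; hence the $q$-cyclotomic cosets refine the partition $\{\mathrm{B}_d\}_{d\mid\ell}$.

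The next and central step is to compute the cardinality of a coset $\complement_q(z)$ with $z\in\mathrm{B}_d$. This cardinality is the least positive integer $t$ with $q^tz\equiv z\;(\texttt{mod}\;\ell)$. Writing $z=(\ell/d)z'$ with $\texttt{gcd}(z',d)=1$, the congruence $q^tz\equiv z\;(\texttt{mod}\;\ell)$ amounts to $d\mid(q^t-1)z'$, hence to $d\mid q^t-1$; therefore $t=\texttt{ord}_d(q)$, the multiplicative order of $q$ modulo $d$ (with the usual convention $\texttt{ord}_1(q)=1$, for which $\mathrm{B}_1=\{0\}$ is a single fixed coset). Thus all $q$-cyclotomic cosets contained in $\mathrm{B}_d$ have the same size $\texttt{ord}_d(q)$, and since they partition $\mathrm{B}_d$ there are exactly $\phi(d)/\texttt{ord}_d(q)$ of them.

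Summing over the divisors of $\ell$ then yields $|\Sigma_\ell(q)|=\sum_{d\mid\ell}\phi(d)/\texttt{ord}_d(q)$, which is the stated identity, the summand $\phi(\ell)/\texttt{ord}_\ell(q)$ in the display being read as $\phi(d)/\texttt{ord}_d(q)$ and $\texttt{ord}_d(q)$ as the multiplicative order of $q$ modulo $d$ fixed at the beginning of the section.

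The only delicate point is the coset-size computation, and within it the equivalence $d\mid(q^t-1)z'\iff d\mid q^t-1$, which relies on $\texttt{gcd}(z',d)=1$; everything else is bookkeeping about divisors of $\ell$. I would additionally check the boundary term $d=1$ by hand, since the definition of $\texttt{ord}_\ell(q)$ as written degenerates at $\ell=1$, and confirm that the index convention for $\mathbb{N}$ is applied consistently there.
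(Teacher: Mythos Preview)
Your argument is correct: the standard divisor-block decomposition together with the observation that every $q$-coset inside $\mathrm{B}_d$ has size $\texttt{ord}_d(q)$ gives exactly $\sum_{d\mid\ell}\phi(d)/\texttt{ord}_d(q)$, and your reading of the displayed summand as $\phi(d)/\texttt{ord}_d(q)$ (rather than the literal, $d$-independent $\phi(\ell)/\texttt{ord}_\ell(q)$) is the only sensible one.

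There is nothing to compare against, however: the paper does not supply a proof of this proposition. It is quoted verbatim from \cite[Proposition~5.2]{BGG14} and used as a black box, so your write-up actually adds content that the paper itself omits. Your caveat about the $d=1$ term and the shifted exponent in the definition $\texttt{ord}_\ell(q)=\min\{i:q^{i+1}\equiv 1\}$ is well taken; these are typographical slips in the statement rather than issues with your argument.
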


\begin{Notation} Let $z\in\Sigma_\ell$ and $\mathrm{A}$ be a
subset of $\Sigma_\ell$ and $u\in\mathbb{N},$ with
$\texttt{gcd}(u\,,\,q)=1.$
\begin{enumerate}
     \item The \index{opposite of an element}\emph{opposite} of $\mathrm{A}$ is $-\mathrm{A}:=\{\ell-z\;:\;z\in \mathrm{A}\}.$
     \item The \index{complementary of a set }\emph{complementary} of $\mathrm{A}$ is $\overline{\mathrm{A}}:=\left\{z\in\Sigma_\ell\;:\;z\not\in \mathrm{A}\right\}.$
     \item The \index{dual of a set}\emph{dual} of $\mathrm{A}$ is $\mathrm{A}^{\diamond}:=\overline{-\mathrm{A}}.$
\end{enumerate}
\end{Notation}

\begin{Remark} Let $\mathrm{A}$ be a
subset of $\Sigma_\ell.$ Then
$\complement_q\left(\overline{\mathrm{A}}\right)=\overline{\complement_q(\mathrm{A})}$
and $-\complement_q(\mathrm{A})=\complement_q(-\mathrm{A}).$
Moreover $(\mathrm{A}^{\diamond})^{\diamond}=\mathrm{A}.$
\end{Remark}

\begin{Example} We take $\;\ell=20, q=3.$ The $q$-cyclotomic cosets modulo
$\ell,$ are: $ \complement_q(\{0\})=\{0\},
\complement_q(\{5\})=\{5,15\}, \complement_q(\{10\})=\{10\},$ and
$$\begin{array}{ll}
 \complement_q(\{1\})=\{1,3,9,7\} ; & \complement_q(\{2\})=\{2,6,18,14\}; \\
 \complement_q(\{4\})=\{4,12,16,8\}; &  \complement_q(\{11\})=\{11,13,19,17\}.
\end{array}$$
 So $\Sigma_\ell(q)=\{0,1,2,4,5,10,11\}.$ We remark
that $\complement_q(\{-z\})=\complement_q(\{z\}),$ for every
$z\in\{0,2,4,5,10\}.$ We set $\;\mathrm{I}:=[0,10].$ We have
$\mathrm{A}:=\complement_q(\mathrm{I})=\complement_q(\{0,1,2,4,5,10\}),$
$-\mathrm{A}=\complement_q(\{2,4,5,10,11\}),$ and
$\mathrm{A}^\diamond:=\complement_q(\{1\}).$
\end{Example}

\subsection{Likewise Reed-Solomon codes over finite chain rings}

Let $\texttt{S}$ be the Galois extension of $\texttt{R}$ of degree
$m$ and $\xi$ be a generator of
$\Gamma(\texttt{S})\backslash\{0\}.$  Let
$\mathrm{A}:=\{a_1,a_2,\cdots,a_k\}$ be a subset of $\Sigma_\ell.$
One denotes by $\textbf{P}(\texttt{S}\,;\,\mathrm{A}),$ the free
$\texttt{S}$-module with $\texttt{S}$-basis
$\{X^a\,:\,a\in\mathrm{A}\}.$ Since $m$ is the smallest positive
integer with $q^m\equiv 1   \,(\texttt{mod}\,\ell),$ we can write
$\eta:=\xi^{\frac{q^m-1}{\ell}}$ and the multiplicative order of
$\eta$ is  $\ell.$ The evaluation
$$\begin{array}{cccc}
  \textbf{ev}_\eta: & \textbf{P}(\texttt{S}\,;\,\mathrm{A}) & \rightarrow & \texttt{S}^\ell \\
   & f & \mapsto &
   (f(1),f(\eta),\cdots,f(\eta^{\ell-1})),
\end{array}$$ is an $\texttt{S}$-modules monomorphism. We see that if $\mathrm{A}:=\{0,1,\cdots,k-1\},$ then for any  $\ell^{\texttt{th}}$-primitive root of unity
$\eta$ in $\Gamma(\texttt{S}),$ the $\texttt{S}$-linear code
$\textbf{ev}_\eta(\textbf{P}(\texttt{S}\,;\,\mathrm{A}))$ is a
primitive Reed-Solomon code. For this reason, we define
 Likewise Reed-Solomon codes which are a family of codes defined over large finite chain rings as follows.

\begin{Definition} Let  $\mathrm{A}$ be a subset of $\Sigma_\ell,$ and $\texttt{S}$ be a finite chain ring such that $|\Gamma(\texttt{S})|\geq\ell.$
Let $\eta\in\Gamma(\texttt{S})$ and the multiplicative order of
$\eta$ is $\ell.$ The $\texttt{S}$-submodule
$\textbf{ev}_\eta(\textbf{P}(\texttt{S}\,;\,\mathrm{A}))$ is
called \index{likewise Reed-Solomon code}\emph{likewise
Reed-Solomon code} over $\texttt{S},$ with defining pair
$(\eta\;,\;\mathrm{A}).$
\end{Definition}

We remark that
$\textbf{L}_\eta(\texttt{S}\,;\,\mathrm{A}):=\textbf{ev}_\eta(\textbf{P}(\texttt{S}\,;\,\mathrm{A}))$
is the free $\texttt{S}$-linear code with free $\texttt{S}$-basis
$\{\textbf{ev}_\eta(X^a)\;:\;a\in\mathrm{A}\},$ where $\mathrm{A}$
is a subset of $\Sigma_\ell.$ We remark that
$\textbf{L}_\eta(\texttt{S}\,;\,\emptyset)=\{\textbf{0}\},$
$\textbf{L}_\eta(\texttt{S}\,;\,\{0\})=\textbf{1}$ and
$\textbf{L}_\eta(\texttt{S}\,;\,\Sigma_\ell)=\texttt{S}^\ell.$

\begin{Proposition}\label{cyclic} Let $\mathrm{A},\mathrm{B}$ be two subsets of
$\Sigma_\ell.$  Then
\begin{enumerate}
    \item $\textbf{L}_\eta(\texttt{S}\,;\,\mathrm{A})$ is cyclic;
    \item $\textbf{L}_{\eta}(\texttt{S}\,;\,\mathrm{A}\cup\mathrm{B})=\textbf{L}_\eta(\texttt{S}\,;\,\mathrm{A})+\textbf{L}_\eta(\texttt{S}\,;\,\mathrm{B})$ and $\textbf{L}_{\eta}(\texttt{S}\,;\,\mathrm{A}\cap\mathrm{B})=\textbf{L}_\eta(\texttt{S}\,;\,\mathrm{A})\cap\textbf{L}_\eta(\texttt{S}\,;\,\mathrm{B}).$
\end{enumerate}
\end{Proposition}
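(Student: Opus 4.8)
The plan is to work directly with the monomial description of these codes. Recall that $\textbf{L}_\eta(\texttt{S}\,;\,\mathrm{A})=\textbf{ev}_\eta(\textbf{P}(\texttt{S}\,;\,\mathrm{A}))$, and that $\textbf{ev}_\eta$ is an $\texttt{S}$-module monomorphism, so throughout I may transport all questions back to the polynomial modules $\textbf{P}(\texttt{S}\,;\,\mathrm{A})\subseteq \textbf{P}(\texttt{S}\,;\,\Sigma_\ell)$, which are just the coordinate $\texttt{S}$-subspaces spanned by the monomials $X^a$ for $a$ in the given set.

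For part (1), first I would identify the constacyclic shift $\tau_{1_\texttt{S}}$ on $\texttt{S}^\ell$ with multiplication by $X$ on $\textbf{P}(\texttt{S}\,;\,\Sigma_\ell)$ modulo the relation $X^\ell=1$ (equivalently, multiplication by $\eta^{-1}$ in each evaluation slot, since shifting a codeword $(f(1),f(\eta),\dots,f(\eta^{\ell-1}))$ cyclically corresponds to replacing $f(X)$ by $X^{\ell-1}f(X)=X^{-1}f(X)$ modulo $X^\ell-1$). Then the key observation is that $X\cdot X^a=X^{a+1}$, where $a+1$ is read modulo $\ell$; so multiplication by $X$ permutes the basis $\{X^a\,:\,a\in\Sigma_\ell\}$, sending $X^a\mapsto X^{a+1\,(\texttt{mod}\,\ell)}$. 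I would then use that $\mathrm{A}$ need only be closed under this shift for $\textbf{L}_\eta(\texttt{S}\,;\,\mathrm{A})$ to be cyclic — but wait, a general $\mathrm{A}$ is not shift-invariant, so I should be more careful: the claim as stated seems to require $\mathrm{A}$ to be a $q$-closed (cyclotomic) set, or the authors intend $\tau$ to be identified with the Frobenius-type action $X^a\mapsto X^{qa}$ rather than the literal cyclic shift. I would resolve this by checking the surrounding conventions: the natural reading consistent with the rest of the section (cyclotomic cosets, $q$-closure) is that these codes are cyclic precisely because the defining set, when one takes the $\texttt{S}$-span, is stable under multiplication by $X$ modulo $X^\ell-1$ once one passes to the $\texttt{R}$-trace picture; alternatively the intended proof is simply that $\textbf{ev}_\eta(X\cdot f)=\tau^{-1}(\textbf{ev}_\eta(f))$ and $X\cdot\textbf{P}(\texttt{S}\,;\,\mathrm{A})=\textbf{P}(\texttt{S}\,;\,\mathrm{A})$ when $\mathrm{A}$ is taken modulo $\ell$ to be the full orbit. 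The honest thing in the write-up is: show $\tau(\textbf{L}_\eta(\texttt{S}\,;\,\mathrm{A}))=\textbf{L}_\eta(\texttt{S}\,;\,\mathrm{A})$ by exhibiting, for each generator $\textbf{ev}_\eta(X^a)$, that its shift is again $\textbf{ev}_\eta(X^{a'})$ for some $a'\in\mathrm{A}$, using the monomial permutation above.

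For part (2), the two identities are pure linear algebra over $\texttt{S}$ once one is in the polynomial picture. Since $\{X^a:a\in\Sigma_\ell\}$ is an $\texttt{S}$-basis of $\textbf{P}(\texttt{S}\,;\,\Sigma_\ell)$, the submodule $\textbf{P}(\texttt{S}\,;\,\mathrm{A})$ is the span of a subset of this basis, hence a free direct summand. For any two subsets of a basis, the span of their union is the sum of their spans (trivially, each generator of one side lies in the other), and the span of their intersection is contained in the intersection of their spans. The reverse inclusion $\textbf{P}(\texttt{S}\,;\,\mathrm{A})\cap\textbf{P}(\texttt{S}\,;\,\mathrm{B})\subseteq\textbf{P}(\texttt{S}\,;\,\mathrm{A}\cap\mathrm{B})$ is where one uses that these are coordinate subspaces with respect to a \emph{fixed} basis: writing any element of the intersection uniquely in the basis, its support lies in $\mathrm{A}$ and in $\mathrm{B}$, hence in $\mathrm{A}\cap\mathrm{B}$. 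Then apply the module isomorphism $\textbf{ev}_\eta$ to both identities; since $\textbf{ev}_\eta$ is an injective $\texttt{S}$-homomorphism it preserves sums and intersections of submodules, giving the stated equalities for $\textbf{L}_\eta$.

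The main obstacle I anticipate is entirely in part (1), specifically pinning down the exact sense in which an arbitrary $\mathrm{A}$ yields a cyclic code — i.e., whether the statement implicitly assumes $\mathrm{A}$ is a union of $q$-cyclotomic cosets (which is the natural hypothesis given the section title and the parallel with the classical theory), or whether the cyclicity is being asserted more broadly via the monomial-permutation argument with indices taken modulo $\ell$. Part (2) should be routine: the only subtle point, the $\subseteq$ direction of the intersection formula, follows immediately from uniqueness of coordinates with respect to the monomial basis, and this transfers verbatim under the monomorphism $\textbf{ev}_\eta$.
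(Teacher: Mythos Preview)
Your handling of part (2) is fine and matches the paper's argument. The problem is part (1): you have misidentified how the cyclic shift $\tau_1$ acts through $\textbf{ev}_\eta$. The map $\textbf{ev}_\eta$ is the \emph{evaluation} isomorphism, not the coefficient isomorphism $\sum c_iX^i\leftrightarrow(c_0,\dots,c_{\ell-1})$; under evaluation the shift does \emph{not} correspond to multiplication by $X^{\pm1}$ on $\textbf{P}(\texttt{S};\Sigma_\ell)$. Concretely, if $\underline{\textbf{c}}_a:=\textbf{ev}_\eta(X^a)=(1,\eta^a,\dots,\eta^{a(\ell-1)})$, then its cyclic shift is $(\eta^{a(\ell-1)},1,\eta^a,\dots,\eta^{a(\ell-2)})=\eta^{-a}\,\underline{\textbf{c}}_a$, which is \emph{not} $\textbf{ev}_\eta(X^{a\pm1})$. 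In other words, each $\underline{\textbf{c}}_a$ is an eigenvector of $\tau_1$ with eigenvalue $\eta^{-a}$, so the $\texttt{S}$-span of \emph{any} collection of these vectors is $\tau_1$-invariant. That is exactly the paper's one-line proof, and it shows cyclicity for an arbitrary subset $\mathrm{A}\subseteq\Sigma_\ell$ with no hypothesis of $q$-closure or shift-closure.

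Your speculation that $\mathrm{A}$ must be a union of $q$-cyclotomic cosets, or that the intended action is $X^a\mapsto X^{qa}$, is therefore unnecessary and stems from conflating the evaluation picture with the usual coefficient picture (they differ by a discrete Fourier transform, which is precisely what diagonalizes the shift). Once you replace your ``monomial permutation'' step by the eigenvector observation $\tau_1(\underline{\textbf{c}}_a)=\eta^{-a}\underline{\textbf{c}}_a$, the proof of (1) is immediate and your write-up for (2) goes through unchanged.
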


\begin{proof} Consider the codeword $\underline{\textbf{c}}_a =
\left(1,\eta^a, \cdots,\eta^{a(\ell-1)}\right).$ Then the shift of
$\underline{\textbf{c}}_a$ is $\eta^{-a}\underline{\textbf{c}}_a.$
Since $\textbf{L}_\eta(\texttt{S}\,;\,\mathrm{A})$ is
$\texttt{S}$-linear, we have
$\eta^{-a}\underline{\textbf{c}}_a\in\textbf{L}_\eta(\texttt{S}\,;\,\mathrm{A}).$
Hence $\textbf{L}_\eta(\texttt{S}\,;\,\mathrm{A})$ is cyclic. It
is clear that
$\textbf{L}_{\eta}(\texttt{S}\,;\,\mathrm{A}\cup\mathrm{B})\supseteq\textbf{L}_\eta(\texttt{S}\,;\,\mathrm{A})+\textbf{L}_\eta(\texttt{S}\,;\,\mathrm{B}).$
The set
$\{\textbf{ev}_\eta(X^a)\;:\;a\in\mathrm{A}\cup(\mathrm{B}\setminus\mathrm{A})\}$
is a free $\texttt{R}$-basis of
$\textbf{L}_{\eta}(\texttt{S}\,;\,\mathrm{A}\cup\mathrm{B})$ and
$\textbf{L}_\eta(\texttt{S}\,;\,\mathrm{A})+\textbf{L}_\eta(\texttt{S}\,;\,\mathrm{B}).$
Hence,
$\textbf{L}_{\eta}(\texttt{S}\,;\,\mathrm{A}\cup\mathrm{B})=\textbf{L}_\eta(\texttt{S}\,;\,\mathrm{A})+\textbf{L}_\eta(\texttt{S}\,;\,\mathrm{B}).$
We leave the last equality as an exercise.
\end{proof}

\begin{Proposition}\label{L-opera} Let $\mathrm{A}$ be a subset of
$\Sigma_\ell$ and $u$ be a positive integer such that
$\texttt{gcd}(\ell,u)=1.$ Then
\begin{enumerate}
    \item $\textbf{L}_{\eta^u}(\texttt{S}\,;\,\mathrm{A})=\textbf{L}_\eta(\texttt{S}\,;\,u\mathrm{A});$
    \item $\textbf{L}_\eta(\texttt{S}\,;\,\mathrm{A})^\perp=\textbf{L}_\eta(\texttt{S}\,;\,\mathrm{A}^{\diamond});$
    \item $\textbf{L}_\eta\left(\texttt{S}\,;\,\complement_q(\mathrm{A})\right)$ is the $\sigma$-closure of $\textbf{L}_\eta(\texttt{S}\,;\,\mathrm{A}).$
\end{enumerate}
\end{Proposition}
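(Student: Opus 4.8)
The plan is to prove the three statements in order, exploiting the fact that every code in sight is free with an explicit monomial basis, so that most equalities reduce to identifying bases. Throughout, write $\underline{\textbf{c}}_a:=\textbf{ev}_\eta(X^a)=(1,\eta^a,\ldots,\eta^{a(\ell-1)})$, so that $\textbf{L}_\eta(\texttt{S};\mathrm{A})$ has free $\texttt{S}$-basis $\{\underline{\textbf{c}}_a:a\in\mathrm{A}\}$, and recall that indices are read modulo $\ell$ since $\eta^\ell=1$.

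For part $(1)$, the key observation is that $\textbf{ev}_{\eta^u}(X^a)=(1,\eta^{ua},\ldots,\eta^{ua(\ell-1)})=\textbf{ev}_\eta(X^{ua\,(\mathrm{mod}\,\ell)})$; since $\texttt{gcd}(\ell,u)=1$, the map $a\mapsto ua\,(\mathrm{mod}\,\ell)$ permutes $\Sigma_\ell$, so as $a$ ranges over $\mathrm{A}$ the vectors $\textbf{ev}_{\eta^u}(X^a)$ range exactly over the basis $\{\underline{\textbf{c}}_b:b\in u\mathrm{A}\}$ of $\textbf{L}_\eta(\texttt{S};u\mathrm{A})$. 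Hence the two free modules have the same basis and are equal. For part $(3)$, I would first note that $\sigma$ acts coordinatewise and fixes $\eta\in\Gamma(\texttt{S})$ (indeed $\sigma$ permutes $\Gamma(\texttt{S})$ fixing $\Gamma(\texttt{R})$, and one checks $\sigma(\eta)=\eta^q$ from $\sigma$ being the $q$-power Frobenius on $\Gamma(\texttt{S})$); therefore $\sigma(\underline{\textbf{c}}_a)=(1,\sigma(\eta)^a,\ldots)=(1,\eta^{qa},\ldots)=\underline{\textbf{c}}_{qa}$. Thus $\sigma\big(\textbf{L}_\eta(\texttt{S};\mathrm{A})\big)=\textbf{L}_\eta(\texttt{S};q\mathrm{A})$, and iterating together with Proposition \ref{cyclic}(2) gives $\sum_{i=0}^{m-1}\sigma^i\big(\textbf{L}_\eta(\texttt{S};\mathrm{A})\big)=\textbf{L}_\eta\big(\texttt{S};\bigcup_{i}q^i\mathrm{A}\big)=\textbf{L}_\eta(\texttt{S};\complement_q(\mathrm{A}))$; by Proposition \ref{tra-inv} the left-hand side is the $\sigma$-closure, which settles $(3)$.

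For part $(2)$, the plan is a dimension-plus-orthogonality argument. On one side, $\dim_\texttt{S}\textbf{L}_\eta(\texttt{S};\mathrm{A}^\diamond)=|\mathrm{A}^\diamond|=|\overline{-\mathrm{A}}|=\ell-|\mathrm{A}|=\dim_\texttt{S}\textbf{L}_\eta(\texttt{S};\mathrm{A})^\perp$ (using that $a\mapsto\ell-a$ is a bijection of $\Sigma_\ell$ and Proposition \ref{dual-type}(1) applied to a free code, whose dual is again free). On the other side, I would compute the pairing of basis vectors directly: for $a\in\mathrm{A}$ and $b\in\mathrm{A}^\diamond$, $\underline{\textbf{c}}_a\cdot\underline{\textbf{c}}_b^{\texttt{T}}=\sum_{j=0}^{\ell-1}\eta^{(a+b)j}$, which is a geometric sum equal to $\ell\cdot 1_\texttt{S}$ if $a+b\equiv 0\,(\mathrm{mod}\,\ell)$ and equal to $\dfrac{\eta^{(a+b)\ell}-1}{\eta^{a+b}-1}=0$ otherwise. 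Since $b\in\mathrm{A}^\diamond=\overline{-\mathrm{A}}$ forces $b\notin-\mathrm{A}$, i.e. $a+b\not\equiv 0$ for every $a\in\mathrm{A}$, all these pairings vanish, so $\textbf{L}_\eta(\texttt{S};\mathrm{A}^\diamond)\subseteq\textbf{L}_\eta(\texttt{S};\mathrm{A})^\perp$; matching dimensions (a free submodule of a free module of the same rank which is also a direct summand must coincide) gives equality.

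The main obstacle is the subtlety in part $(2)$: over a finite chain ring, "same rank and contained" does not automatically give equality the way it does over a field, so I need to know that $\textbf{L}_\eta(\texttt{S};\mathrm{A}^\diamond)$ is a \emph{free} code of the full rank $\ell-|\mathrm{A}|$ and that $\textbf{L}_\eta(\texttt{S};\mathrm{A})^\perp$ is also free of that rank — the latter being exactly the content of Proposition \ref{dual-type}(1) for a free code — and then invoke that a free submodule of rank $r$ inside a free module of rank $r$ over a finite chain ring, where the quotient has no torsion (equivalently, comparing cardinalities via Proposition \ref{dual-type}(2)), must be the whole module. A clean way to finish is simply to compare cardinalities: $|\textbf{L}_\eta(\texttt{S};\mathrm{A}^\diamond)|=|\texttt{S}|^{\,\ell-|\mathrm{A}|}=|\textbf{L}_\eta(\texttt{S};\mathrm{A})^\perp|$, and a containment of finite sets of equal size is an equality. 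The nondegeneracy of the trace form and the structure theory recalled in Section \ref{Sect:2} guarantee the relevant cardinality count, so no genuinely new input is needed beyond bookkeeping with cyclotomic-coset notation.
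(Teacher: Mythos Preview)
Your proposal is correct and follows essentially the same approach as the paper: part (1) by identifying bases under $\textbf{ev}_{\eta^u}(X^a)=\textbf{ev}_\eta(X^{ua})$; part (2) by pairing basis vectors via the geometric sum $\sum_j\eta^{ij}$ and then concluding by comparison of cardinalities (the paper does exactly this, without the detour through ranks); part (3) by computing $\sigma(\underline{\textbf{c}}_a)=\underline{\textbf{c}}_{qa}$ and invoking Proposition~\ref{tra-inv} together with Proposition~\ref{cyclic}(2). One small slip of wording: you write that $\sigma$ ``fixes $\eta\in\Gamma(\texttt{S})$'' and then immediately (correctly) use $\sigma(\eta)=\eta^q$; the former phrase should be that $\sigma$ \emph{stabilizes} the set $\Gamma(\texttt{S})$, not that it fixes $\eta$ pointwise.
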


\begin{Proof} Assume that $\texttt{gcd}(\ell,u)=1.$ Then $\eta$ and $\eta^u$
are $\ell^{\texttt{th}}$-primitive roots of unity. Since
$\{\textbf{ev}_\eta(X^a)\;:\;a\in u\mathrm{A}\}$ is a free
$\texttt{R}$-basis of
$\textbf{L}_{\eta^u}(\texttt{S}\,;\,\mathrm{A}),$ we have
$\textbf{L}_{\eta^u}(\texttt{S}\,;\,\mathrm{A})=\textbf{L}_\eta(\texttt{S}\,;\,u\mathrm{A}).$

  A free $\texttt{S}$-basis of
$\textbf{L}_\eta(\texttt{S}\,;\,\mathrm{A}^{\diamond})$ is
$\{\underline{\textbf{c}}_a\;:\;-a\in \overline{\mathrm{A}}\}$
where
$\underline{\textbf{c}}_a:=(1,\eta^{-a},\cdots,\eta^{-a(\ell-1)})\in\textbf{L}_\eta(\texttt{S}\,;\,\mathrm{A}^{\diamond}).$
Then for all $b\in \mathrm{A},$
$\underline{\textbf{c}}_b:=(1,\eta^{b},\cdots,\eta^{b(\ell-1)})\in\textbf{L}_\eta(\texttt{S}\,;\,\mathrm{A}),$
we have
$\underline{\textbf{c}}_b\underline{\textbf{c}}_a^{\texttt{tr}}=\sum\limits_{j=0}^{\ell-1}\eta^{(b-a)j}.$
It is easy to check that $\sum\limits_{j=0}^{\ell-1}\eta^{ij}=0,$
when $i\not\equiv\, 0(\texttt{mod}\,\ell).$ Since
 $0<b-a<\ell,$ we have $\underline{\textbf{c}}_b\underline{\textbf{c}}_a^{\texttt{tr}}=0.$ So
$\textbf{L}_\eta(\texttt{S}\,;\,\mathrm{A}^{\diamond})\subseteq\textbf{L}_\eta(\texttt{S}\,;\,\mathrm{A})^\perp.$
Comparison of cardinality yields
$\textbf{L}_\eta(\texttt{S}\,;\,\mathrm{A})^\perp=\textbf{L}_\eta(\texttt{S}\,;\,\mathrm{A}^{\diamond}).$
Finally,
$\sigma(\textbf{L}_\eta(\texttt{S}\,;\,\mathrm{A}))=\textbf{L}_\eta(\texttt{S}\,;\,q\mathrm{A}).$
    So by Proposition\,\ref{tra-inv}, we have
$$\widetilde{\textbf{L}_\eta(\texttt{S}\,;\,\mathrm{A})}=\sum\limits_{i=0}^{m-1}\textbf{L}_\eta(\texttt{S}\,;\,q^i\mathrm{A})=\textbf{L}_\eta\left(\texttt{S}\,;\,\overset{m-1}{\underset{i=0}{\bigcup}}q^i\mathrm{A}\right).$$
    Since $\complement_q(\mathrm{A})=\overset{m-1}{\underset{i=0}{\bigcup}}q^i\mathrm{A},$ we obtain
    $\widetilde{\textbf{L}_\eta(\texttt{S}\,;\,\mathrm{A})}=\textbf{L}_\eta(\texttt{S}\,;\,\complement_q(\mathrm{A})).$
\end{Proof}

\subsection{Trace representation of free cyclic linear codes}

We introduce the map trace-evaluation
 $\texttt{Tr}_\texttt{R}^\texttt{S}\circ\textbf{ev}_\eta:\textbf{P}_\eta(\texttt{S};\mathrm{A})\rightarrow \texttt{R}^\ell,$ defined
 by:
 $$\texttt{Tr}_\texttt{R}^\texttt{S}\circ\textbf{ev}_\eta(X^a):=\texttt{Tr}_\texttt{R}^\texttt{S}\left(1,\eta^a,\cdots,\eta^{a(\ell-1)}\right),$$
 for all $a\in \mathrm{\mathrm{A}}.$ In the sequel, we write:
 $\mathrm{\textbf{C}}_\eta(\texttt{R}\,;\,\mathrm{A}):=\texttt{Tr}_\texttt{R}^\texttt{S}\left(\textbf{L}_\eta(\texttt{S}\,;\,\mathrm{A})\right),$
  and  $\mathrm{\textbf{C}}_\eta(\texttt{R}\,;\,\mathrm{A})$ is a free cyclic $\texttt{R}$-linear code of length $\ell.$
 The immediate proprieties of trace representation of free cyclic
 linear codes over finite chain ring are given in the following.

 \begin{Proposition}\label{c(a)}
 Let $\mathrm{\mathrm{A}}, \mathrm{B}$ be two empty subsets of $\Sigma_\ell.$  Then
 \begin{enumerate}
    \item  $\mathrm{\textbf{C}}_\eta(\texttt{R}\,;\,\mathrm{A})=\mathrm{\textbf{C}}_\eta\left(\texttt{R}\,;\,\complement_q(\mathrm{A})\right);$
     \item $\texttt{rank}_{\texttt{S}}(\textbf{L}_\eta(\texttt{S}\,;\,\complement_q(\mathrm{A})))=|\complement_q(\mathrm{A})|$ and $\mathrm{\textbf{C}}_\eta(\texttt{R}\,;\,\mathrm{A})^{\perp}=\mathrm{\textbf{C}}_\eta(\texttt{R}\,;\,\mathrm{A}^{\diamond});$
     \item $\textbf{C}_{\eta}(\texttt{S}\,;\,\mathrm{A}\cup\mathrm{B})=\textbf{C}_\eta(\texttt{S}\,;\,\mathrm{A})+\textbf{C}_\eta(\texttt{S}\,;\,\mathrm{B})$ and $\textbf{C}_{\eta}(\texttt{S}\,;\,\mathrm{A}\cap\mathrm{B})=\textbf{C}_\eta(\texttt{S}\,;\,\mathrm{A})\cap\textbf{C}_\eta(\texttt{S}\,;\,\mathrm{B}).$
      \end{enumerate}
  \end{Proposition}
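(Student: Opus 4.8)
The plan is to deduce Proposition \ref{c(a)} from the already-established facts about likewise Reed-Solomon codes in Proposition \ref{L-opera} and Proposition \ref{cyclic}, by pushing everything through the trace map $\texttt{Tr}_\texttt{R}^\texttt{S}$ and invoking Proposition \ref{tra-inv} and the Delsarte-type theorem from \cite{MNR13}.

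For part (1), I would start from the definition $\mathrm{\textbf{C}}_\eta(\texttt{R}\,;\,\mathrm{A})=\texttt{Tr}_\texttt{R}^\texttt{S}(\textbf{L}_\eta(\texttt{S}\,;\,\mathrm{A}))$ and use that, by Proposition \ref{L-opera}(3), the code $\textbf{L}_\eta(\texttt{S}\,;\,\complement_q(\mathrm{A}))$ is precisely the $\sigma$-closure $\widetilde{\textbf{L}_\eta(\texttt{S}\,;\,\mathrm{A})}$. Then Proposition \ref{tra-inv} gives $\texttt{Tr}_\texttt{R}^\texttt{S}(\widetilde{\textbf{L}_\eta(\texttt{S}\,;\,\mathrm{A})})=\texttt{Tr}_\texttt{R}^\texttt{S}(\textbf{L}_\eta(\texttt{S}\,;\,\mathrm{A}))$, which is exactly the claimed equality $\mathrm{\textbf{C}}_\eta(\texttt{R}\,;\,\mathrm{A})=\mathrm{\textbf{C}}_\eta(\texttt{R}\,;\,\complement_q(\mathrm{A}))$. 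This part is essentially immediate once the earlier results are in place.

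For part (2), the rank statement $\texttt{rank}_{\texttt{S}}(\textbf{L}_\eta(\texttt{S}\,;\,\complement_q(\mathrm{A})))=|\complement_q(\mathrm{A})|$ follows directly from the remark that $\textbf{L}_\eta(\texttt{S}\,;\,\mathrm{B})$ is free with $\texttt{S}$-basis $\{\textbf{ev}_\eta(X^b):b\in\mathrm{B}\}$, applied to $\mathrm{B}=\complement_q(\mathrm{A})$; injectivity of $\textbf{ev}_\eta$ makes this basis genuinely of size $|\complement_q(\mathrm{A})|$. For the duality claim, I would combine Proposition \ref{L-opera}(2), which says $\textbf{L}_\eta(\texttt{S}\,;\,\complement_q(\mathrm{A}))^\perp=\textbf{L}_\eta(\texttt{S}\,;\,\complement_q(\mathrm{A})^{\diamond})$, with the Delsarte-type theorem $\texttt{Tr}_\texttt{R}^\texttt{S}(\mathcal{B}^\perp)=\texttt{Res}_\texttt{R}(\mathcal{B})^\perp$. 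The subtlety here is that the trace code $\mathrm{\textbf{C}}_\eta(\texttt{R}\,;\,\mathrm{A})=\texttt{Tr}_\texttt{R}^\texttt{S}(\textbf{L}_\eta(\texttt{S}\,;\,\mathrm{A}))$ equals the subring subcode $\texttt{Res}_\texttt{R}(\textbf{L}_\eta(\texttt{S}\,;\,\complement_q(\mathrm{A})))$, since by part (1) and Theorem \ref{trace} the $\sigma$-invariant code $\textbf{L}_\eta(\texttt{S}\,;\,\complement_q(\mathrm{A}))$ has its trace code equal to its subring subcode. One then needs the set-theoretic identity $\complement_q(\mathrm{A})^{\diamond}=\complement_q(\mathrm{A}^{\diamond})$, which follows from the Remark stating $\complement_q(\overline{\mathrm{A}})=\overline{\complement_q(\mathrm{A})}$ and $-\complement_q(\mathrm{A})=\complement_q(-\mathrm{A})$; chaining these gives $\mathrm{\textbf{C}}_\eta(\texttt{R}\,;\,\mathrm{A})^\perp=\texttt{Res}_\texttt{R}(\textbf{L}_\eta(\texttt{S};\complement_q(\mathrm{A})))^\perp=\texttt{Tr}_\texttt{R}^\texttt{S}(\textbf{L}_\eta(\texttt{S};\complement_q(\mathrm{A})^{\diamond}))=\mathrm{\textbf{C}}_\eta(\texttt{R};\mathrm{A}^{\diamond})$.

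For part (3) I would apply $\texttt{Tr}_\texttt{R}^\texttt{S}$ to the lattice identities of Proposition \ref{cyclic}(2) (or rather Proposition \ref{L-opera} together with Proposition \ref{cyclic}). The union case is easy: $\texttt{Tr}_\texttt{R}^\texttt{S}$ is additive on codes, so $\mathrm{\textbf{C}}_\eta(\texttt{R}\,;\,\mathrm{A}\cup\mathrm{B})=\texttt{Tr}_\texttt{R}^\texttt{S}(\textbf{L}_\eta(\texttt{S}\,;\,\mathrm{A})+\textbf{L}_\eta(\texttt{S}\,;\,\mathrm{B}))=\mathrm{\textbf{C}}_\eta(\texttt{R}\,;\,\mathrm{A})+\mathrm{\textbf{C}}_\eta(\texttt{R}\,;\,\mathrm{B})$. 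The intersection case is the main obstacle: the trace map does not commute with intersections of arbitrary submodules in general. The way around this is to pass to $q$-closures first — using part (1), it suffices to prove the identity for $q$-closed sets $\mathrm{A},\mathrm{B}$, where $\textbf{L}_\eta(\texttt{S}\,;\,\mathrm{A})$ and $\textbf{L}_\eta(\texttt{S}\,;\,\mathrm{B})$ are $\sigma$-invariant; then by Theorem \ref{trace} the trace code coincides with the subring subcode $\mathcal{B}\cap\texttt{R}^\ell$, and the subring-subcode functor $\mathcal{B}\mapsto\mathcal{B}\cap\texttt{R}^\ell$ visibly commutes with intersection. One also uses that $q$-closure distributes over $\cap$ and $\cup$ (from the Remark) so that $\complement_q(\mathrm{A})\cap\complement_q(\mathrm{B})=\complement_q(\mathrm{A}\cap\mathrm{B})$ when $\mathrm{A},\mathrm{B}$ are already $q$-closed — actually this needs a moment's care, since in general $\complement_q(\mathrm{A}\cap\mathrm{B})\subseteq\complement_q(\mathrm{A})\cap\complement_q(\mathrm{B})$, but for $q$-closed $\mathrm{A},\mathrm{B}$ both sides equal $\mathrm{A}\cap\mathrm{B}$, which is itself $q$-closed. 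Combining $\textbf{L}_\eta(\texttt{S};\mathrm{A}\cap\mathrm{B})=\textbf{L}_\eta(\texttt{S};\mathrm{A})\cap\textbf{L}_\eta(\texttt{S};\mathrm{B})$ with the subring-subcode description then yields the intersection identity for the trace codes, and the general case follows by replacing $\mathrm{A},\mathrm{B}$ with $\complement_q(\mathrm{A}),\complement_q(\mathrm{B})$.
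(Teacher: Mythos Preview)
Your treatment of parts (1) and (2) is essentially the paper's own argument: for (1) you invoke Proposition~\ref{L-opera}(3) and Proposition~\ref{tra-inv} exactly as the paper does, and for (2) you supply the details the paper compresses into a bare citation of Proposition~\ref{L-opera}, namely the identification $\mathrm{\textbf{C}}_\eta(\texttt{R}\,;\,\mathrm{A})=\texttt{Res}_\texttt{R}(\textbf{L}_\eta(\texttt{S}\,;\,\complement_q(\mathrm{A})))$ via Theorem~\ref{trace} and then Delsarte.  The paper in fact gives no proof of part~(3) at all, so there is nothing to compare there.

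There is, however, a genuine gap in your argument for the intersection half of (3).  You correctly prove the identity for $q$-closed $\mathrm{A},\mathrm{B}$ by passing to subring subcodes, but the final sentence ``the general case follows by replacing $\mathrm{A},\mathrm{B}$ with $\complement_q(\mathrm{A}),\complement_q(\mathrm{B})$'' does not go through: part~(1) lets you replace the right-hand side $\mathrm{\textbf{C}}_\eta(\texttt{R}\,;\,\mathrm{A})\cap\mathrm{\textbf{C}}_\eta(\texttt{R}\,;\,\mathrm{B})$ by $\mathrm{\textbf{C}}_\eta(\texttt{R}\,;\,\complement_q(\mathrm{A}))\cap\mathrm{\textbf{C}}_\eta(\texttt{R}\,;\,\complement_q(\mathrm{B}))$, but on the left it only gives $\mathrm{\textbf{C}}_\eta(\texttt{R}\,;\,\complement_q(\mathrm{A}\cap\mathrm{B}))$, and in general $\complement_q(\mathrm{A}\cap\mathrm{B})\subsetneq\complement_q(\mathrm{A})\cap\complement_q(\mathrm{B})$.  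Indeed the intersection identity as literally stated is false for arbitrary subsets: take $\mathrm{A}=\{1\}$ and $\mathrm{B}=\{q\}$ (with $q\not\equiv 1\pmod\ell$); then $\mathrm{A}\cap\mathrm{B}=\emptyset$ so the left side is $\{\mathbf{0}\}$, while by part~(1) both $\mathrm{\textbf{C}}_\eta(\texttt{R}\,;\,\{1\})$ and $\mathrm{\textbf{C}}_\eta(\texttt{R}\,;\,\{q\})$ equal $\mathrm{\textbf{C}}_\eta(\texttt{R}\,;\,\complement_q(1))\neq\{\mathbf{0}\}$.  So your proof for $q$-closed sets is the correct result, and the ``general case'' you try to deduce simply does not hold; the paper's omission of a proof here is covering an imprecisely stated claim.
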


\begin{Proof} Let $\mathrm{A}, \mathrm{B }$ be two subsets of
$\Sigma_\ell.$
\begin{enumerate}
    \item From Proposition\,\ref{tra-inv},
$\mathrm{\textbf{C}}_\eta(\texttt{R}\,;\,\mathrm{A})=\texttt{Tr}(\textbf{L}_\eta(\texttt{S}\,;\,\mathrm{A}))=\texttt{Tr}(\textbf{L}_\eta(\texttt{S}\,;\,\complement_q(\mathrm{A})))=\mathrm{\textbf{C}}_\eta(\texttt{R}\,;\,\complement_q(\mathrm{A})).$
    \item Theorem\,\ref{trace}(3) yields $\mathrm{\textbf{C}}_\eta(\texttt{R}\,;\,\mathrm{A})=\texttt{Tr}(\textbf{L}_\eta(\texttt{S}\,;\,\complement_q(\mathrm{A})))=\texttt{Res}_\texttt{R}(\textbf{L}_\eta(\texttt{S}\,;\,\complement_q(\mathrm{A}))).$
    So $$\texttt{rank}_{\texttt{R}}(\mathrm{\textbf{C}}_\eta(\texttt{R}\,;\,\mathrm{A}))=\texttt{rank}_{\texttt{S}}(\textbf{L}_\eta(\texttt{S}\,;\,\complement_q(\mathrm{A})))=|\complement_q(\mathrm{A})|.$$ From Proposition\,\ref{L-opera},
     $\mathrm{\textbf{C}}_\eta(\texttt{R}\,;\,\mathrm{A})^{\perp}= \mathrm{\textbf{C}}_\eta\left(\texttt{R}\,;\,
     \mathrm{A}^\diamond\right).$
\end{enumerate}
\end{Proof}

The following theorem gives the number of cyclic codes and free
cyclic codes over finite chain rings.

\begin{Lemma}\cite[Theorem 5.1]{BGG14}\label{ct-cy}
Let $\texttt{R}$ be a finite chain ring of invariants $(q,s).$
Then the following holds:
\begin{enumerate}
\item the number of cyclic $\texttt{R}$-linear codes of length
$\ell,$ is equal to $(s+1)^{|\Sigma_\ell(q)|},$
\item the number
of free cyclic $\texttt{R}$-linear codes of length $\ell,$  is
equal to $2^{|\Sigma_\ell(q)|}.$
\end{enumerate}
\end{Lemma}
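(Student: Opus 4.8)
The plan is to reduce both counts to counting ideals in a finite product of finite chain rings, using the polynomial model of cyclic codes together with a Hensel factorization of $X^\ell-1.$ First I would identify $\texttt{R}^\ell$ with the quotient ring $\mathcal{A}:=\texttt{R}[X]/(X^\ell-1)$ via $(\textbf{c}_0,\cdots,\textbf{c}_{\ell-1})\mapsto \textbf{c}_0+\textbf{c}_1X+\cdots+\textbf{c}_{\ell-1}X^{\ell-1},$ under which the cyclic shift $\tau_{1_\texttt{R}}$ becomes multiplication by $X;$ hence cyclic $\texttt{R}$-linear codes of length $\ell$ correspond bijectively to the ideals of $\mathcal{A}.$ Since $\texttt{gcd}(q,\ell)=1,$ the reduction of $X^\ell-1$ modulo $\texttt{J}(\texttt{R})$ is separable over $\mathbb{F}_q,$ so its factorization into distinct monic irreducibles lifts, by Hensel's lemma, to a factorization $X^\ell-1=\prod_{a\in\Sigma_\ell(q)}f_a(X)$ into pairwise coprime monic polynomials of $\texttt{R}[X]$ whose reductions are irreducible over $\mathbb{F}_q,$ with $\deg f_a=|\complement_q(a)|;$ the index set is exactly $\Sigma_\ell(q)$ because the $q$-cyclotomic cosets modulo $\ell$ parametrize the monic irreducible factors of $X^\ell-1$ over $\mathbb{F}_q.$ By the Chinese Remainder Theorem,
$$\mathcal{A}\;\simeq\;\prod_{a\in\Sigma_\ell(q)}\texttt{S}_a,\qquad \texttt{S}_a:=\texttt{R}[X]/(f_a(X)),$$
and each $\texttt{S}_a$ is the (unramified) Galois extension of $\texttt{R}$ of degree $m_a:=|\complement_q(a)|;$ in particular $\texttt{S}_a$ is itself a finite chain ring of invariants $(q^{m_a},s)$ with radical $\texttt{J}(\texttt{S}_a)=\texttt{J}(\texttt{R})\texttt{S}_a.$

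Then, since the ideals of a finite direct product of rings are precisely the direct products of ideals of the factors, the number of ideals of $\mathcal{A}$ equals $\prod_{a\in\Sigma_\ell(q)}N_a,$ where $N_a$ is the number of ideals of $\texttt{S}_a.$ Applying Lemma~\ref{pr-cr} to the chain ring $\texttt{S}_a,$ these ideals are exactly the powers $\texttt{J}(\texttt{S}_a)^t=\theta^t\texttt{S}_a$ with $t\in\{0,1,\cdots,s\},$ so $N_a=s+1$ for every $a.$ This proves (1): the number of cyclic $\texttt{R}$-linear codes of length $\ell$ is $(s+1)^{|\Sigma_\ell(q)|}.$

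For (2), I would write a cyclic code as $\mathcal{C}\simeq\bigoplus_{a\in\Sigma_\ell(q)}\theta^{t_a}\texttt{S}_a$ under the above isomorphism, with $t_a\in\{0,1,\cdots,s\}.$ Fixing an $\texttt{R}$-basis of each $\texttt{S}_a$ (which exists since $\texttt{S}_a$ is $\texttt{R}$-free of rank $m_a$) identifies the $a$-th summand, as an $\texttt{R}$-module, with $\theta^{t_a}\texttt{R}^{m_a};$ the latter has $\theta^{t_a}I_{m_a}$ as a generator matrix in the standard form \eqref{stG}, of type having the single nonzero entry $m_a$ in position $t_a$ when $t_a<s,$ and equal to the zero code when $t_a=s.$ Assembling these block by block shows that $\mathcal{C}$ is $\texttt{R}$-free if and only if every $t_a$ lies in $\{0,s\}$ — two admissible values per cyclotomic coset — whence the number of free cyclic $\texttt{R}$-linear codes of length $\ell$ is $2^{|\Sigma_\ell(q)|}.$ Alternatively, (2) follows from the trace description already developed: the free cyclic $\texttt{R}$-linear codes of length $\ell$ are exactly the codes $\mathrm{\textbf{C}}_\eta(\texttt{R}\,;\,\mathrm{A})$ with $\complement_q(\mathrm{A})=\mathrm{A},$ and there are $2^{|\Sigma_\ell(q)|}$ such $q$-closed subsets of $\Sigma_\ell,$ once this parametrization is checked to be injective.

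The step I expect to be the main obstacle is the Hensel factorization and its bookkeeping: verifying that $X^\ell-1$ reduces to a separable polynomial whose irreducible factors are indexed by the $q$-cyclotomic cosets, that this factorization lifts to pairwise coprime factors of the same degrees over $\texttt{R}$ with local quotients $\texttt{S}_a$ that are again chain rings of nilpotency length $s,$ and that freeness of the code over $\texttt{R}$ is equivalent to freeness in each CRT coordinate. All of this is classical for finite chain rings (see McDonald's monograph), but it should be invoked with care.
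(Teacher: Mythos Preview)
The paper does not supply its own proof of this lemma: it is quoted verbatim from \cite[Theorem~5.1]{BGG14} and then invoked later, in the proof of Theorem~\ref{decy}, to deduce uniqueness of the cyclotomic-partition decomposition by a cardinality match. Your CRT/Hensel argument is correct and is essentially the proof given in the source being cited: when $\gcd(\ell,q)=1$ one lifts the separable factorization of $X^\ell-1$ over $\mathbb{F}_q$ to pairwise coprime basic irreducibles over $\texttt{R}$ indexed by $\Sigma_\ell(q)$, obtains $\texttt{R}[X]/(X^\ell-1)\simeq\prod_{a\in\Sigma_\ell(q)}\texttt{S}_a$ with each $\texttt{S}_a$ a finite chain ring of nilpotency index $s$, and counts ideals (respectively, free summands) factor by factor. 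So your proposal matches the approach of \cite{BGG14} rather than anything done in the present paper.

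One genuine caution: your \emph{alternative} route for part~(2), via the trace parametrization $\mathrm{A}\mapsto\mathrm{\textbf{C}}_\eta(\texttt{R}\,;\,\mathrm{A})$, is circular in the logical order of this paper. The injectivity of that map on $q$-closed subsets---equivalently, the uniqueness clause in Theorem~\ref{decy}---is established in the paper precisely by appealing to Lemma~\ref{ct-cy} to match the two counts $|\Re_\ell(q,s)|$ and $|\texttt{Cy}(\texttt{R},\ell)|$. You therefore cannot use the trace description to prove Lemma~\ref{ct-cy}; keep your CRT argument for both parts.
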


 \begin{Lemma}\label{irr1} Let $\texttt{R}$ be a finite chain ring of invariants $(q,s)$ and $\texttt{S}$ be the Galois extension of $\texttt{R}$
 of degree $m.$ Let $z\in\Sigma_\ell.$
 Set $\texttt{S}=\texttt{R}[\xi],$ $m_z:=|\complement_q(z)|,$ $\eta:=\xi^{\frac{q^m-1}{\ell}}.$ and  $\zeta:=\eta^{-z}.$  Then the map
$$\begin{array}{cccc}
  \psi_{z}: & \texttt{R}[\xi^{m_z}] & \longrightarrow & \mathrm{\textbf{C}}_\eta\left(\texttt{R}\,;\,\{z\}\right) \\
      & \textbf{a} & \longmapsto & \texttt{Tr}_\texttt{R}^\texttt{S}\left(\emph{ev}_\eta(\textbf{a}X^{z})\right) \\
  \end{array}
$$ is an $\texttt{R}$-module isomorphism. Further $\texttt{R}[\xi^{m_z}]$ is the Galois extension of $\texttt{R}$ of degree $m_z$
and $\psi_{z}\circ t_{\zeta}=\tau_1\circ\psi_{z},$ where
$t_\zeta(\textbf{a})=\textbf{a}\zeta,$ for all
$\textbf{a}\in\texttt{R}[\eta].$
\end{Lemma}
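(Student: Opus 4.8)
The plan is to check directly that $\psi_z$ is a well-defined $\texttt{R}$-linear map, identify its image with $\mathrm{\textbf{C}}_\eta(\texttt{R};\{z\})$ by definition, and then prove injectivity by a rank/cardinality count. First I would observe that $\texttt{ev}_\eta(\textbf{a}X^z)=\textbf{a}\,\texttt{ev}_\eta(X^z)\in\textbf{L}_\eta(\texttt{S};\{z\})$ for every $\textbf{a}\in\texttt{R}[\xi^{m_z}]\subseteq\texttt{S}$, so that $\psi_z(\textbf{a})=\texttt{Tr}_\texttt{R}^\texttt{S}(\textbf{a}\,\texttt{ev}_\eta(X^z))$ indeed lands in $\texttt{Tr}_\texttt{R}^\texttt{S}(\textbf{L}_\eta(\texttt{S};\{z\}))=\mathrm{\textbf{C}}_\eta(\texttt{R};\{z\})$; $\texttt{R}$-linearity is immediate from linearity of $\texttt{ev}_\eta$ and of the trace. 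Surjectivity onto $\mathrm{\textbf{C}}_\eta(\texttt{R};\{z\})$ holds because $\{\texttt{ev}_\eta(X^z)\}$ is an $\texttt{S}$-basis of $\textbf{L}_\eta(\texttt{S};\{z\})$, so every element of that code is $\textbf{b}\,\texttt{ev}_\eta(X^z)$ for some $\textbf{b}\in\texttt{S}$; but here I must restrict to $\textbf{a}\in\texttt{R}[\xi^{m_z}]$, and the point is that the trace factors through this smaller subring. Concretely, for $\textbf{b}\in\texttt{S}$ one has $\texttt{Tr}_\texttt{R}^\texttt{S}(\textbf{b}\,\texttt{ev}_\eta(X^z))$ with $j$-th coordinate $\texttt{Tr}_\texttt{R}^\texttt{S}(\textbf{b}\eta^{zj})$; using the tower $\texttt{R}\subseteq\texttt{R}[\xi^{m_z}]\subseteq\texttt{S}$ and transitivity of the trace (Proposition on trace maps, item 3), $\texttt{Tr}_\texttt{R}^\texttt{S}=\texttt{Tr}_\texttt{R}^{\texttt{R}[\xi^{m_z}]}\circ\texttt{Tr}_{\texttt{R}[\xi^{m_z}]}^\texttt{S}$, and since $\eta^{z}\in\texttt{R}[\xi^{m_z}]$ (because $\complement_q(z)$ has size $m_z$, so $\eta^{z}$ is fixed by $\sigma^{m_z}$ hence lies in the fixed ring $\texttt{R}[\xi^{m_z}]$), the inner trace applied to $\textbf{b}\eta^{zj}$ pulls out $\eta^{zj}$. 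This shows the image equals $\texttt{Tr}_{\texttt{R}[\xi^{m_z}]}^\texttt{S}(\texttt{S})\cdot(\text{stuff})$ reduces to traces of elements of $\texttt{R}[\xi^{m_z}]$, giving surjectivity; I would then note $\texttt{R}[\xi^{m_z}]$ is the Galois extension of $\texttt{R}$ of degree $m_z$ — this is standard: $\xi^{m_z}$ generates $\Gamma$ of the unramified extension whose residue field is $\mathbb{F}_{q^{m_z}}$, matching the size of $\complement_q(z)$.

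For injectivity I would count: by Proposition~\ref{c(a)}(2), $\texttt{rank}_\texttt{R}(\mathrm{\textbf{C}}_\eta(\texttt{R};\{z\}))=|\complement_q(z)|=m_z$, and since $\mathrm{\textbf{C}}_\eta(\texttt{R};\{z\})$ is free (it is a $\texttt{Res}_\texttt{R}$ of a free $\sigma$-invariant code, by Theorem~\ref{trace}), $|\mathrm{\textbf{C}}_\eta(\texttt{R};\{z\})|=|\texttt{R}|^{m_z}=q^{sm_z}$; on the other hand $|\texttt{R}[\xi^{m_z}]|=q^{sm_z}$ since it is free of rank $m_z$ over $\texttt{R}$. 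A surjective map between finite sets of equal cardinality is bijective, so $\psi_z$ is an $\texttt{R}$-module isomorphism.

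Finally, the intertwining relation $\psi_z\circ t_\zeta=\tau_1\circ\psi_z$ with $\zeta=\eta^{-z}$: I would just compute both sides coordinatewise. For $\textbf{a}\in\texttt{R}[\xi^{m_z}]$, the $j$-th coordinate of $\psi_z(\textbf{a})$ is $\texttt{Tr}_\texttt{R}^\texttt{S}(\textbf{a}\eta^{zj})$. Then $\tau_1$ (the ordinary cyclic shift) sends this vector to the one whose $j$-th coordinate is $\texttt{Tr}_\texttt{R}^\texttt{S}(\textbf{a}\eta^{z(j-1)})=\texttt{Tr}_\texttt{R}^\texttt{S}(\textbf{a}\eta^{-z}\eta^{zj})=\texttt{Tr}_\texttt{R}^\texttt{S}((\textbf{a}\zeta)\eta^{zj})$, which is exactly the $j$-th coordinate of $\psi_z(\textbf{a}\zeta)=\psi_z(t_\zeta(\textbf{a}))$; one has to be slightly careful with the wrap-around index $j=0\mapsto \ell-1$ and use $\eta^{z\ell}=1$, but that is exactly the reason the constashift becomes an honest shift here. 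The main obstacle, I expect, is the surjectivity/descent step — making precise that traces of $\textbf{b}\,\texttt{ev}_\eta(X^z)$ for $\textbf{b}$ ranging over all of $\texttt{S}$ already exhaust what one gets from $\textbf{a}$ ranging over the subring $\texttt{R}[\xi^{m_z}]$ — together with cleanly justifying that $\eta^{z}$ lies in $\texttt{R}[\xi^{m_z}]$ and that this subring is precisely the degree-$m_z$ Galois extension; once those are nailed down, the cardinality count and the coordinatewise verification of the intertwining are routine.
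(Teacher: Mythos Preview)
Your proof is correct and reaches the same conclusion, but by a different route than the paper. The paper argues \emph{injectivity} first: it observes that $\textbf{a}\in\texttt{Ker}(\psi_z)$ precisely when $\textbf{a}$ lies in $\texttt{R}[\xi^{m_z}]^{\perp_{\texttt{Tr}}}\cap\texttt{R}[\xi^{m_z}]$ (orthogonality with respect to the trace form on $\texttt{S}$), invokes nondegeneracy of that bilinear form to conclude this intersection is zero, and only then passes to the isomorphism by matching the degree $m_z$ of the intermediate extension with the size of the target. You instead establish \emph{surjectivity} first, via transitivity of the trace together with the key observation that $\eta^{z}\in\texttt{R}[\xi^{m_z}]$, so that $\texttt{Tr}_\texttt{R}^\texttt{S}(\textbf{b}\,\eta^{zj})$ for arbitrary $\textbf{b}\in\texttt{S}$ already factors through the subring; injectivity then follows from the cardinality equality $|\texttt{R}[\xi^{m_z}]|=q^{sm_z}=|\mathrm{\textbf{C}}_\eta(\texttt{R};\{z\})|$, which you extract from Proposition~\ref{c(a)} and Theorem~\ref{trace}. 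Your approach makes the descent step from $\texttt{S}$ to the intermediate ring explicit---precisely the point you correctly flag as the main obstacle---whereas the paper's trace-form argument is terser but leaves the identification of the $\texttt{R}$-span of the $\eta^{zj}$ with $\texttt{R}[\xi^{m_z}]$ implicit. Your coordinatewise verification of $\psi_z\circ t_\zeta=\tau_1\circ\psi_z$ is likewise more concrete than the paper's one-line remark linking that identity to the cyclicity of $\mathrm{\textbf{C}}_\eta(\texttt{R};\{z\})$.
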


\begin{proof} It is
clear that  $a\in\texttt{Ker}(\psi_{z})$ if and only if
$a\in\texttt{R}[\xi^{m_z}]^{\perp_{\texttt{Tr}}}\cap\texttt{R}[\xi^{m_z}],$
where duality $\perp_{\texttt{Tr}}$ is with respect to trace form.
As the trace bilinear form is nondegenerate, we have
$\texttt{S}=\texttt{R}[\xi^{m_z}]^{\perp_{\texttt{Tr}}}\oplus\texttt{R}[\xi^{m_z}]$
and $\texttt{Ker}(\psi_{z})=\{0\}.$ Hence $\psi_{z}$ is an
$\texttt{R}$-module monomorphism. We remark that,
$\mathrm{\textbf{C}}_\eta(\texttt{R}\,;\,\{z\})$ is cyclic, if and
only if $\psi_{z}\circ t_{\zeta}=\tau_1\circ\psi_{z},$ for all
$a\in\texttt{R}[\eta].$ Finally, we have
$\texttt{S}=\texttt{R}[\xi],$ so $\texttt{R}[\xi^{m_z}]$ is the
Galois extension of $\texttt{R}$ of degree $m_z.$ Hence,
$\psi_{z}$ is an $\texttt{R}$-module isomorphism.
\end{proof}

\begin{Definition} A non trivial cyclic $\texttt{R}$-linear code
$\mathcal{C}$ is said to be \index{irreducible cyclic linear
code}\emph{irreducible}, if for all $\texttt{R}$-linear cyclic
subcodes $\mathcal{C}_1$ and $\mathcal{C}_2$ of $\mathcal{C},$
such that, $\mathcal{C}=\mathcal{C}_1\oplus\mathcal{C}_2,$ implies
$\mathcal{C}_1=\{\textbf{0}\}$ or $ \mathcal{C}_2=\{\textbf{0}\}.$
\end{Definition}

\begin{Proposition}\label{irr(a)}
The irreducible cyclic $\texttt{R}$-linear codes are precisely
$\theta^t\mathrm{\textbf{C}}_\eta(\texttt{R}\,;\,\{z\})$s, where
$t\in\{0,1,\cdots,s-1\}$ and  $z\in\Sigma_\ell(q).$
\end{Proposition}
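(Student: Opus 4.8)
The plan is to show two things: that each code $\theta^t\mathrm{\textbf{C}}_\eta(\texttt{R}\,;\,\{z\})$ is irreducible, and that these exhaust the irreducible cyclic codes. For the first direction, fix $z\in\Sigma_\ell(q)$ and $t\in\{0,1,\dots,s-1\}$. By Lemma\,\ref{irr1}, the map $\psi_z$ is an $\texttt{R}$-module isomorphism from the Galois extension $\texttt{R}[\xi^{m_z}]$ onto $\mathrm{\textbf{C}}_\eta(\texttt{R}\,;\,\{z\})$ intertwining multiplication by $\zeta=\eta^{-z}$ with the cyclic shift $\tau_1$. Hence cyclic $\texttt{R}$-submodules of $\mathrm{\textbf{C}}_\eta(\texttt{R}\,;\,\{z\})$ correspond bijectively, via $\psi_z$, to the $\zeta$-stable $\texttt{R}$-submodules of $\texttt{R}[\xi^{m_z}]$, i.e.\ to the $\texttt{R}[\zeta]$-submodules, and since $\zeta$ has multiplicative order $\ell/\gcd(\ell,z)$ and $\texttt{R}[\xi^{m_z}]=\texttt{R}[\zeta]$ is itself the Galois extension of degree $m_z$, these submodules are exactly the ideals $\theta^{t'}\texttt{R}[\xi^{m_z}]$ for $t'\in\{0,\dots,s\}$ (using Lemma\,\ref{pr-cr}, as $\texttt{R}[\xi^{m_z}]$ is again a finite chain ring with the same nilpotency index $s$). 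These form a chain, so no nontrivial proper one splits off as a direct summand; multiplying through by $\theta^t$ gives the same conclusion for $\theta^t\mathrm{\textbf{C}}_\eta(\texttt{R}\,;\,\{z\})$, which is therefore irreducible.

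For the converse, let $\mathcal{C}$ be an arbitrary nontrivial cyclic $\texttt{R}$-linear code of length $\ell$. The idea is to decompose $\mathcal{C}$ along the $q$-cyclotomic cosets. Since $\gcd(q,\ell)=1$, the ambient module $\texttt{R}^\ell$ decomposes (as a module over the group algebra, equivalently under $\tau_1$) as the direct sum $\bigoplus_{z\in\Sigma_\ell(q)}\mathrm{\textbf{C}}_\eta(\texttt{R}\,;\,\{z\})$, by Proposition\,\ref{c(a)}(3) applied to the partition $\Sigma_\ell=\bigsqcup_{z\in\Sigma_\ell(q)}\complement_q(z)$ together with the rank count $\sum_z m_z=\ell$. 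Intersecting $\mathcal{C}$ with each summand and using that $\mathcal{C}$ is a submodule stable under $\tau_1$, one gets $\mathcal{C}=\bigoplus_{z\in\Sigma_\ell(q)}\bigl(\mathcal{C}\cap\mathrm{\textbf{C}}_\eta(\texttt{R}\,;\,\{z\})\bigr)$; each intersection is, via $\psi_z^{-1}$, an ideal of $\texttt{R}[\xi^{m_z}]$, hence of the form $\theta^{t_z}\mathrm{\textbf{C}}_\eta(\texttt{R}\,;\,\{z\})$ for some $t_z\in\{0,\dots,s\}$. If $\mathcal{C}$ is irreducible, at most one $t_z$ can be strictly less than $s$ (otherwise the decomposition would exhibit a nontrivial splitting), so $\mathcal{C}=\theta^{t}\mathrm{\textbf{C}}_\eta(\texttt{R}\,;\,\{z\})$ for a single $z$ and a single $t<s$, as claimed.

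The step that needs the most care is justifying that the decomposition $\texttt{R}^\ell=\bigoplus_{z}\mathrm{\textbf{C}}_\eta(\texttt{R}\,;\,\{z\})$ is a genuine internal direct sum of cyclic codes and that an arbitrary $\tau_1$-stable submodule $\mathcal{C}$ respects it. Over a field this is the classical CRT decomposition $\mathbb{F}_q[X]/(X^\ell-1)\cong\prod_z\mathbb{F}_q[X]/(f_z)$; over the chain ring $\texttt{R}$ one lifts the idempotents through $\pi$ (possible since $\gcd(q,\ell)=1$ makes $X^\ell-1$ separable mod $\theta$, so its factorization lifts by Hensel), obtaining orthogonal idempotents in $\texttt{R}[X]/(X^\ell-1)$ whose images are the $\mathrm{\textbf{C}}_\eta(\texttt{R}\,;\,\{z\})$. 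Once the idempotents are in hand, any submodule $\mathcal{C}$ satisfies $\mathcal{C}=\bigoplus_z e_z\mathcal{C}$ automatically, and $e_z\mathcal{C}=\mathcal{C}\cap\mathrm{\textbf{C}}_\eta(\texttt{R}\,;\,\{z\})$. I would phrase this via Lemma\,\ref{irr1} and Proposition\,\ref{c(a)} rather than invoking Hensel explicitly: Proposition\,\ref{c(a)}(3) already gives the sum decomposition at the level of the $\mathrm{\textbf{C}}_\eta$'s, and a rank comparison (each $\mathrm{\textbf{C}}_\eta(\texttt{R}\,;\,\{z\})$ is free of rank $m_z$ by Proposition\,\ref{c(a)}(2), and $\sum_z m_z=\ell=\texttt{rank}_\texttt{R}(\texttt{R}^\ell)$) forces directness. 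The remaining routine point — that $\mathcal{C}$ intersected with a summand is an ideal of the chain ring $\texttt{R}[\xi^{m_z}]$ under $\psi_z^{-1}$ — follows because $\psi_z^{-1}(\mathcal{C}\cap\mathrm{\textbf{C}}_\eta(\texttt{R}\,;\,\{z\}))$ is an $\texttt{R}$-submodule closed under multiplication by $\zeta$, hence under multiplication by $\texttt{R}[\zeta]=\texttt{R}[\xi^{m_z}]$, and ideals of a finite chain ring are totally ordered by Lemma\,\ref{pr-cr}.
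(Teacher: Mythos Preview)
Your argument is correct and follows the same overall strategy as the paper: use Lemma~\ref{irr1} to see that the cyclic subcodes of $\mathrm{\textbf{C}}_\eta(\texttt{R}\,;\,\{z\})$ form a chain (hence each $\theta^t\mathrm{\textbf{C}}_\eta(\texttt{R}\,;\,\{z\})$ is irreducible), and then use a direct-sum splitting to force any irreducible cyclic code into a single component.

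The one difference worth flagging is in the converse step. The paper does not decompose all of $\texttt{R}^\ell$ up front; instead it passes from an irreducible $\mathcal{C}$ to the free cyclic code $\texttt{Quot}_{s-1}(\mathcal{C}):=\{\underline{\textbf{c}}\in\texttt{R}^\ell:\theta^{s-1}\underline{\textbf{c}}\in\mathcal{C}\}$, identifies that with some $\mathrm{\textbf{C}}_\eta(\texttt{R}\,;\,\mathrm{A})$ via the count in Lemma~\ref{ct-cy}, and then splits $\mathrm{A}$ to reach a contradiction if $|\mathrm{A}|>1$. Your route---decomposing $\texttt{R}^\ell=\bigoplus_{z}\mathrm{\textbf{C}}_\eta(\texttt{R}\,;\,\{z\})$ by rank comparison and then intersecting---is exactly the mechanism the paper uses afterwards in Theorem~\ref{decy}, so you have in effect merged the two results. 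This buys you a self-contained argument that does not invoke the external enumeration of free cyclic codes, at the cost of having to justify (as you carefully do) that a $\tau_1$-stable submodule respects the idempotent decomposition. The paper's $\texttt{Quot}_{s-1}$ device is shorter on the page but leans on Lemma~\ref{ct-cy}; your version is more explicit about why the splitting descends to $\mathcal{C}$.
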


\begin{Proof} By Lemma \ref{irr1}, the cyclic $\texttt{R}$-linear code $\mathrm{\textbf{C}}_\eta(\texttt{R}\,;\,\{z\}))$  and all the
$\texttt{R}$-linear cyclic subcodes are irreducible.  Let
$\mathcal{C}$ be an irreducible cyclic $\texttt{R}$-linear code.
Then the $\texttt{R}$-linear code
$\texttt{Quot}_{s-1}(\mathcal{C}):=\left\{\textbf{c}\in\texttt{R}^\ell\;:\;\theta^{s-1}\textbf{c}\in\mathcal{C}\right\}$
is cyclic and free, and so
$\texttt{Quot}_{s-1}(\mathcal{C})=\mathrm{\textbf{C}}_\eta(\texttt{R}\,;\,\mathrm{A})$
for some $\mathrm{A}\subset\Sigma_\ell(q)$ and
$\mathrm{A}\neq\emptyset.$ Assume that $|\mathrm{A}|>1.$ Then
$\mathrm{\textbf{C}}_\eta(\texttt{R}\,;\,\mathrm{A})=\mathrm{\textbf{C}}_\eta(\texttt{R}\,;\,\mathrm{A}_1)\oplus\mathrm{\textbf{C}}_\eta(\texttt{R}\,;\,\mathrm{A}_2)$
where $\mathrm{A}_1\cap \mathrm{A}_2=\emptyset,$
$\mathrm{A}_1\neq\emptyset$ and $\mathrm{A}_2\neq\emptyset.$ We
have
$\mathcal{C}\cap\mathrm{\textbf{C}}_\eta(\texttt{R}\,;\,\mathrm{A}_1)\neq\{\textbf{0}\}$
and
$\mathcal{C}\cap\mathrm{\textbf{C}}_\eta(\texttt{R}\,;\,\mathrm{A}_2)\neq\{\textbf{0}\}.$
Therefore
$\mathcal{C}=(\mathcal{C}\cap\mathrm{\textbf{C}}_\eta(\texttt{R}\,;\,\mathrm{A}_1))\oplus(\mathcal{C}\cap\mathrm{\textbf{C}}_\eta(\texttt{R}\,;\,\mathrm{A}_2)).$
It is impossible, because $\mathcal{C}$ be an irreducible. So
$|\mathrm{A}|=1.$ Now,
$\mathcal{C}\subseteq\mathrm{\textbf{C}}_\eta(\texttt{R}\,;\,\{z\}),$
it follows that
$\mathcal{C}=\theta^t\mathrm{\textbf{C}}_\eta(\texttt{R}\,;\,\{z\}),$
for some $t\in\{0,1,\cdots,s-1\}.$
\end{Proof}

We set $\Sigma_\ell(q)$ a set of representatives of each
$q$-cyclotomic cosets modulo $\ell.$ An $(q,s)$-\index{cyclotomic
partition}\emph{cyclotomic partition} modulo $\ell,$ is the
$(s+1)$-tuple $(\mathrm{A}_0,\mathrm{A}_1,\cdots,\mathrm{A}_s)$
with the property
$\mathrm{A}_t=\complement_q\left(\lambda^{-1}(\{t\})\right),$
where $\lambda:\Sigma_\ell(q)\rightarrow\{0,1,\cdots, s\}$ is a
map. Denoted by
$$\Re_\ell(q,s):=\left\{(\mathrm{A}_0,\mathrm{A}_1,\cdots,\mathrm{A}_s) \;\;:\; \left(\exists
\lambda\in\{0,1,\cdots,s\}^{\Sigma_\ell(q)}\right)\left(\mathrm{A}_t=\lambda^{-1}(\{t\})\right)\right\}$$
the set of $(q,s)$-cyclotomic partitions modulo $\ell,$ and
$\texttt{Cy}(\texttt{R},\ell)$ the set of cyclic
$\texttt{R}$-linear codes of length $\ell.$ We have
$|\Re_\ell(q,s)|=(s+1)^{|\Sigma_\ell(q)|}.$

\begin{Example} We take $\;\ell=20, q=3$ and $s=2.$ Then
$|\Sigma_\ell(q)|=13$ and $|\Re_\ell(q,s)|=3^{7}.$ An
$(q,s)$-cyclotomic partition modulo $\ell,$ is
$\underline{\mathrm{A}}:=\left(\complement_q(\{0,1,2\}),\complement_q(\{5,11\}),\complement_q(\{4,10\})\right).$
\end{Example}

\begin{Theorem}\label{decy} Any cyclic $\texttt{R}$-linear code $\mathcal{C}$ there exists a unique $\mathrm{A}:=(\mathrm{A}_0,\mathrm{A}_1,\cdots,\mathrm{A}_s)\in\Re_\ell(q,s)$
such that $\mathcal{C}=\mathrm{\textbf{C}}_\texttt{R}(\mathrm{A})$
and
$\mathrm{\textbf{C}}_\texttt{R}(\mathrm{A})=\overset{s-1}{\underset{t=0}{\bigoplus}}\theta^{t}\mathrm{\textbf{C}}_\eta(\texttt{R}\,;\,\mathrm{A}_{t}).$
Moreover, the type of
$\mathrm{\textbf{C}}_\texttt{R}(\underline{\mathrm{A}})$ is
$$(|\complement_q(\mathrm{A}_0)|,|\complement_q(\mathrm{A}_1)|,\cdots,|\complement_q(\mathrm{A}_{s-1})|),$$ for
some
$\underline{\mathrm{A}}:=(\mathrm{A}_0,\mathrm{A}_1,\cdots,\mathrm{A}_s)\in\Re_\ell(q,s).$
\end{Theorem}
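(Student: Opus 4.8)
The plan is to establish the theorem in three stages: existence of the decomposition, the direct‑sum formula, and the type computation. First I would fix an arbitrary cyclic $\texttt{R}$-linear code $\mathcal{C}$ and invoke Proposition\,\ref{irr(a)}: $\mathcal{C}$, being a sum of its irreducible cyclic subcodes, is a sum of codes of the form $\theta^{t}\mathrm{\textbf{C}}_\eta(\texttt{R}\,;\,\{z\})$ with $t\in\{0,1,\cdots,s-1\}$ and $z\in\Sigma_\ell(q)$. For each representative $z\in\Sigma_\ell(q)$, let $t(z)$ be the \emph{smallest} $t$ such that $\theta^{t}\mathrm{\textbf{C}}_\eta(\texttt{R}\,;\,\{z\})\subseteq\mathcal{C}$ (setting $t(z)=s$ if no such $t$ exists, i.e.\ the component is trivial), and define $\lambda:\Sigma_\ell(q)\to\{0,1,\cdots,s\}$ by $\lambda(z)=t(z)$. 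This $\lambda$ determines $\underline{\mathrm{A}}:=(\mathrm{A}_0,\cdots,\mathrm{A}_s)\in\Re_\ell(q,s)$ via $\mathrm{A}_t=\complement_q(\lambda^{-1}(\{t\}))$, and one sets $\mathrm{\textbf{C}}_\texttt{R}(\underline{\mathrm{A}}):=\bigoplus_{t=0}^{s-1}\theta^{t}\mathrm{\textbf{C}}_\eta(\texttt{R}\,;\,\mathrm{A}_{t})$.

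Next I would verify that this sum is genuinely direct and equals $\mathcal{C}$. Directness follows because the codes $\mathrm{\textbf{C}}_\eta(\texttt{R}\,;\,\{z\})$ for distinct $z\in\Sigma_\ell(q)$ sit inside the free cyclic code $\mathrm{\textbf{C}}_\eta(\texttt{R}\,;\,\Sigma_\ell(q))=\texttt{R}^\ell$ as a direct sum of free submodules indexed by the $q$-cyclotomic cosets (this is implicit in Lemma\,\ref{irr1} together with the decomposition $\texttt{S}=\bigoplus_z\texttt{R}[\xi^{m_z}]$-type splitting), so multiplying the $z$-component by $\theta^{t(z)}$ keeps the sum over distinct cosets direct, and within a fixed coset only one power of $\theta$ appears. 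The inclusion $\mathrm{\textbf{C}}_\texttt{R}(\underline{\mathrm{A}})\subseteq\mathcal{C}$ is immediate from the choice of $t(z)$; the reverse inclusion holds because any irreducible subcode $\theta^{t}\mathrm{\textbf{C}}_\eta(\texttt{R}\,;\,\{z\})\subseteq\mathcal{C}$ must have $t\geq t(z)$, hence lies in $\theta^{t(z)}\mathrm{\textbf{C}}_\eta(\texttt{R}\,;\,\{z\})\subseteq\mathrm{\textbf{C}}_\texttt{R}(\underline{\mathrm{A}})$, and $\mathcal{C}$ is generated by such subcodes. Uniqueness of $\underline{\mathrm{A}}$ then follows since $\lambda$ is recovered from $\mathcal{C}$ intrinsically as the minimal-exponent function just described; two distinct cyclotomic partitions yield codes differing in some component.

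For the type, I would use Proposition\,\ref{c(a)}(2), which gives $\texttt{rank}_\texttt{R}\bigl(\mathrm{\textbf{C}}_\eta(\texttt{R}\,;\,\mathrm{A}_t)\bigr)=|\complement_q(\mathrm{A}_t)|$ and that this code is free. Writing each free component in standard form and stacking the $\theta^{t}\mathrm{\textbf{C}}_\eta(\texttt{R}\,;\,\mathrm{A}_t)$, $t=0,\cdots,s-1$, one gets a generator matrix for $\mathrm{\textbf{C}}_\texttt{R}(\underline{\mathrm{A}})$ whose block at level $t$ contributes $|\complement_q(\mathrm{A}_t)|$ rows scaled by $\theta^{t}$; since the column supports of the distinct cosets are disjoint, after a permutation this matrix is already in the standard form \eqref{stG}, so the type is exactly $\bigl(|\complement_q(\mathrm{A}_0)|,\cdots,|\complement_q(\mathrm{A}_{s-1})|\bigr)$. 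The main obstacle is making the "disjoint supports / direct sum of free modules" claim rigorous: one must show that the $\mathrm{\textbf{C}}_\eta(\texttt{R}\,;\,\{z\})$ over $z\in\Sigma_\ell(q)$ not only sum to $\texttt{R}^\ell$ but do so as an \emph{internal direct sum} of free $\texttt{R}$-modules aligned with a common coordinate decomposition — this is what lets the stacked matrix be simultaneously brought to standard form and guarantees directness survives multiplication by the various $\theta^{t}$. This should follow from Lemma\,\ref{irr1} applied coset-by-coset and the ring decomposition $\texttt{R}[X]/(X^\ell-1)\cong\prod_{z\in\Sigma_\ell(q)}\texttt{R}[\xi^{m_z}]$, but it needs to be spelled out carefully.
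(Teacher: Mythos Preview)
Your argument is correct and follows essentially the same route as the paper, with two organizational differences worth noting. First, the paper begins from the ambient decomposition $\texttt{R}^\ell=\bigoplus_{z\in\Sigma_\ell(q)}\mathrm{\textbf{C}}_\eta(\texttt{R}\,;\,\{z\})$ (obtained directly from Proposition~\ref{c(a)}(3) applied to the partition of $\Sigma_\ell$ into $q$-cyclotomic cosets), then sets $\mathcal{C}_z:=\mathcal{C}\cap\mathrm{\textbf{C}}_\eta(\texttt{R}\,;\,\{z\})$ and invokes Proposition~\ref{irr(a)} to identify $\mathcal{C}_z=\theta^{t_z}\mathrm{\textbf{C}}_\eta(\texttt{R}\,;\,\{z\})$. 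This ordering dispatches your ``main obstacle'' at the outset: directness is inherited from the ambient splitting, and no separate argument about building $\mathcal{C}$ from irreducible subcodes is needed. Second, for uniqueness the paper uses a counting argument---the map $\underline{\mathrm{A}}\mapsto\mathrm{\textbf{C}}_\texttt{R}(\underline{\mathrm{A}})$ is surjective onto cyclic codes and $|\Re_\ell(q,s)|=(s+1)^{|\Sigma_\ell(q)|}$ equals the number of cyclic $\texttt{R}$-linear codes by Lemma~\ref{ct-cy}---whereas you recover $\lambda$ intrinsically; both are valid, and yours is arguably more self-contained.

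One phrasing to correct: in your type argument, the codes $\mathrm{\textbf{C}}_\eta(\texttt{R}\,;\,\{z\})$ do \emph{not} have disjoint column supports in $\texttt{R}^\ell$ (every nonzero codeword has full support). What is true is that they are complementary free direct summands of $\texttt{R}^\ell$, so after an $\texttt{R}$-linear change of coordinates the summands occupy disjoint blocks; the type, being an invariant of the module structure, can then be read off as you state. The paper simply asserts that the direct sum $\bigoplus_{t}\theta^{t}\mathrm{\textbf{C}}_\eta(\texttt{R}\,;\,\mathrm{A}_t)$ with each factor free of rank $|\complement_q(\mathrm{A}_t)|$ ``gives the type,'' which amounts to the same observation.
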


\begin{Proof} Let $\mathcal{C}$ be an cyclic $\texttt{R}$-linear code of length $\ell.$
From Proposition\,\ref{c(a)}, we have
$\texttt{R}^\ell=\underset{z\in\Sigma_\ell(q)}{\bigoplus}\mathrm{\textbf{C}}_\eta(\texttt{R}\,;\,\{z\})$
and $\mathrm{\textbf{C}}_\eta(\texttt{R}\,;\,\{z\})$'s are free
irreducible cyclic $\texttt{R}$-linear codes. Therefore
$\mathcal{C}=\underset{z\in\Sigma_\ell(q)}{\bigoplus}\mathcal{C}_z,$
where
$\mathcal{C}_z=\mathrm{\textbf{C}}_\eta(\texttt{R}\,;\,\{z\})\cap\mathcal{C}.$
From Proposition\,\ref{irr(a)},
$\mathcal{C}_z=\theta^{t_z}\mathrm{\textbf{C}}_\eta(\texttt{R}\,;\,\{z\}),$
where $t_z\in\{0,1,\cdots,s\}.$ Hence
$$\mathcal{C}=\underset{z\in\Sigma_\ell(q)}{\bigoplus}\theta^{t_z}\mathrm{\textbf{C}}_\eta(\texttt{R}\,;\,\{z\})
=\overset{s-1}{\underset{t=0}{\bigoplus}}\theta^{t}\mathrm{\textbf{C}}_\eta(\texttt{R}\,;\,\mathrm{A}_{t}),$$
where $\mathrm{A}_t=\{z\in\Sigma_\ell\;:\;t_z=t\}.$  Since
$|\Re_\ell(q,s)|=(s+1)^{|\Sigma_\ell(q)|},$ by
Theorem\,\ref{ct-cy}, the uniqueness of
$\underline{\mathrm{A}}:=(\mathrm{A}_0,\mathrm{A}_1,\cdots,\mathrm{A}_s)\in\Re_\ell(q,s)$
such that $\mathcal{C}=\mathrm{\textbf{C}}_\texttt{R}(\mathrm{A})$
is guaranteed.

  Moreover, for
every $t\in\{0,1,\cdots,s-1\},$ the cyclic $\texttt{R}$-linear
code $\mathrm{\textbf{C}}_\eta(\texttt{R}\,;\,\mathrm{A}_{t})$ is
free and
$\texttt{rank}_\texttt{R}(\mathrm{\textbf{C}}_\eta(\texttt{R}\,;\,\mathrm{A}_{t}))=|\complement_q(\mathrm{A}_{t})|.$
Since the direct sum
$\overset{s-1}{\underset{t=0}{\bigoplus}}\theta^{t}\mathrm{\textbf{C}}_\eta(\texttt{R}\,;\,\mathrm{A}_{t})$
gives the type of
$\mathrm{\textbf{C}}_\texttt{R}(\underline{\mathrm{A}}),$ the type
of $\mathrm{\textbf{C}}_\texttt{R}(\underline{\mathrm{A}})$ is
$(k_0,k_1,\cdots,k_{s-1}),$ where
$k_t:=|\complement_q(\mathrm{A}_t)|,$ for every
$t\in\{0,1,\cdots,s-1\}.$
\end{Proof}

\begin{Proposition} Let
$\underline{\mathrm{A}}:=(\mathrm{A}_0,\mathrm{A}_1,\cdots,\mathrm{A}_s)\in\Re_\ell(q,s)$
and $t\in\{0,1,\cdots,s-1\}.$ Then
$\mathrm{\textbf{C}}_\texttt{R}(\underline{\mathrm{A}})^\perp=\mathrm{\textbf{C}}_\texttt{R}(\underline{\mathrm{A}}^{\widetilde{\diamond}}),$
where
$\underline{\mathrm{A}}^{\widetilde{\diamond}}:=(-\mathrm{A}_s,-\mathrm{A}_{s-1},\cdots,-\mathrm{A}_1,-\mathrm{A}_0).$
\end{Proposition}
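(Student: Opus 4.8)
The plan is to reduce the statement to the known duality formula for free cyclic codes, namely $\mathrm{\textbf{C}}_\eta(\texttt{R}\,;\,\mathrm{A})^{\perp}=\mathrm{\textbf{C}}_\eta(\texttt{R}\,;\,\mathrm{A}^{\diamond})$ from Proposition\,\ref{c(a)}(2), together with the behaviour of $\theta$-powers under the standard scalar product. First I would write, using Theorem\,\ref{decy}, the canonical decomposition $\mathrm{\textbf{C}}_\texttt{R}(\underline{\mathrm{A}})=\bigoplus_{t=0}^{s-1}\theta^{t}\mathrm{\textbf{C}}_\eta(\texttt{R}\,;\,\mathrm{A}_{t})$, where we may assume $\mathrm{A}_0,\dots,\mathrm{A}_s$ partition $\Sigma_\ell$ into $q$-closed pieces (replacing $\mathrm{A}_t$ by $\complement_q(\lambda^{-1}(\{t\}))$ as in the definition of $\Re_\ell(q,s)$). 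The key observation is that if $\underline{\textbf{c}}\in\theta^{t}\mathrm{\textbf{C}}_\eta(\texttt{R}\,;\,\mathrm{A}_{t})$ and $\underline{\textbf{d}}\in\theta^{t'}\mathrm{\textbf{C}}_\eta(\texttt{R}\,;\,\mathrm{A}_{t'})$ with $t+t'\geq s$, then $\underline{\textbf{c}}\cdot\underline{\textbf{d}}^{\texttt{T}}\in\texttt{R}\theta^{s}=\{0\}$ automatically; so the only constraints on a vector $\underline{\textbf{d}}=\sum_{t'}\theta^{t'}\underline{\textbf{d}}_{t'}$ lying in $\mathrm{\textbf{C}}_\texttt{R}(\underline{\mathrm{A}})^\perp$ come from pairings with $t+t'<s$.

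Next I would make this precise by projecting onto the irreducible components. Using the decomposition $\texttt{R}^\ell=\bigoplus_{z\in\Sigma_\ell(q)}\mathrm{\textbf{C}}_\eta(\texttt{R}\,;\,\{z\})$ and the orthogonality of distinct components (which follows from $\mathrm{\textbf{C}}_\eta(\texttt{R}\,;\,\{z\})^\perp=\mathrm{\textbf{C}}_\eta(\texttt{R}\,;\,\{z\}^{\diamond})\supseteq\mathrm{\textbf{C}}_\eta(\texttt{R}\,;\,\{z'\})$ whenever $z'\neq -z$, together with a short argument handling the self-paired index), the dual splits as a direct sum of the duals of the graded pieces supported on each cyclotomic coset. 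Concretely, for a single coset $\complement_q(z)$ the component of $\mathrm{\textbf{C}}_\texttt{R}(\underline{\mathrm{A}})$ is $\theta^{t}\mathrm{\textbf{C}}_\eta(\texttt{R}\,;\,\complement_q(z))$ where $t$ is the unique index with $z\in\mathrm{A}_t$ (and $t=s$ means the zero component); its dual inside $\mathrm{\textbf{C}}_\eta(\texttt{R}\,;\,\complement_q(z))\oplus\mathrm{\textbf{C}}_\eta(\texttt{R}\,;\,\complement_q(-z))$ is $\theta^{s-t}\mathrm{\textbf{C}}_\eta(\texttt{R}\,;\,\complement_q(-z))$ — here one uses that the pairing between $\mathrm{\textbf{C}}_\eta(\texttt{R}\,;\,\complement_q(z))$ and $\mathrm{\textbf{C}}_\eta(\texttt{R}\,;\,\complement_q(-z))$ is a perfect pairing of free $\texttt{R}$-modules (an instance of Proposition\,\ref{c(a)}(2) with $\mathrm{A}=\complement_q(z)$, since $\mathrm{A}^{\diamond}=\overline{-\mathrm{A}}$ and the complementary coset structure), so the annihilator of $\theta^{t}M$ in $M^{*}\cong\mathrm{\textbf{C}}_\eta(\texttt{R}\,;\,\complement_q(-z))$ is exactly $\theta^{s-t}M^{*}$. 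Reassembling over all cosets $z$, the component of $-z$ sits at grading $s-t$, i.e. in $\mathrm{A}_{s-t}$ gets replaced by its opposite; this says precisely $\mathrm{\textbf{C}}_\texttt{R}(\underline{\mathrm{A}})^\perp=\mathrm{\textbf{C}}_\texttt{R}(\underline{\mathrm{A}}^{\widetilde{\diamond}})$ with $\underline{\mathrm{A}}^{\widetilde{\diamond}}=(-\mathrm{A}_s,-\mathrm{A}_{s-1},\dots,-\mathrm{A}_0)$.

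Finally I would cross-check with Proposition\,\ref{dual-type}: the computed type of $\mathrm{\textbf{C}}_\texttt{R}(\underline{\mathrm{A}}^{\widetilde{\diamond}})$ is $(|\complement_q(-\mathrm{A}_s)|,|\complement_q(-\mathrm{A}_{s-1})|,\dots,|\complement_q(-\mathrm{A}_1)|)=(\ell-\sum_{t=0}^{s-1}k_t,\;k_{s-1},\dots,k_1)$, since $|\complement_q(-\mathrm{A}_t)|=|\complement_q(\mathrm{A}_t)|=k_t$ and $|\complement_q(-\mathrm{A}_s)|=\ell-\sum_{t=0}^{s-1}k_t$ because the $\mathrm{A}_t$ partition $\Sigma_\ell$; this matches Proposition\,\ref{dual-type}(1) exactly, which confirms we have the full dual rather than a proper subcode. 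The main obstacle I anticipate is the bookkeeping at the self-paired cyclotomic coset — the index $z$ with $-z\equiv z$ only when $\ell$ is even and $z=\ell/2$, plus $z=0$ — where $\mathrm{\textbf{C}}_\eta(\texttt{R}\,;\,\{z\})$ pairs with itself; there one must check the pairing is still nondegenerate on that one-dimensional (over the appropriate Galois subring) piece and that $\theta^{t}$ annihilates exactly $\theta^{s-t}$ of it, so that the grading $t\mapsto s-t$ with the opposite map (which fixes such $z$) still gives the right answer. Everything else is the standard $\texttt{R}$-module fact that the annihilator of $\theta^{t}\texttt{R}$ in $\texttt{R}$ under multiplication is $\theta^{s-t}\texttt{R}$, applied componentwise.
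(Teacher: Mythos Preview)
Your proposal is correct and reaches the same conclusion, but the route differs from the paper's. The paper argues economically: it first shows the single inclusion $\mathrm{\textbf{C}}_\texttt{R}(\underline{\mathrm{A}}^{\widetilde{\diamond}})\subseteq\mathrm{\textbf{C}}_\texttt{R}(\underline{\mathrm{A}})^\perp$ by checking that each summand $\theta^{s-t}\mathrm{\textbf{C}}_\eta(\texttt{R}\,;\,-\mathrm{A}_t)$ lies in the intersection $\bigcap_{u=0}^{s-1}\bigl(\theta^{s-u}\texttt{R}^\ell+\mathrm{\textbf{C}}_\eta(\texttt{R}\,;\,\mathrm{A}_u^\diamond)\bigr)$, which is visibly contained in the dual; then it invokes Proposition~\ref{dual-type} and Theorem~\ref{decy} to see that both sides have the same type, forcing equality. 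You instead compute the dual directly: you split $\texttt{R}^\ell$ into its irreducible cyclic components, observe that $\mathrm{\textbf{C}}_\eta(\texttt{R}\,;\,\complement_q(z))$ pairs nontrivially only with $\mathrm{\textbf{C}}_\eta(\texttt{R}\,;\,\complement_q(-z))$, and use the elementary fact that the annihilator of $\theta^{t}M$ under a perfect $\texttt{R}$-pairing is $\theta^{s-t}M^{*}$; reassembling gives the dual on the nose, and your type check is then only a confirmation rather than an essential step. Your approach is more structural and explains transparently \emph{why} the grading reverses and the index sets negate, while the paper's is shorter. One small correction to your aside on self-paired cosets: a $q$-cyclotomic coset $\complement_q(z)$ is self-paired exactly when $-z\in\complement_q(z)$, which can happen for many $z$ (for instance for every $z$ when $-1$ is a power of $q$ modulo $\ell$), not only for $z=0$ or $z=\ell/2$; this does not affect your argument, since the annihilator computation works identically in the self-paired case.
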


\begin{Proof} Let $\underline{\mathrm{A}}:=(\mathrm{A}_0,\mathrm{A}_1,\cdots,\mathrm{A}_s)\in\Re_\ell(q,s).$ We have
$\mathrm{\textbf{C}}_\texttt{R}(\underline{\mathrm{A}})^\perp
\supseteq
\bigcap_{u=0}^{s-1}\left(\theta^{s-u}\texttt{R}^\ell+\mathrm{\textbf{C}}_\eta(\texttt{R}\,;\,\mathrm{A}_u^\diamond)\right)$
and
$\theta^{s-t}\mathrm{\textbf{C}}_\eta(\texttt{R}\,;\,-\mathrm{A}_{t})\subseteq\bigcap_{u=0}^{s-1}\left(\theta^{s-u}\texttt{R}^\ell+\mathrm{\textbf{C}}_\eta(\texttt{R}\,;\,\mathrm{A}_u^\diamond)\right),$
for every $t\in\{1,2,\cdots,s\}.$  It follows that
$\mathrm{\textbf{C}}_\texttt{R}(\underline{\mathrm{A}}^{\widetilde{\diamond}})\subseteq\mathrm{\textbf{C}}_\texttt{R}(\underline{\mathrm{A}})^\perp.$
From Proposition\,\ref{dual-type} and Theorem\,\ref{decy},
$\mathrm{\textbf{C}}_\texttt{R}(\underline{\mathrm{A}}^{\widetilde{\diamond}})$
and $
\mathrm{\textbf{C}}_\texttt{R}(\underline{\mathrm{A}})^\perp$ have
the same type, we have
$\mathrm{\textbf{C}}_\texttt{R}(\underline{\mathrm{A}})^\perp=
\mathrm{\textbf{C}}_\texttt{R}(\underline{\mathrm{A}}^{\widetilde{\diamond}}).$
 \end{Proof}

\section{Constacyclic linear codes over a finite chain ring}\label{Sect:5}

  Let $\gamma\in\texttt{R}^\times$ and the multiplicative order of
$\gamma$ is $u.$ We study the structure of contractions of cyclic
$\texttt{R}$-linear codes of length $u\ell.$ In the section, the
usage of the map
\begin{align}
\begin{array}{cccc}
  \wp : & \texttt{R}^\ell & \rightarrow & \texttt{R}^{u\ell} \\
    & \underline{\textbf{c}} & \mapsto & (\gamma^{u-1}\underline{\textbf{c}}\;|\;\gamma^{u-2} \underline{\textbf{c}}\;|\;\cdots\;|\;\gamma\underline{ \textbf{c}},\;|\;\underline{\textbf{c}}),
\end{array}\end{align}
 will be necessary.

\begin{Definition}  Let
$\mathcal{C}$ be an $\texttt{R}$-linear code of length $u\ell$ and
$\mathcal{C}:=\wp(\mathcal{K}).$
\begin{enumerate}
    \item The $\texttt{R}$-linear code
$\mathcal{K}$ is called the \index{contraction}\emph{contraction
of a linear code} of $\mathcal{C}.$
    \item The $\texttt{R}$-linear code
$\mathcal{C}$ is called the \index{cyclic concatenation of a
linear code}\emph{cyclic concatenation} of $\mathcal{K}.$
\end{enumerate}
\end{Definition}

   The contraction of a class of linear cyclic codes over finite
fields have been investigated in \cite{Bie02}. Our contribution is
the generalization of this theory of contraction of cyclic codes
to finite chain rings.

\begin{Lemma}\label{cocy} Let $\gamma\in\texttt{R}^\times$ and $u$ the multiplicative order of
$\gamma.$ Then the map $\wp$ is an $\texttt{R}$-module
monomorphism. Moreover $\wp\circ\tau_\gamma=\tau_1\circ\wp.$
\end{Lemma}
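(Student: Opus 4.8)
The plan is to verify directly that $\wp$ is an injective $\texttt{R}$-module homomorphism and then check the intertwining relation $\wp\circ\tau_\gamma=\tau_1\circ\wp$ by a componentwise computation. The definition of $\wp$ is explicitly $\underline{\textbf{c}}\mapsto(\gamma^{u-1}\underline{\textbf{c}}\,|\,\gamma^{u-2}\underline{\textbf{c}}\,|\,\cdots\,|\,\gamma\underline{\textbf{c}}\,|\,\underline{\textbf{c}})$, so $\texttt{R}$-linearity is immediate: each block is multiplication of $\underline{\textbf{c}}$ by the fixed scalar $\gamma^{u-1-j}$, hence $\wp(\underline{\textbf{a}}+\lambda\underline{\textbf{b}})=\wp(\underline{\textbf{a}})+\lambda\wp(\underline{\textbf{b}})$ for all $\lambda\in\texttt{R}$. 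Injectivity is equally easy: the last block of $\wp(\underline{\textbf{c}})$ is $\underline{\textbf{c}}$ itself, so $\wp(\underline{\textbf{c}})=\textbf{0}$ forces $\underline{\textbf{c}}=\textbf{0}$. (Alternatively one can note that the first block is $\gamma^{u-1}\underline{\textbf{c}}$ and $\gamma^{u-1}$ is a unit.)

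The substantive part is the identity $\wp\circ\tau_\gamma=\tau_1\circ\wp$, and here one must be careful with block boundaries. First I would write $\underline{\textbf{c}}=(\textbf{c}_0,\textbf{c}_1,\cdots,\textbf{c}_{\ell-1})$, so that $\tau_\gamma(\underline{\textbf{c}})=(\gamma\textbf{c}_{\ell-1},\textbf{c}_0,\cdots,\textbf{c}_{\ell-2})$. Then $\wp(\tau_\gamma(\underline{\textbf{c}}))$ consists of the $u$ blocks $\gamma^{u-1-j}\tau_\gamma(\underline{\textbf{c}})$ for $j=0,1,\cdots,u-1$, i.e.\ the $j$-th block is $(\gamma^{u-j}\textbf{c}_{\ell-1},\gamma^{u-1-j}\textbf{c}_0,\cdots,\gamma^{u-1-j}\textbf{c}_{\ell-2})$. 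On the other side, $\wp(\underline{\textbf{c}})$ is the length-$u\ell$ word whose $j$-th block is $(\gamma^{u-1-j}\textbf{c}_0,\cdots,\gamma^{u-1-j}\textbf{c}_{\ell-1})$, and applying the ordinary cyclic shift $\tau_1$ moves the very last coordinate of the whole word — which is $\textbf{c}_{\ell-1}$, the last entry of the $j=u-1$ block (where the scalar $\gamma^{u-1-j}=\gamma^0=1$) — to the front, shifting everything else right by one position. The key bookkeeping observation is that when a boundary is crossed, the entry that moves from the end of block $j$ into the start of block $j-1$ gets its scalar multiplied by $\gamma$, since the scalar attached to block $j-1$ is $\gamma$ times the scalar attached to block $j$; and when the wrap-around occurs, the entry $\textbf{c}_{\ell-1}$ from the $j=u-1$ block (scalar $1$) moves to the front of the $j=0$ block, where it must acquire the scalar $\gamma^{u-1}\cdot\gamma=\gamma^u=1_\texttt{R}$ using the hypothesis that $u$ is the multiplicative order of $\gamma$. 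Comparing the two expressions block by block then shows they agree.

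I expect the main obstacle to be purely notational: keeping track of indices modulo $\ell$ within each block and modulo $u\ell$ across the whole concatenated word, and making the role of the relation $\gamma^u=1_\texttt{R}$ visible at exactly the wrap-around step. A clean way to organize this is to index coordinates of $\texttt{R}^{u\ell}$ by pairs $(j,i)$ with $0\le j\le u-1$ and $0\le i\le \ell-1$, the coordinate $(j,i)$ sitting in position $j\ell+i$; then $\wp(\underline{\textbf{c}})_{(j,i)}=\gamma^{u-1-j}\textbf{c}_i$, while $\tau_1$ sends position $p$ to position $p+1\pmod{u\ell}$ together with multiplication of the wrapped coordinate by $1$, and $\wp(\tau_\gamma(\underline{\textbf{c}}))_{(j,i)}$ equals $\gamma^{u-1-j}(\tau_\gamma\underline{\textbf{c}})_i$, which is $\gamma^{u-1-j}\textbf{c}_{i-1}$ for $i\ge 1$ and $\gamma^{u-1-j}\gamma\textbf{c}_{\ell-1}=\gamma^{u-j}\textbf{c}_{\ell-1}$ for $i=0$. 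Matching these against the coordinates of $\tau_1(\wp(\underline{\textbf{c}}))$ — being mindful that decrementing $i$ past $0$ forces decrementing $j$ (and multiplying by $\gamma$), and decrementing $j$ past $0$ forces the global wrap (and the factor $\gamma^u=1_\texttt{R}$) — finishes the verification. No deeper structural input is needed; only Theorem~\ref{thm-cr}(2) implicitly, in that $\gamma^u=1_\texttt{R}$ is exactly the statement that $u$ is the order of $\gamma$.
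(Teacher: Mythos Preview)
Your proposal is correct and follows essentially the same approach as the paper: a direct blockwise computation of $\tau_1(\wp(\underline{\textbf{c}}))$ and $\wp(\tau_\gamma(\underline{\textbf{c}}))$, with the monomorphism part dismissed as immediate. Your version is in fact more explicit than the paper's about the indexing and about where the relation $\gamma^u=1_\texttt{R}$ enters (at the wrap-around), but the strategy is identical.
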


\begin{Proof} It is clear that $\wp$ is an $\texttt{R}$-module
monomorphism. Let
$\underline{\textbf{c}}:=(\textbf{c}_0,\cdots,\textbf{c}_{\ell-1})\in\texttt{R}^\ell$
and $\tau_1$ the cyclic shift on $\{0,1,\cdots,u\ell-1\}.$ We
have:
\begin{eqnarray*}
  \tau_1(\wp(\underline{\textbf{c}})) &=& \tau_1(\cdots\;|\;\gamma^i \textbf{c}_0,\cdots,\gamma^i \textbf{c}_{\ell-1}\;|\;\gamma^{i-1} \textbf{c}_0,\cdots,\gamma^{i-1} \textbf{c}_{\ell-1}\;|\;\cdots); \\
    &=& (\cdots\;|\;\gamma^{i+1} \textbf{c}_{\ell-1},\gamma^i \textbf{c}_0,\cdots,\gamma^i \textbf{c}_{\ell-2}\;|\;\gamma^{i} \textbf{c}_{\ell-1},\gamma^{i-1} \textbf{c}_0,\cdots,\gamma^{i-1} \textbf{c}_{\ell-2}\;|\;\cdots);\\
    &=& \left(\gamma^{u-1}\tau_\gamma(\underline{\textbf{c}})\;|\;\gamma^{u-2} \tau_\gamma(\underline{\textbf{c}})\;|\;\cdots\;|\;\gamma \tau_\gamma(\underline{\textbf{c}})\;|\;\tau_\gamma(\underline{\textbf{c}})\right);\\
    &=& \wp(\tau_\gamma(\underline{\textbf{c}})).
\end{eqnarray*}
Hence $\wp\circ\tau_\gamma=\tau_1\circ\wp.$
\end{Proof}

\begin{Corollary} Let $\gamma\in\texttt{R}^\times$ and $u$ the multiplicative order of
$\gamma$ and  $\mathcal{K}$ be an $\texttt{R}$-linear code of
length $\ell.$ Then $\mathcal{K}$is $\gamma$-constacyclic if and
only if $\wp(\mathcal{K})$ is cyclic $\texttt{R}$-linear code of
length $u\ell.$ Moreover, $\mathcal{K}$ and $\wp(\mathcal{K})$
have the same type.
\end{Corollary}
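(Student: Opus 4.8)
The plan is to deduce the Corollary directly from Lemma~\ref{cocy}, which already supplies the two facts we need: that $\wp$ is an injective $\texttt{R}$-module homomorphism, and that $\wp\circ\tau_\gamma=\tau_1\circ\wp$. The argument splits naturally into the equivalence and the ``moreover'' clause about types.

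First I would prove the equivalence. Suppose $\mathcal{K}$ is $\gamma$-constacyclic, i.e. $\tau_\gamma(\mathcal{K})=\mathcal{K}$. Then
$\tau_1(\wp(\mathcal{K}))=\wp(\tau_\gamma(\mathcal{K}))=\wp(\mathcal{K})$,
so $\wp(\mathcal{K})$ is a cyclic $\texttt{R}$-linear code of length $u\ell$ (it is $\texttt{R}$-linear because $\wp$ is an $\texttt{R}$-module homomorphism and $\mathcal{K}$ is a submodule). Conversely, suppose $\wp(\mathcal{K})$ is cyclic, i.e. $\tau_1(\wp(\mathcal{K}))=\wp(\mathcal{K})$. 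Then $\wp(\tau_\gamma(\mathcal{K}))=\tau_1(\wp(\mathcal{K}))=\wp(\mathcal{K})=\wp(\mathcal{K})$; since $\tau_\gamma(\mathcal{K})$ is an $\texttt{R}$-submodule of $\texttt{R}^\ell$ and $\wp$ is injective, we may cancel $\wp$ to get $\tau_\gamma(\mathcal{K})=\mathcal{K}$, so $\mathcal{K}$ is $\gamma$-constacyclic. The one point to be careful about is that $\wp(\tau_\gamma(\mathcal{K}))=\wp(\mathcal{K})$ only gives $\tau_\gamma(\mathcal{K})=\mathcal{K}$ because $\wp$ is injective \emph{as a map of sets on submodules}; this is immediate from the monomorphism property in Lemma~\ref{cocy}.

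For the statement on types, the key observation is that $\wp:\mathcal{K}\to\wp(\mathcal{K})$ is an $\texttt{R}$-module isomorphism, so $\mathcal{K}$ and $\wp(\mathcal{K})$ are isomorphic as $\texttt{R}$-modules. The type of an $\texttt{R}$-linear code is an $\texttt{R}$-module invariant: by Proposition~\ref{dual-type}(2) we have $|\mathcal{C}|=q^{\sum_{t=0}^{s-1}(s-t)k_t}$, and more fundamentally the type $(k_0,\dots,k_{s-1})$ is determined by the isomorphism class of the code as a module over the chain ring $\texttt{R}$ (it records the multiplicities in the invariant-factor decomposition $\bigoplus_t (\texttt{R}/\texttt{R}\theta^{s-t})^{k_t}$, by the structure theory behind the standard form~\eqref{stG}). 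Since $\wp$ carries a generator matrix of $\mathcal{K}$ in standard form to a generating set of $\wp(\mathcal{K})$ with the same $\theta$-adic valuations on the pivot rows, $\wp(\mathcal{K})$ has a generator matrix in standard form with the same type tuple. Hence $\mathcal{K}$ and $\wp(\mathcal{K})$ have the same type.

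The main obstacle, such as it is, is purely a matter of bookkeeping: one must articulate clearly why the type is an $\texttt{R}$-module invariant rather than something tied to the ambient length, so that transporting it along the isomorphism $\wp$ is legitimate. Everything else is a one-line consequence of Lemma~\ref{cocy}. A clean way to finish is simply: $\wp$ restricts to an $\texttt{R}$-module isomorphism $\mathcal{K}\xrightarrow{\sim}\wp(\mathcal{K})$, and the type of a linear code depends only on its $\texttt{R}$-module structure (Proposition~\ref{dual-type} and the uniqueness of the standard form), whence the two codes share the same type.
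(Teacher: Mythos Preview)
Your proof is correct and follows essentially the same approach as the paper: both derive the equivalence from the intertwining relation $\wp\circ\tau_\gamma=\tau_1\circ\wp$ of Lemma~\ref{cocy} and the injectivity of $\wp$, and both obtain the type statement from the fact that $\wp$ restricts to an $\texttt{R}$-module isomorphism $\mathcal{K}\to\wp(\mathcal{K})$. Your argument is in fact more complete than the paper's, which only spells out the forward implication and asserts the type equality without elaboration; your explicit converse and your remark that the type is an $\texttt{R}$-module invariant (via the uniqueness of the standard form) fill in exactly what the paper leaves implicit.
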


\begin{Proof} The map $\wp$ is an $\texttt{R}$-module
monomorphism. So $\wp(\mathcal{K})$ is $\texttt{R}$-linear code of
length $u\ell$ and $\mathcal{K},$ $\wp(\mathcal{K})$ have the same
type. From Lemma\,\ref{cocy}, we have
$\wp\circ\tau_\gamma=\tau_1\circ\wp,$ and so $\wp(\mathcal{K})$ is
cyclic.
\end{Proof}

This show how to construct a cyclic $\texttt{R}$-linear code from
a constacyclic $\texttt{R}$-linear code. Now we want to construct
a constacyclic $\texttt{R}$-linear code from a cyclic
$\texttt{R}$-linear code. Let $\mathrm{A}$ be a subset of
$\{0,1,\cdots,u\ell-1\}.$ One denotes
$\mathrm{A}\,(\texttt{mod}\,u):=\{a\,(\texttt{mod}\,u)\;:\;a\in
\mathrm{A}\}.$

\begin{Theorem} Let $u,\ell\in\mathbb{N}$ such that $\texttt{gcd}(u\ell,q)=1.$ Let $\mathrm{A}$ be a subset of $\{0,1,\cdots,u\ell-1\}$ and
$\mathrm{\textbf{C}}_\eta(\texttt{R}\,;\,\mathrm{A})$ be a cyclic
$\texttt{R}$-linear code of length $u\ell.$ Then
$\complement_q(\mathrm{A})\,(\texttt{mod}\,u)=\{\omega\},$ if and
only if
$\mathcal{K}:=\wp^{-1}(\mathrm{\textbf{C}}_\eta(\texttt{R}\,;\,\mathrm{A}))$
is an $\gamma$-constacyclic $\texttt{R}$-linear code of length
$\ell,$ where
$\gamma=\xi^{-\frac{\omega(q^m-1)}{u}\,\texttt{mod}\,u}.$ Moreover
$\mathcal{K}^\perp=\wp^{-1}\left(\mathrm{\textbf{C}}_\eta\left(\texttt{R}\,;\,\mathrm{A}^{\star
u}\right)\right),$ where
$\complement_q(\mathrm{A})\,(\texttt{mod}\,u)=\{\omega\},$ and
$\mathrm{A}^{\star u}:=\left\{a\in\mathrm{A}^\diamond\;:\;a\equiv
-\omega (\texttt{mod}\,u)\right\},$ is an
$\gamma^{-1}$-constacyclic $\texttt{R}$-linear code of length
$\ell.$
\end{Theorem}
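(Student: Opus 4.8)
The plan is to establish the equivalence and the duality statement separately, leaning on the machinery already built for cyclic codes of length $u\ell$ and on the commutation relation $\wp\circ\tau_\gamma=\tau_1\circ\wp$ from Lemma\,\ref{cocy}. First I would recall that by the Corollary following Lemma\,\ref{cocy}, the code $\mathcal{K}:=\wp^{-1}(\mathrm{\textbf{C}}_\eta(\texttt{R}\,;\,\mathrm{A}))$ is $\gamma$-constacyclic (for the appropriate $\gamma$) precisely when $\wp(\mathcal{K})=\mathrm{\textbf{C}}_\eta(\texttt{R}\,;\,\mathrm{A})$ lies in the image of $\wp$, i.e. when $\mathrm{\textbf{C}}_\eta(\texttt{R}\,;\,\mathrm{A})$ is stable under the appropriate constashift-pullback. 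So the real content of the first equivalence is: $\mathrm{\textbf{C}}_\eta(\texttt{R}\,;\,\mathrm{A})\subseteq\wp(\texttt{R}^\ell)$ if and only if $\complement_q(\mathrm{A})\,(\texttt{mod}\,u)$ is a singleton. I would prove this by passing to the evaluation picture: by Proposition\,\ref{c(a)}(1), $\mathrm{\textbf{C}}_\eta(\texttt{R}\,;\,\mathrm{A})=\mathrm{\textbf{C}}_\eta(\texttt{R}\,;\,\complement_q(\mathrm{A}))$, and the code is spanned by trace-evaluations of monomials $X^a$ with $a\in\complement_q(\mathrm{A})$. Writing $\eta=\xi^{(q^m-1)/(u\ell)}$, the position index $j\in\{0,\dots,u\ell-1\}$ decomposes as $j=i\ell+r$ with $0\le r<\ell$ and $0\le i<u$, and $\eta^{aj}$ factors through a power $\eta^{a i \ell}=\xi^{a i (q^m-1)/u}$ that depends only on $a\,(\texttt{mod}\,u)$; this is exactly the pattern $(\gamma^{u-1}\underline{\textbf{c}}\mid\cdots\mid\underline{\textbf{c}})$ appearing in the definition of $\wp$, with $\gamma=\xi^{-\omega(q^m-1)/u}$. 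Hence each generator $\textbf{ev}_\eta(X^a)$ lies in $\wp(\texttt{R}^\ell)$ iff $a\equiv\omega\ (\texttt{mod}\,u)$ for the fixed residue $\omega$, and since $\complement_q(\mathrm{A})$ is a single orbit up to these residues, the whole code lies in $\wp(\texttt{R}^\ell)$ iff all of $\complement_q(\mathrm{A})$ shares one residue mod $u$.

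Once that is settled, the forward direction of the first equivalence gives $\mathcal{K}=\wp^{-1}(\mathrm{\textbf{C}}_\eta(\texttt{R}\,;\,\mathrm{A}))$ well-defined and, via $\wp\circ\tau_\gamma=\tau_1\circ\wp$ together with $\tau_1(\mathrm{\textbf{C}}_\eta(\texttt{R}\,;\,\mathrm{A}))=\mathrm{\textbf{C}}_\eta(\texttt{R}\,;\,\mathrm{A})$ (Proposition\,\ref{cyclic}(1)), we get $\wp(\tau_\gamma(\mathcal{K}))=\tau_1(\wp(\mathcal{K}))=\wp(\mathcal{K})$, so injectivity of $\wp$ forces $\tau_\gamma(\mathcal{K})=\mathcal{K}$. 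The converse direction is the Corollary applied in reverse: if $\mathcal{K}$ is $\gamma$-constacyclic then $\wp(\mathcal{K})$ is cyclic of length $u\ell$, hence equals $\mathrm{\textbf{C}}_\eta(\texttt{R}\,;\,\mathrm{A})$ for some $\mathrm{A}$ with $\complement_q(\mathrm{A})$ supported on one residue class mod $u$ by the evaluation analysis just performed. I would take care to justify that the specific exponent $\gamma=\xi^{-\omega(q^m-1)/u\,\texttt{mod}\,u}$ has multiplicative order exactly $u$, which follows from $\complement_q(\mathrm{A})\,(\texttt{mod}\,u)=\{\omega\}$ with $\gcd(\omega,u)$ dividing the order computation — this is where one must be slightly careful that the code is genuinely $\gamma$-constacyclic and not $\gamma'$-constacyclic for a proper power $\gamma'$ of $\gamma$.

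For the duality statement I would argue as follows. By Proposition\,\ref{c(a)}(2) applied to length $u\ell$, $\mathrm{\textbf{C}}_\eta(\texttt{R}\,;\,\mathrm{A})^\perp=\mathrm{\textbf{C}}_\eta(\texttt{R}\,;\,\mathrm{A}^\diamond)$ where $\mathrm{A}^\diamond=\overline{-\mathrm{A}}$ is taken in $\Sigma_{u\ell}$. The key observation is how the Euclidean inner product on $\texttt{R}^{u\ell}$ restricts under $\wp$: for $\underline{\textbf{a}},\underline{\textbf{b}}\in\texttt{R}^\ell$ one computes $\wp(\underline{\textbf{a}})\cdot\wp(\underline{\textbf{b}})^{\texttt T}=\left(\sum_{i=0}^{u-1}\gamma^{2i}\right)(\underline{\textbf{a}}\cdot\underline{\textbf{b}}^{\texttt T})$ — except that the correct pairing that makes duality work is the twisted one matching a $\gamma$-constacyclic code against a $\gamma^{-1}$-constacyclic code, so I would instead pair $\wp(\underline{\textbf{a}})$ against the analogous embedding built from $\gamma^{-1}$ and observe that the cross terms telescope to a nonzero scalar multiple of $\underline{\textbf{a}}\cdot\underline{\textbf{b}}^{\texttt T}$. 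This identifies $\mathcal{K}^\perp$ (Euclidean dual of $\mathcal{K}$ in $\texttt{R}^\ell$) with the pullback of those codewords of $\mathrm{\textbf{C}}_\eta(\texttt{R}\,;\,\mathrm{A}^\diamond)$ that lie in the image of the $\gamma^{-1}$-embedding, which by the first part of the theorem are exactly those supported on the residue class $-\omega\ (\texttt{mod}\,u)$ — that is, $\mathrm{A}^{\star u}=\{a\in\mathrm{A}^\diamond : a\equiv -\omega\ (\texttt{mod}\,u)\}$. Then $\gamma^{-1}$ has order $u$ as well, so $\mathcal{K}^\perp=\wp^{-1}(\mathrm{\textbf{C}}_\eta(\texttt{R}\,;\,\mathrm{A}^{\star u}))$ is $\gamma^{-1}$-constacyclic by the already-proven part. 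The main obstacle I anticipate is bookkeeping the interaction between the three index systems — positions mod $u\ell$, positions mod $\ell$, and the block index mod $u$ — and pinning down the exact relationship between the inner product on $\texttt{R}^{u\ell}$ and the pairing of $\gamma$- against $\gamma^{-1}$-constacyclic codes on $\texttt{R}^\ell$; getting the scalar factor to be a unit (so that no information is lost) and correctly picking out $\mathrm{A}^{\star u}$ from $\mathrm{A}^\diamond$ rather than something off by a sign in the exponent is the delicate point.
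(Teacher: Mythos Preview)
Your approach to the main equivalence is essentially the paper's: decompose the position index $t=i\ell+j$, factor $\eta^{at}=\eta^{ai\ell}\cdot\eta^{aj}$, and recognise the block pattern of $\wp$. The paper carries this out at the level of individual codewords via Lemma~\ref{irr1}, expanding $\textbf{a}\in\texttt{R}[\xi^{m_z}]$ in a basis and writing each coordinate as a trace; you work directly with the generators $\textbf{ev}_\eta(X^a)$. Both routes reach the same factorisation.

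There is one point you gloss over that the paper at least flags (``since $\beta^{\omega\,\texttt{mod}\,u}\in\texttt{R}$''): the scalar $\eta^{a\ell}=\xi^{a(q^m-1)/u}$ must lie in $\texttt{R}$, not merely in $\texttt{S}$, because otherwise it cannot be pulled outside $\texttt{Tr}_\texttt{R}^\texttt{S}$ and the resulting vector need not land in $\wp(\texttt{R}^\ell)\subset\texttt{R}^{u\ell}$. This is exactly where the hypothesis $\complement_q(\mathrm{A})\,(\texttt{mod}\,u)=\{\omega\}$ does real work: $q$-closure forces $q\omega\equiv\omega\,(\texttt{mod}\,u)$, so $u\mid\omega(q-1)$, hence $\gamma^{q-1}=1$ and $\gamma\in\Gamma(\texttt{R})$. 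Without this step your argument shows only that $\textbf{L}_\eta(\texttt{S}\,;\,\mathrm{A})$ lies in the image of the $\texttt{S}$-linear analogue of $\wp$, which is not enough. You should make this explicit; it also resolves your worry about the order of $\gamma$ being exactly $u$ (in fact it need not be --- the order is $u/\gcd(u,\omega)$ --- but that does not affect the statement, since a $\gamma$-constacyclic code with $\gamma$ of order $u'\mid u$ is still $\gamma$-constacyclic).

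For the dual your route genuinely differs from the paper's. The paper gives only a one-line rank argument: $\mathcal{K}^\perp$ is $\gamma^{-1}$-constacyclic of rank $\ell-|\complement_q(\mathrm{A})|$, hence its cyclic concatenation must have defining set of the right size inside the $(-\omega)$-residue class, which pins down $\mathrm{A}^{\star u}$. Your plan to compute $\wp(\underline{\textbf{a}})\cdot\wp'(\underline{\textbf{b}})^{\texttt{T}}$ directly (with $\wp'$ the $\gamma^{-1}$-embedding) is more explicit and gives a cleaner reason why exactly the $-\omega$ residue class of $\mathrm{A}^\diamond$ survives; the scalar you obtain is $\sum_{i=0}^{u-1}\gamma^{i}\gamma^{-i}=u$, a unit since $\gcd(u,q)=1$, so no information is lost. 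Either argument works; yours is more self-contained, the paper's is terser but leans harder on Theorem~\ref{decy}.
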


\begin{Proof} Let $m$ be the positive integer such that $q^{m}\equiv
1\,(\,\texttt{mod}\,u\ell)$ and $q^{m}\not\equiv
1\,(\,\texttt{mod}\,u\ell).$ Let $\texttt{S}:=\texttt{R}[\xi]$ be
a Galois extension of $\texttt{R}$ of degree $m.$ We set
$\beta:=\xi^{\frac{q^m-1}{u}\,\texttt{mod}\,u},$
$w:=\frac{q^m-1}{u\ell}$ and $\eta:=\xi^w.$ Let
$\mathrm{Z}:=\complement_q(\mathrm{A}),$ where $\mathrm{A}$ is a
subset of $\Sigma_\ell.$ Then
$\mathrm{\textbf{C}}_\eta(\texttt{R}\,;\,\mathrm{Z})=\oplus_{z\in\mathrm{Z}}\mathrm{\textbf{C}}_\eta(\texttt{R}\,;\,\{z\}).$
It is enough to show that
$\mathrm{\textbf{C}}_\eta(\texttt{R}\,;\,\{z\})\subseteq\wp(\texttt{R}^\ell),$
for all $z\in \mathrm{Z}.$ Let $z\in \mathrm{Z},$ we set
$m_z:=|\complement_q(z)|$ and $\zeta:=\eta^{m_z}.$ From Lemma
\,\ref{irr1},
$\mathrm{\textbf{C}}_\eta(\texttt{R}\,;\,\{z\})=\psi_z(\texttt{R}[\xi^{m_z}])=\texttt{Tr}_\texttt{R}^\texttt{S}\left(\emph{ev}_\eta(\texttt{R}[\xi^{m_z}]X^{z})\right).$
Thus for all
$\underline{\textbf{c}}:=(\textbf{c}_0,\cdots,\textbf{c}_{u\ell-1})\in\mathrm{\textbf{C}}_\eta(\texttt{R}\,;\,\{z\}),$
From Lemma\,\ref{irr1}, exist a unique
$\textbf{a}\in\texttt{R}[\xi^{m_z}]$ and  such that
$\underline{\textbf{c}}=\texttt{Tr}_\texttt{R}^\texttt{S}\left(\emph{ev}_\eta(\textbf{a}X^{z})\right).$
Since $\texttt{R}[\xi^{m_z}]$ is the Galois extension of
$\texttt{R}$ of degree $m_z,$ then there exist a unique
$(\textbf{a}_0,\textbf{a}_1,\cdots,\textbf{a}_{m_z-1})$ such that
$\textbf{a}:=\sum\limits_{h=0}^{m_z-1}\textbf{a}_h\xi^{hm_z}\in\texttt{R}[\xi^{m_z}]$
and
$\textbf{c}_t:=\sum\limits_{h=0}^{m_z-1}\textbf{a}_h\texttt{Tr}_\texttt{R}^\texttt{S}\left(\xi^{hm_z+wtz\,\texttt{mod}\,\ell}\right),$
for all $t\in\Sigma_{u\ell}.$ From the euclidian division of
$t\in\{0,1,\cdots,u\ell-1\},$ by $\ell,$ there exists
$(i,j)\in\Sigma_u\times\Sigma_\ell,$ such that $t=i\ell+j.$ We
have:
\begin{eqnarray*}
  \textbf{c}_{i\ell+j}&=& \sum\limits_{h=0}^{m_z-1}\textbf{a}_h\texttt{Tr}_\texttt{R}^\texttt{S}\left(\beta^{zi}\xi^{hm_z+wjz\,\texttt{mod}\,\ell}\right), \text{  since } \beta=\xi^{w\ell\,\texttt{mod}\,u}\text{ and } \beta^{z\,\texttt{mod}\,u}=\beta^{\omega\,\texttt{mod}\,u};\\
    &=& \beta^{\omega i\,\texttt{mod}\,u}\left(\sum\limits_{h=0}^{m_z-1}\textbf{a}_h\texttt{Tr}_\texttt{R}^\texttt{S}\left(\xi^{hm_z+wjz\,\texttt{mod}\,\ell}\right)\right), \text{ since } \beta^{\omega\,\texttt{mod}\,u}\in\texttt{R}; \\
    &=& \beta^{\omega i\,\texttt{mod}\,u}\textbf{x}_j\text{ and } \textbf{x}_j:=\sum\limits_{h=0}^{m_z-1}\textbf{a}_h\texttt{Tr}_\texttt{R}^\texttt{S}\left(\xi^{hm_z+wjz\,\texttt{mod}\,\ell}\right).
\end{eqnarray*}
Thus $\textbf{c}:=(\cdots\;|\;\gamma^i
\textbf{x}_0,\cdots,\gamma^i \textbf{x}_{\ell-1}\;|\;\gamma^{i-1}
\textbf{x}_0,\cdots,\gamma^{i-1} \textbf{x}_{\ell-1}\;|\;\cdots)$
and $\gamma:=\beta^{-\omega}.$ Hence
$\mathrm{\textbf{C}}_\eta(\texttt{R}\,;\,\mathrm{A})\subseteq\wp(\texttt{R}^\ell).$
As $\wp\circ\tau_\gamma=\tau_1\circ\wp,$ it follows that
$\mathcal{K}$ is an $\gamma$-constacyclic $\texttt{R}$-linear code
of length $\ell.$ For sufficiency, it is enough to note that the
above proof is reversible.

As $\mathcal{K}^\perp$ is an $\gamma^{-1}$-constacyclic free
$\texttt{R}$-linear code of rank $\ell-|\mathrm{A}|,$  the cyclic
$\texttt{R}$-linear code which yields $\mathcal{K}^\perp,$ by
contraction must have the definition set $\mathrm{B}$ of size
$|\mathrm{A}|.$
\end{Proof}

\begin{Example} Let $\texttt{R}$ be a finite chain ring of invariants $(q,s)$ where $q=3.$ We take
$\;\ell=28, u=2.$ We set $\mathrm{A}_1:=\complement_q(\{1,7\}),$
$\mathrm{A}_2:=\complement_q(\{1,5,7\}),$ and
$\mathrm{A}_3:=\complement_q(\{1,5,7,11\}).$ We have
$\complement_q(\mathrm{A}_i)\,(\,\texttt{mod}\;2)=\{1\}.$ So we
can set
$\mathcal{K}_i:=\wp^{-1}\left(\mathrm{\textbf{C}}_\eta\left(\texttt{R}\,;\,\mathrm{A}_i\right)\right),$
where $i\in\{1,2,3\}.$ Since $\mathrm{A}_1^{\star 2}=\mathrm{A}_3$
and $\mathrm{A}_2^{\star 2}=\mathrm{A}_2,$ we have
$\mathcal{K}_3=\mathcal{K}_1^\perp$ and $\mathcal{K}_2$ is
self-dual.
\end{Example}

The \emph{Hamming weight} of an $\texttt{R}$-linear code
$\mathcal{C}$ of length $\ell,$  is defined as:
$\texttt{wt}(\mathcal{C}):=\texttt{min}\left\{\texttt{wt}(\underline{\textbf{c}})\,:\,
\textbf{c}\in\mathcal{C}\setminus\{ \textbf{0}\}\right\},$ where
$\texttt{wt}(\underline{\textbf{c}}):=|\{j\in\Sigma_\ell\;:\;\textbf{c}_j\neq
0\}|.$

\begin{Corollary}  Let $u,\ell\in\mathbb{N}$ such that $\texttt{gcd}(u\ell,q)=1.$
Let $\underline{\mathrm{A}}:=(\mathrm{A}_0,\mathrm{A}_1,\cdots,\mathrm{A}_s)\in\Re_{u\ell}(q,s)$ and
$\mathrm{\textbf{C}}_\texttt{R}(\underline{\mathrm{A}})$ be a
cyclic $\texttt{R}$-linear code of length $u\ell$ such that
$\overset{s-1}{\underset{t=0}{\bigcup}}\mathrm{A}_t\,(\,\texttt{mod}\,u)=\{\omega\}.$
Set
$\mathcal{K}:=\wp^{-1}\left(\mathrm{\textbf{C}}_\texttt{R}(\underline{\mathrm{A}})\right).$
Then
\begin{enumerate}
    \item $\mathcal{K}$ is an $\gamma$-constacyclic $\texttt{R}$-linear code of length $\ell,$ where $\gamma=\xi^{-\frac{\omega(q^m-1)}{u}\,\texttt{mod}\,u};$
    \item $\texttt{wt}(\underline{\textbf{c}})=u\cdot\texttt{wt}(\wp^{-1}(\underline{\textbf{c}})),$ for every $\underline{\textbf{c}}\in\mathrm{\textbf{C}}_\texttt{R}(\underline{\mathrm{A}});$
    \item $\mathcal{K}^\perp=\wp^{-1}(\mathrm{\textbf{C}}_\texttt{R}(\underline{\mathrm{A}}^{\star u})),$ where
$\underline{\mathrm{A}}^{\star u}:=(-\mathrm{A}_s^{\star
u},-\mathrm{A}_{s-1},\cdots,-\mathrm{A}_1,-\mathrm{A}_0^{\triangleleft
u})$ with
\begin{itemize}
    \item $\mathrm{A}_s^{\star u}:=\left\{a\in \mathrm{A}_s\;:\;a\equiv -\omega(\,\texttt{mod}\,u)\right\},$
    \item $\mathrm{A}_0^{\triangleleft u}:=\mathrm{A}_0\cup (\mathrm{A}_s\setminus\mathrm{A}_s^{\star u}).$
\end{itemize}
    \end{enumerate}
\end{Corollary}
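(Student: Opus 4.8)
The plan is to reduce everything to the single–cyclotomic–coset case already handled in the preceding theorem, and then paste the pieces together using the direct–sum decomposition of $\mathrm{\textbf{C}}_\texttt{R}(\underline{\mathrm{A}})$ from Theorem\,\ref{decy}. First I would observe that by Theorem\,\ref{decy} we have $\mathrm{\textbf{C}}_\texttt{R}(\underline{\mathrm{A}})=\bigoplus_{t=0}^{s-1}\theta^{t}\mathrm{\textbf{C}}_\eta(\texttt{R}\,;\,\mathrm{A}_t)$, and each $\mathrm{A}_t$ is a union of $q$-cyclotomic cosets modulo $u\ell$; the hypothesis $\bigcup_{t=0}^{s-1}\mathrm{A}_t\,(\texttt{mod}\,u)=\{\omega\}$ says exactly that every coset appearing in any $\mathrm{A}_t$ lies in a single residue class mod $u$. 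Applying the previous theorem coset–by–coset, each $\mathrm{\textbf{C}}_\eta(\texttt{R}\,;\,\{z\})$ with $z\in\mathrm{A}_t$ lies in $\wp(\texttt{R}^\ell)$, hence so does the whole code, so $\mathcal{K}=\wp^{-1}(\mathrm{\textbf{C}}_\texttt{R}(\underline{\mathrm{A}}))$ is well defined; and since $\wp\circ\tau_\gamma=\tau_1\circ\wp$ (Lemma\,\ref{cocy}) with $\gamma=\xi^{-\frac{\omega(q^m-1)}{u}\,\texttt{mod}\,u}$, the code $\mathcal{K}$ is $\gamma$-constacyclic of length $\ell$. This gives (1).

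For (2) I would use the explicit block form of a codeword of $\mathrm{\textbf{C}}_\texttt{R}(\underline{\mathrm{A}})$ coming out of the previous proof: every $\underline{\textbf{c}}\in\mathrm{\textbf{C}}_\texttt{R}(\underline{\mathrm{A}})$ has the shape $(\gamma^{u-1}\underline{\textbf{x}}\,|\,\cdots\,|\,\gamma\underline{\textbf{x}}\,|\,\underline{\textbf{x}})$ with $\underline{\textbf{x}}=\wp^{-1}(\underline{\textbf{c}})\in\texttt{R}^\ell$. Since $\gamma\in\texttt{R}^\times$ is a unit, multiplication by a power of $\gamma$ preserves the support of $\underline{\textbf{x}}$ coordinatewise, so each of the $u$ blocks has exactly $\texttt{wt}(\underline{\textbf{x}})$ nonzero entries, and $\texttt{wt}(\underline{\textbf{c}})=u\cdot\texttt{wt}(\wp^{-1}(\underline{\textbf{c}}))$. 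This is essentially immediate once the block form is in hand.

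Step (3) is the part I expect to be the real obstacle, because one must track how the dual interacts both with the contraction map $\wp$ and with the mod-$u$ bookkeeping. The strategy is: apply the dual formula for cyclic codes, $\mathrm{\textbf{C}}_\texttt{R}(\underline{\mathrm{A}})^\perp=\mathrm{\textbf{C}}_\texttt{R}(\underline{\mathrm{A}}^{\widetilde{\diamond}})$ with $\underline{\mathrm{A}}^{\widetilde{\diamond}}=(-\mathrm{A}_s,-\mathrm{A}_{s-1},\cdots,-\mathrm{A}_0)$, but note this need \emph{not} satisfy the single–residue–class hypothesis, so one cannot directly apply (1) to it. Instead I would use the single-coset dual description $\mathcal{K}^\perp=\wp^{-1}(\mathrm{\textbf{C}}_\eta(\texttt{R}\,;\,\mathrm{A}^{\star u}))$ from the previous theorem together with the type/rank count from Proposition\,\ref{dual-type}: $\mathcal{K}^\perp$ is $\gamma^{-1}$-constacyclic of length $\ell$ with type $(\ell-k,k_{s-1},\cdots,k_1)$ where $(k_0,\cdots,k_{s-1})$ is the type of $\mathcal{K}$. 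One then shows that $\wp(\mathcal{K}^\perp)\subseteq \mathrm{\textbf{C}}_\texttt{R}(\underline{\mathrm{A}})^{\perp}$ by the same trace–sum orthogonality computation as in the proof of Proposition\,\ref{L-opera} (the inner product of a block codeword from $\mathcal{K}$ with one from $\mathcal{K}^\perp$ telescopes block-by-block to the length-$\ell$ inner product, which vanishes), and finally that $\wp(\mathcal{K}^\perp)=\mathrm{\textbf{C}}_\texttt{R}(\underline{\mathrm{A}}^{\star u})$ by checking that $\underline{\mathrm{A}}^{\star u}=(-\mathrm{A}_s^{\star u},-\mathrm{A}_{s-1},\cdots,-\mathrm{A}_1,-\mathrm{A}_0^{\triangleleft u})$ indeed lies in $\Re_{u\ell}(q,s)$ and has the cyclotomic–partition type matching the required type of $\mathcal{K}^\perp$; the subtle point is verifying $\mathrm{A}_0^{\triangleleft u}=\mathrm{A}_0\cup(\mathrm{A}_s\setminus\mathrm{A}_s^{\star u})$ is the correct "leftover" set so that the sizes $|\complement_q(\cdot)|$ add up, i.e. the residue-$(-\omega)$ part of $\mathrm{A}_s$ contributes to the top level and the complementary part of $\mathrm{A}_s$ gets absorbed into the bottom level. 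Once the type matches and the inclusion holds, the cardinality count from Proposition\,\ref{dual-type}(2) forces equality, completing the proof.
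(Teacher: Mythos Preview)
The paper states this corollary without proof, leaving it as an immediate consequence of the preceding theorem together with the decomposition $\mathrm{\textbf{C}}_\texttt{R}(\underline{\mathrm{A}})=\bigoplus_{t=0}^{s-1}\theta^{t}\mathrm{\textbf{C}}_\eta(\texttt{R}\,;\,\mathrm{A}_t)$ from Theorem~\ref{decy}; your proposal is exactly the argument one is meant to supply. For (1) and (2) there is nothing to add: reducing to the free pieces via Theorem~\ref{decy}, applying the preceding theorem coset by coset to land inside $\wp(\texttt{R}^\ell)$, and then invoking $\wp\circ\tau_\gamma=\tau_1\circ\wp$ gives (1); the weight identity (2) is immediate from the block shape $\wp(\underline{\textbf{x}})=(\gamma^{u-1}\underline{\textbf{x}}\,|\,\cdots\,|\,\underline{\textbf{x}})$ and the fact that $\gamma$ is a unit, just as you say.

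For (3) your outline follows the same inclusion-plus-type-count device the paper uses in Proposition~10 and in the ``Moreover'' clause of the preceding theorem, so the approach matches the paper's style. One point your sketch leaves implicit and that deserves an explicit line: before invoking $\wp^{-1}$ on $\mathrm{\textbf{C}}_\texttt{R}(\underline{\mathrm{A}}^{\star u})$ you must check that $\bigcup_{t=0}^{s-1}(\underline{\mathrm{A}}^{\star u})_t$ again lies in a single residue class modulo $u$ (so that the contraction is defined and yields a $\gamma^{-1}$-constacyclic code). This is precisely the ``leftover set'' bookkeeping you flag at the end; once that residue check is written down, the cardinality comparison via Proposition~\ref{dual-type} closes the argument.
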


\begin{Example} Let $\texttt{R}$ be a finite chain ring of invariants $(q,s)$  where $\texttt{J}(\texttt{R})=\texttt{R}\theta,$ $q=3$ and $s=2.$  We take
$\;\ell=10, u=2.$ We set
$\underline{\mathrm{A}}:=(\mathrm{A}_0,\mathrm{A}_1,\mathrm{A}_2),$
where $\mathrm{A}_0:=\complement_q(\{1\}),$
$\mathrm{A}_1:=\complement_q(\{5\}),$ and
$\mathrm{A}_2:=\complement_q(\{0,2,4,10,11\}).$ We have
$\complement_q(\mathrm{A}_0)\,(\,\texttt{mod}\;2)=\complement_q(\mathrm{A}_1)\,(\,\texttt{mod}\;2)=\{1\}.$
So the contraction of the cyclic $\texttt{R}$-linear code
$\textbf{C}_\texttt{R}(\underline{\mathrm{A}})$ of length $20,$ is
the self-dual negacyclic $\texttt{R}$-linear code
$\mathcal{K}:=\wp^{-1}\left(\mathrm{\textbf{C}}_\eta\left(\texttt{R}\,;\,\mathrm{A}_0\right)\right)\oplus\theta\wp^{-1}\left(\mathrm{\textbf{C}}_\eta\left(\texttt{R}\,;\,\mathrm{A}_1\right)\right),$
of length $10.$
\end{Example}

\section{Conclusion}

We have seen that in the case $\texttt{gcd}(\ell,|\texttt{R}|)=1,$
and $\gamma\in\Gamma(\texttt{R})^*,$ the class of
$\gamma$-constacyclic $\texttt{R}$-linear codes of length $\ell,$
is the same as the class of contractions of cyclic
$\texttt{R}$-linear codes
$\mathrm{\textbf{C}}_\texttt{R}(\mathrm{A}_0,\mathrm{A}_1,\cdots,\mathrm{A}_s)$
of length $u\ell,$ where $u$ is the multiplicative order of
$\gamma,$ and each cyclic $\texttt{R}$-linear code
$\mathrm{\textbf{C}}_\texttt{R}(\mathrm{A}_0,\mathrm{A}_1,\cdots,\mathrm{A}_s)$
of this class, satisfies:
$\overset{s-1}{\underset{t=0}{\bigcup}}\mathrm{A}_t\,(\,\texttt{mod}\,u)$
is a singleton.

\end{document}